\newtheorem{theorem}{Theorem}
\newtheorem{lemma}{Lemma}
\newtheorem{corollary}{Corollary}
\newtheorem{assumption}{Assumption}
\numberwithin{theorem}{section}
\numberwithin{lemma}{section}
\numberwithin{corollary}{section}
\numberwithin{proposition}{section}
\numberwithin{remark}{section}
\newcommand{\bs}[1]{\boldsymbol{#1}}
\newcommand{\spann}{{\rm span}}
\newcommand{\supp}{{\rm supp}}
\newcommand{\dist}{{\rm dist}}
\newcommand{\red}{\textcolor{black}}
\newcommand{\blue}{\textcolor{black}}
\title{{\Large {\bf Dynamical system induced by quantum walk }}
\author{
{\small Yusuke Higuchi\footnote{Email: higuchi@cas.showa-u.ac.jp }}\\
{\scriptsize Mathematics Laboratories, College of Arts and Sciences, Showa University, }\\
{\scriptsize Fujiyoshida, Yamanashi 403-0005, Japan.}\\
%{\small Mohamed Fuard Sabri},  
{\small Etsuo Segawa\footnote{Email: segawa-etsuo-tb@ynu.ac.jp}}\\%, 
%{\small Yusuke Yoshie}\\
{\scriptsize Graduate School of Education Center, Yokohama National University} \\
{\scriptsize Hodogaya, Yokohama 240-8501, Japan.}}
%{\scriptsize }\\
}
\date{\empty }
\begin{document}
\maketitle

\par\noindent
\begin{small}
\par\noindent
{\bf Abstract}. 
We consider the Grover walk model on a connected finite graph with two infinite length tails and 
we set an $\ell^\infty$-infinite external source from one of the tails as the initial state. 
We show that for any connected internal graph, a stationary state exists, moreover 
a perfect transmission to the opposite tail always occurs in the long time limit. 
We also show that 
the lower bound of the norm of the stationary measure restricted to the internal graph is proportion to the number of edges of this graph. 
Furthermore when we add more tails (e.g., $r$-tails) to the internal graph, then we find that from the temporal and spatial global view point, 
the scattering to each tail in the long time limit coincides with the local one-step scattering manner of the Grover walk at a vertex whose degree is $(r+1)$. 
%\footnote[0]{
%{\it Abbr. title:}Construction theorem of the positive support of $n$-th power of the Grover walk
%}
%\footnote[0]{
%{\it AMS 2000 subject classifications: }
%60F05, 05C50, 15A15, 05C60
%%}
%\footnote[0]{
%{\it PACS: } 
%03.67.Lx, 05.40.Fb, 02.50.Cw
%}
\footnote[0]{
{\it Keywords: } 
Quantum walk, bounded stationary state, scattering, dynamical system
}
\end{small}

\section{Introduction}
%ランダムウォークでは定常分布として次数に比例したものが得られる。
%ところが有限グラフだとユニタリ性から自然な初期状態からの定常性は期待できない。
It is well known that 
for the discrete-time isotropic random walk on a connected finite graph, the probability distribution %with any initial distribution 
converges to the stationary distribution which is proportion to the $(+1)$ (; that is, the maximal) -eigenvalue's eigenvector of the transition matrix. 
%where the stationary state at each vertex is proportion to each degree. 
On the other hand, for a quantum walk whose time evolution is a unitary operator, 
its stationarity with a natural initial state; e.g., starting from an arbitrary arc, cannot be expected to be described in such a way in general 
because the spectrum of the time evolution operator is distributed on the unit circle in the complex plane, 
which means every eigenvalue $\lambda$ satisfies $|\lambda|=1$. 
\blue{
There exist many kinds of comparison studies between classical and quantum settings. 
For example, the hitting times on some family of graphs by classical algorithms based on a random walk and quantum algorithms based on 
a so-called continuous-time quantum walk are discussed in~\cite{FaGoGu,FaGu}. 
%A comparison between a classical and a quantum hitting time in so called a continuous-time is discussed 
%in ~\cite{FaGu}. 
Of course, we can find other works on classical and quantum algorithms and hitting time in the discrete-time, 
for instance, in~\cite{Portugalbook} and its references therein.
%there exist many works on classical and quantum algorithms and hitting time in the discrete-time; 
%we can find some of them, for instance, in~\cite{Portugalbook}. 
}

%そこで無限グラフを扱う今野グループによるQWの定常測度の一連の研究について述べる
%しかしこれらの結果は全てstaticでdynamicsのあるQWの定常測度のより良い解釈をえられにくい
Turning our eyes towards ``stationary state" of infinite graphs, especially for the (infinite) one-dimensional lattice $\mathbb{Z}$, 
we can see a series of works finding a stationary measure $\mu_*$ of quantum walks in the following meaning: 
let $\psi_n$ taking complex value at each arc of the one-dimensional lattice be the $n$-th iteration of a quantum walk on one-dimensional lattice
whose initial state $\psi_0$ no longer needs to square summable.  
Then we call $\mu_*: \mathbb{Z}\to [0,\infty)$ a stationary measure if and only if 
$\sum_{t(a)=j}|\psi_n(a)|^2=\mu_*(j)$ for any natural number $n$. 
\red{Note that a stationary state is derived from a generalized eigenfunction immediately, but this is not everything  
%the stationary measure derives from not only a generalized eigenfunction, but also the absolute value's measurement of the quantum state 
as is suggested by an interesting example in~\cite{KonnoTakei}. 
}
Thus finding such stationary measure $\mu_*$ is a non-trivial problem and related to 
\red{trying to understand the behaviour of quantum walks as a kind of {\it process\/}}. 
\blue{Towards addressing this challenging problem, a classification of the generalized eigenfunctions' shape of quantum walks on $\mathbb{Z}$ 
are gradually revealed with respect to the absolute values, e.g., uniformity~\cite{Konno}, the support 
finiteness~\cite{KomatsuKonno}, polynomially increasing~\cite{KonnoTakei} and exponentially decreasing~\cite{KonnoLuczakSegawa}. 
Around them, the transfer matrix, which is a standard method in the spectral analysis on the CMV matrix, e.g., \cite{GZ},
is rediscovered in the context of quantum walks~\cite{KawaiKomatsuKonno} and applied to obtain some generalized eigenfunctions efficiently. 
These results depending on just the some ``static" class of eigenfunctions are quite interesting, but does not 
reflect the behaviour of the unitary dynamics which the quantum walk originally reveals.  
Thus it is natural to ask for a meaning of these stationary measures and states from the view point of dynamics which 
quantum walks originally have; some results can be seen in \cite{FelHil1,FelHil2,MMOS}.}

%QWのscatteringの観点からHKSSとMOSの結果について述べる
%To see that let us give a short review on \cite{MMOS}. 
%In \cite{MMOS}, a simple quantum walk model on the one-dimensional lattice is introduced to connect a quantum dynamics of double well potentials,
%which provides a quantum resonance effect. 
\blue{Let us give a short review on~\cite{MMOS}, where a simple quantum walk
 model on $\mathbb{Z}$ is introduced and its dynamics is discussed. In result, 
such a quantum walk is succeeded in connecting  
the quantum dynamics of double well 
potentials provider's a quantum resonance effect.}
In the setting of this  model, all quantum coins assigned to all the vertices except the two vertices $\{0,m\}$ produce a free-walk; that is, no reflection,  
while the quantum coins assigned at the exceptional two vertices produce a non-trivial scattering, 
and the initial state is externally injected at each time step from the negative side on the one-dimensional lattice. 
Then it is obtained that there is a stationary state as the limit of the time iterations from this initial state 
and this stationary state coincides with one of the bounded generalized eigenfunction of the 
whole system of unitary time evolution operator obtained in \cite{KawaiKomatsuKonno}. 
Moreover it is shown that this model includes the quantum walk model whose stationary state coincides 
with the solution of the stationary Schr{\"o}dinger equation on the metric one-dimensional lattice with the double well delta potential~\cite{Albe}. % e.g.,\cite{Alb} 
\red{Here the notion of a `quantum graph walk' introduced by~\cite{HKSS} plays an important role.} 

%%%%%%%%%%%%%%%%%%%%%%%%%%%%%%%%%%%%%%
%修正が必要なところ!!!!!!!!!!!!!!!!!!!
%%%%%%%%%%%%%%%%%%%%%%%%%%%%%%%%%%%%%%
\blue{In this paper, we tend to extend the model from that in~\cite{MMOS}: 
(1) we generalize the connected finite graph, the internal graph; 
(2) we increase the number of tails, that is, the number of directions for 
observing the behaviour of scattering on the internal; 
(3) we observe the distribution of penetration into the internal, that is, 
a kind of conditional probability on the internal. 
In the next section, %\ref{Sec:demonstration}
using a simple example, 
the internal graph is a $3$-cycle with two tails, 
we demonstrate and illustrate what we intend. 
We first expect that the stationary state of a quantum walk can be 
obtained by a balance between the inflow and the outflow in two-tail model.
In result, we obtain the existence of stationarity \cite{FelHil1,FelHil2} 
by a notion of dynamical system~\cite{R}. Moreover, 
using some by-product of this proof 
and the eigenspace of Grover walk induced by homological structure of graph,  
we characterized some detailed properties of the stationary state 
and the penetration state. }

%%%%%%%%%%%%%%%%%%%%%%%%%%%%%%%%%%%%%%%
%%%%%%%%%%%%%%%%%%%%%%%%%%%%%%%%%%%%%%%

%Graph notations
\blue{To explain our setting and results more precisely, 
let us first give some notations and definitions of 
Grover walk on a graph.}
%To explain more detailed setting and its results, let us prepare some graph notations and the definition of the Grover walk.
Let $G=(V,E)$ be a connected graph. 
Each edge $e\in E$ produces the two kinds of arcs $a$ and $\bar{a}$, where $\bar{a}$ is called the inverse arc of $a$. 
The set of arcs is denoted by $A$. Remark that $a\in A$ if and only if $\bar{a}\in A$\red{; we sometimes} call $A$ a symmetric arc set. 
The total Hilbert space of the Grover walk is generated by $A$; that is, the set of all the functions taking a complex value at each arc. 
The time evolution is \red{given} by the following unitary operator $U$: 
	\begin{equation}\label{eq:Grover_time_evolution} 
        (U\psi)(a)= -\psi(\bar{a})+\sum_{b:t(b)=o(a)}\frac{2}{\mathrm{deg}(o(a))}\psi(b). 
        \end{equation}
Here 
\red{the origin and terminal vertices of $a\in A$ are denoted by $o(a)$ and $t(a)$, respectively. 
Thus it holds that $o(\bar{a})=t(a)$, $t(\bar{a})=o(a)$. 
For example, if $u$ and $v$ is connected by the single edge, the arc $a$ from $u$ and to $v$ is denoted by $(u,v)$. 
Then $o(a)=u$ and $t(a)=v$ hold. 
Moreover} $\mathrm{deg}(u)$ is the degree of the vertex $u$; that is, $\mathrm{deg}(u)=|\{a\in A \;|\; t(a)=u\}|=|\{a\in A \;|\; o(a)=u\}|$. 
Let $\psi_n$ be the $n$-th iteration of the Grover walk, that is, $\psi_{n+1}=U\psi_n$ with some initial state. 
For arbitrary $u\in V$, if we put $a_j\in A$ such that $t(a_j)=u$ for $j=1,\dots,\mathrm{deg}(u)$, 
then the one-step local Grover walk's scattering at the vertex $u$ is described by 
	\[ \bs{\omega}_{out}^{(n+1)}(u)=\mathrm{Gr}(\mathrm{deg}(u)) \bs{\omega}_{in}^{(n)}(u). \]
Here for $m\in \mathbb{N}$, 
	\begin{align*} 
        \bs{\omega}_{in}^{(m)}(u) &= {}^T[\psi_m(a_1),\dots,\psi_m(a_{\mathrm{deg}(u)})]; \\
        \bs{\omega}_{out}^{(m)}(u)&={}^T[\psi_m(\bar{a}_1),\dots,\psi_m(\bar{a}_{\mathrm{deg}(u)})], 
        \end{align*}
and $\mathrm{Gr}(d)$ is the $d$-dimensional Grover matrix, that is, $(\mathrm{Gr}(d))_{ij}=2/d-\delta_{ij}$.

Let $G_0=(V_0,E_0)$ be the internal finite graph and the induced symmetric arc set be denoted by $A_0$. 
The degree of $u\in V_0$ in $G_0$, $\mathrm{deg}_{G_0}(u)$, is denoted by $d(u)$. 
%Throughout this paper, we assume there are no leaves in $V_0$, that is, there are no vertices in $V_0$ whose degrees are $1$.
We arbitrarily choose two vertices $u_{+}$ and $u_{-}$ from the vertex set $V_0$. 
We join two additional semi-infinite length paths to the input and output vertices $u_{+}$ and $u_{-}$, respectively. 
We denote the input and output tails  joined to $u_{+}$ by $\mathbb{P}_+$ and joined to $u_{-}$ by $\mathbb{P}_-$, respectively. 
The graph adding the two infinite length tails to $G_0$ is denoted by $\tilde{G}=(\tilde{V},\tilde{E})$ with the symmetric arc set $\tilde{A}$ 
and the degree of $u\in \tilde{V}$ in $\tilde{G}$, \red{$\mathrm{deg}_{\tilde{G}}(u)$,} is denoted by $\tilde{d}(u)$. 

We inject the $+1$ external source to the input vertex $\red{u_{+}}$ at every time step, in other words, 
we set the initial state by 
	\[
        \psi_{in}(a)=
        \begin{cases}
        1 & \text{: $t(a)\in V(\mathbb{P}_+)$,\;$\mathrm{dist}(t(a),u_+)<\mathrm{dist}(o(a),u_+)$} \\
        0 & \text{: otherwise.}
        \end{cases}
        \] 
Iterating the Grover time evolutions on $\tilde{G}$, we can state that
a new quantum walker continuously comes from $u_+$ at every time step, while 
once a quantum walker goes outside of the \red{internal} graph $G_0$, then 
she never goes back to \red{$G_0$} since the Grover walk dynamics on the two-tail is \red{the free quantum walk} by (\ref{eq:Grover_time_evolution}). 
Under such a situation, we \red{shall} take the Grover walk's time iterations many times. 
Now the following natural questions may arise: 
\begin{enumerate}
\item Does the stationary state exist ? 
\item If the stationary state $\psi_\infty$ exists, what is the shape of $\psi_\infty$, especially, 
the transmission rate $t_*$ and the reflection rate $r_*$ as the outflow from $u_-$ and $u_+$, respectively in the long time limit; 
that is, for $b\in A(\mathbb{P}_-)$ with $o(b)=u_-$, and $a\in A(\mathbb{P}_+)$ with $o(a)=u_+$, what are the following values
	\[t_*:=|\psi_{\infty}(b)|^2,\;r_*:=|\psi_{\infty}(a)|^2 ? \]
and also what kinds of graph structure \red{are reflected the shape of $\psi_\infty$} ? 
\end{enumerate}
\red{The answer for (1) has been obtained by \cite{FelHil1,FelHil2} (2005, 2007) as follows. }
\begin{theorem}(\label{thm:existence}\cite{FelHil2})
For any connected finite graphs $G_0=(V_0,E_0)$ and for any connected manner of two infinite tails, 
the stationary state uniquely exists; that is, 
	\[ \lim_{n\to\infty}U^n\psi_{in}={}^\exists \psi_\infty. \]
%in $\ell^\infty$. 
\end{theorem}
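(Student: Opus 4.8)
The plan is to convert the infinite iteration into a finite-dimensional affine recursion on the internal arcs, and to make the ``dynamical system'' ingredient concrete as the convergence of that recursion. Write $\psi_n=U^n\psi_{in}$ and let $\bs\xi_n\in\ell^2(A_0)$ (functions supported on the internal arcs) be the restriction of $\psi_n$ to $A_0$. Since the Grover coin at a degree-$2$ vertex is the flip, the walk on each tail is the free walk, and a one-line bookkeeping shows that the amplitude on a tail arc at time $n$ equals the amplitude on the corresponding arc incident to $u_\pm$ at time $n$ minus its distance to $u_\pm$; in particular $\psi_n$ is identically $1$ on every $\mathbb{P}_+$-arc pointing toward $u_+$ and identically $0$ on every $\mathbb{P}_-$-arc pointing toward $u_-$, for all $n$ (here the infiniteness of $\mathbb{P}_+$ is used, so that the unit source never runs out). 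Substituting this into~\eqref{eq:Grover_time_evolution} restricted to $A_0$ yields
\[
\bs\xi_{n+1}=M\bs\xi_n+\bs\beta,\qquad \bs\xi_0=\bs 0,
\]
where $M=P_0UP_0$ is the compression of $U$ to $\ell^2(A_0)$ ($P_0$ the orthogonal projection) and $\bs\beta$ is the fixed vector encoding the unit inflow at $u_+$, namely $\bs\beta(a)=2/\tilde d(u_+)$ when $a\in A_0$ and $o(a)=u_+$, and $\bs\beta(a)=0$ otherwise. Since $\bs\xi_0=\bs 0$ we get $\bs\xi_n=\sum_{k=0}^{n-1}M^k\bs\beta$, so the whole statement reduces to the convergence of this Neumann-type series.

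\emph{The obstacle.} The map $M$ is a contraction ($\|M\|\le1$) but in general not a \emph{strict} one: whenever $G_0$ carries a cycle avoiding $u_+$ and $u_-$, the corresponding homological Grover eigenfunction is supported on $A_0$ and is fixed by $U$, so $1\in\mathrm{spec}(M)$ and $\sum_k M^k$ diverges in that direction. The cure rests on two points. First, any unimodular eigenvector $\bs v$ of $M$ is in fact a genuine $\ell^2$-eigenfunction of $U$ supported on $A_0$: from $M\bs v=\lambda\bs v$ with $|\lambda|=1$ one gets $\|\bs v\|=\|P_0U\bs v\|\le\|U\bs v\|=\|\bs v\|$, which forces $U\bs v=\lambda\bs v$; the same inequality rules out a Jordan block at a unimodular eigenvalue (if $M\bs w=\lambda\bs w+\bs v$ with $\bs w\perp\bs v$ then $\|\bs w\|^2\ge\|M\bs w\|^2=\|\bs w\|^2+\|\bs v\|^2$, so $\bs v=\bs 0$). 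Hence the span $\mathcal L$ of the unimodular eigenvectors of $M$ reduces $M$ (one checks $M^*\bs v=\bar\lambda\bs v$ on $\mathcal L$), and on $\mathcal K:=\mathcal L^{\perp}\cap\ell^2(A_0)$ one has $\mathrm{spr}(M|_{\mathcal K})<1$. Second, the source is orthogonal to $\mathcal L$: for a unimodular $U$-eigenfunction $\bs v$ supported on $A_0$, reading~\eqref{eq:Grover_time_evolution} on the tail arc leaving $u_+$ gives $\sum_{b\in A_0,\,t(b)=u_+}\bs v(b)=0$, and re-summing~\eqref{eq:Grover_time_evolution} over the arcs issuing from $u_+$ then forces $\sum_{a\in A_0,\,o(a)=u_+}\bs v(a)=0$; therefore $\langle\bs\beta,\bs v\rangle=0$, i.e.\ $\bs\beta\in\mathcal K$.

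\emph{Conclusion.} Since $\bs\beta\in\mathcal K$ and $\mathrm{spr}(M|_{\mathcal K})<1$, the series converges: $\bs\xi_n\to\bs\xi_\infty=(I-M|_{\mathcal K})^{-1}\bs\beta$. Propagating this back to the tails via the bookkeeping of the first step (each tail amplitude at time $n$ is, through the amplitudes on the arcs incident to $u_\pm$, an explicit affine function of some $\bs\xi_m$ with $m\le n$, hence convergent) and assembling the pieces defines $\psi_\infty$: it equals $\psi_{in}$ on the incoming $\mathbb{P}_+$-arcs, $0$ on the incoming $\mathbb{P}_-$-arcs, $\bs\xi_\infty$ on $A_0$, and a constant --- the reflection, resp.\ transmission, amplitude $\tfrac{2}{\tilde d(u_\pm)}\sum_{t(b)=u_\pm}\bs\xi_\infty(b)$ --- on the outgoing arcs of $\mathbb{P}_+$, resp.\ $\mathbb{P}_-$. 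A direct substitution then gives $U\psi_\infty=\psi_\infty$, and uniqueness is automatic since $\psi_\infty$ is produced as the limit $\lim_n U^n\psi_{in}$.

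The step I expect to be the real work is the orthogonality $\langle\bs\beta,\bs v\rangle=0$ of the driving term to every homological mode --- the ``no net inflow at $u_+$'' identity satisfied by unimodular eigenfunctions --- since this is exactly what neutralizes the non-decaying directions of $M$ and turns what looks like a resonantly driven oscillator into a convergent recursion; the remaining ingredients (the tail bookkeeping, the compression estimate $\|M\|\le1$, and the spectral-radius argument for $M|_{\mathcal K}$) are routine.
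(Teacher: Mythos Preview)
Your proof is correct and follows essentially the same route as the paper: compress $U$ to $\ell^2(A_0)$, show the compression has no unstable eigenvalues and no Jordan blocks at unimodular ones, check that the span of unimodular eigenvectors reduces the compression, and verify that the source lies in the stable orthogonal complement so that the Neumann series converges. The one notable variation is in the orthogonality step $\langle\bs\beta,\bs v\rangle=0$: you derive it from a Grover-specific Kirchhoff identity at $u_+$, whereas the paper uses the cleaner and more general one-liner $\langle\rho,\varphi_\lambda\rangle=\langle\chi U\Psi_0,\varphi_\lambda\rangle=\lambda^{-1}\langle\Psi_0,\chi^*\varphi_\lambda\rangle=0$ by disjointness of supports, which works for any unitary $U$ with free-walk tails (and is what allows the paper to state the result as the broader Theorem~\ref{thm:existance2}).
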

Remark that $\psi_{in}$ and $\psi_{\infty}$ belongs to $\ell^\infty$ category.
The existence of the stationary state under more general setting, \red{Assumption 1 (cf.~\cite{FelHil2}), can be seen in Theorem~\ref{thm:existance2} in Section~3. }
%The general setting is Assumption~\ref{assumption} described before Theorem~\ref{thm:existance2}, which coincides with that of \cite{FelHil2}.
Although there may be similar statements to \cite{FelHil2}, 
we emphasize that we give an understanding of the result on \cite{FelHil2} from the view point of the {\it Jordan decomposition} of 
a dynamical mapping~\cite{R} since this map loses the normality due to the cut-off with respect to the internal graph 
of the unitary operator on the whole system. 

By using this existence of the stationary state, we can further proceed the analysis of the stationary state 
to find specialities of the Grover walk in the following theorems. 
These are our main results. 
\begin{theorem}\label{thm:perfect_transmittion}
Let us consider the Grover walk model on a finite \red{internal} graph with two tails and keep inserting inflow from one tail with the amplitude $1$. 
Then for any connected graphs $G_0=(V_0,E_0)$ and for any connected manner of two infinite tails,
the perfect transmission always happens; that is, $t_*=1$ and $r_*=0$. 
Moreover the stationary state $\psi_\infty$ satisfies the following properties 
\begin{align}
\sum_{a\in A(\tilde{G}):t(a)=u} \psi_\infty(a) &=\sum_{a\in A(\tilde{G}):o(a)=u} \psi_\infty(a)=\tilde{d}(u)/2\;\;\;\;({}^\forall u\in V(\tilde{G})); \label{eq:stationary_vertex}\\
\psi_\infty(a)+\psi_\infty(\bar{a}) &=1\;\;\;\;({}^\forall a\in A(\tilde{G})). \label{eq:stationary_edge}
\end{align}
\end{theorem}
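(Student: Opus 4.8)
The plan is to use Theorem~\ref{thm:existence} to secure the existence of $\psi_\infty$ and then extract its structural properties from the two elementary conservation laws that the Grover dynamics enforces in the limit. First I would record the two ``local'' constraints that $\psi_\infty$ must satisfy. Since $\psi_{in}$ is stationary in the sense that the same unit amplitude is re-injected at every step, the limit state obeys $U\psi_\infty=\psi_\infty+(\text{the re-injected inflow})$; more usefully, writing $\psi_n=U^n\psi_{in}$ and $\psi_\infty=\lim_n\psi_n$, one checks that on every arc of $\tilde G$ (with due care at $u_+$, where the source enters) the equation $(U\psi_\infty)(a)=\psi_\infty(a)$ holds except for the single ``incoming'' arc at $u_+$ carrying the source. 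Expanding $(U\psi)(a)=-\psi(\bar a)+\frac{2}{\tilde d(o(a))}\sum_{t(b)=o(a)}\psi(b)$ and summing over the $\tilde d(u)$ arcs $a$ with $o(a)=u$ at a fixed vertex $u$ gives $\sum_{o(a)=u}\psi_\infty(a)=-\sum_{t(b)=u}\psi_\infty(b)+2\sum_{t(b)=u}\psi_\infty(b)=\sum_{t(b)=u}\psi_\infty(b)$, which is the vertex identity in~\eqref{eq:stationary_vertex} up to the common value; I would then pin the common value to $\tilde d(u)/2$ by evaluating it on the tails, where the stationary state is constant (equal to the transmission/reflection amplitudes by the free-walk structure), and propagating this constant through the vertex relation. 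The edge identity~\eqref{eq:stationary_edge} should follow by combining the two equalities in~\eqref{eq:stationary_vertex} at $o(a)$ and $t(a)$ with the explicit form of $U$: from $(U\psi_\infty)(a)=\psi_\infty(a)$ one gets $\psi_\infty(a)+\psi_\infty(\bar a)=\frac{2}{\tilde d(o(a))}\sum_{t(b)=o(a)}\psi_\infty(b)$, and the right-hand side equals $1$ once the vertex sum is known to be $\tilde d(o(a))/2$.

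Next I would establish the perfect transmission $t_*=1$, $r_*=0$. On each tail the Grover walk is the free shift, so $\psi_\infty$ restricted to $\mathbb{P}_-$ is a constant $\tau$ on outgoing arcs and $0$ on incoming arcs (nothing comes back), and on $\mathbb{P}_+$ it is $1$ on the incoming (source) arcs and a constant $\rho$ on the outgoing (reflected) arcs; thus $t_*=|\tau|^2$ and $r_*=|\rho|^2$. The key is a flux-conservation (unitarity) argument: because $U$ is unitary and the only place amplitude enters or leaves the system at the level of a single step is through the tails, the ``current'' $J(a):=|\psi_\infty(a)|^2-|\psi_\infty(\bar a)|^2$ should satisfy a divergence-free condition, i.e. $\sum_{o(a)=u}J(a)=0$ for every $u\in\tilde V$ — this is exactly the standard quantum-walk current argument, and it forces the net inflow at $u_+$ to equal the net outflow at $u_-$: $1-r_*=t_*$. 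To get the stronger conclusion $r_*=0$ I would use the identities already proved: from~\eqref{eq:stationary_edge} on a tail arc near $u_+$ we get $1+\rho=1$, whence $\rho=0$ and $r_*=0$, and then $t_*=1$ follows from $1-r_*=t_*$ (and consistently from~\eqref{eq:stationary_edge} near $u_-$, which gives $\tau+0=1$).

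The step I expect to be the main obstacle is the rigorous justification that $\psi_\infty$ really does satisfy the pointwise ``stationary'' equation $(U\psi_\infty)(a)=\psi_\infty(a)$ away from the source arc, together with the matching behaviour on the infinite tails. The subtlety is that convergence $U^n\psi_{in}\to\psi_\infty$ is only in a weak (pointwise / $\ell^\infty$) sense, not in norm, so interchanging the limit with the (nonlocal but finite-range) action of $U$ needs an argument — e.g. using that $U$ acts locally so that $(U\psi_n)(a)$ depends on only finitely many coordinates of $\psi_n$, hence converges to $(U\psi_\infty)(a)$. One must also rule out a nonzero constant ``standing wave'' leaking to infinity along the tails in a way inconsistent with $\psi_n$ being compactly supported for each finite $n$; here the fact that the walker, once on a tail, moves ballistically outward means the constant on $\mathbb{P}_-$ is genuinely the transmission amplitude and the limit is attained locally. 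Once these continuity/boundary issues are handled, the computations in the first paragraph are short, and~\eqref{eq:stationary_vertex},~\eqref{eq:stationary_edge} and the perfect-transmission statement all drop out together.
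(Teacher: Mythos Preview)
Your argument is circular at the pinning step, and this is not a minor technicality but the actual content of the theorem. What you correctly extract from $U\psi_\infty=\psi_\infty$ is that $\psi_\infty(a)+\psi_\infty(\bar a)=\frac{2}{\tilde d(o(a))}\sum_{t(b)=o(a)}\psi_\infty(b)$, and since the left side is symmetric in $a,\bar a$ this forces $S(u):=\sum_{t(b)=u}\psi_\infty(b)$ to satisfy $S(u)/\tilde d(u)=c$ for a single constant $c$ throughout the connected graph. But evaluating on the tails only gives $1+\rho=2c$ on $\mathbb P_+$ and $\tau=2c$ on $\mathbb P_-$; it does \emph{not} produce $c=1/2$. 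Your edge identity therefore reads $\psi_\infty(a)+\psi_\infty(\bar a)=2c$, not $=1$, and your final step ``$1+\rho=1$ from~\eqref{eq:stationary_edge}'' is just restating $1+\rho=2c$, which you have not yet resolved. Your flux conservation $|\tau|^2+|\rho|^2=1$ then gives $(2c)^2+(2c-1)^2=1$, i.e.\ $c\in\{0,1/2\}$: the case $c=0$ (total reflection $\rho=-1$, $\tau=0$, and $\psi_\infty(a)+\psi_\infty(\bar a)=0$ everywhere) is fully consistent with every identity you have derived, so your scheme cannot rule it out.

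The paper reaches the same dichotomy $c\in\{0,1/2\}$ by a different route (a Chebyshev-type recursion for $\Xi_n=KE_{PON}^nSK^*$ reducing to $(T-D)(\tfrac12 M^{1/2}\kappa_\infty+\tfrac{1}{\sqrt{\tilde d_+}}\delta_{u_+})=0$, then Perron--Frobenius to identify $\ker(T-D)$), and then eliminates $c=0$ by a separate combinatorial contradiction: if $c=0$ then $\psi_\infty(a)+\psi_\infty(\bar a)=0$ and $\sum_{t(b)=u}\psi_\infty(b)=0$ in $\tilde G$, and summing $\sum_{a\in A_0}\psi_\infty(a)$ once by origins and once by termini, using the boundary inflow of $1$ at $u_+$, yields $1=-1$. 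Your more elementary route to the dichotomy is a genuine simplification over the paper's recursion machinery, but you still need this (or an equivalent) extra argument to exclude $c=0$; the current-conservation identity $|\tau|^2+|\rho|^2=1$ alone is not enough.
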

The first equation (\ref{eq:stationary_vertex}) implies that the average with respect to all the arcs whose terminus (origin) are $u$, is $1/2$ 
while the second equation (\ref{eq:stationary_edge}) implies that the average with respect to two arcs, 
whose support edges are the same, is also $1/2$. 
Although the notion of \red{so-called} a perfect state transfer of discrete-time quantum walks e.g.,~\cite{Stefanak} is different from our perfect transmission $t_*=1$,
finding its connections is one of the interesting future's problems.  
\red{Moreover let us see a relation to \cite{FaGu} 
considering a \red{so-called} continuous-time quantum walk on binary trees as the internal graph $G_0$: 
if the energy of the incident wave is $E=0$, then the perfect transmitting occurs in this model. 
The Grover walk, which we treat in this paper, corresponds to the potential free case of the quantum graph~\cite{HKSS} 
which is the stationary Schr{\"o}dinger equation on a metric graph. 
Since the constant value $1$ is inputted into the internal graph at every step in our case, the energy corresponds to $0$. 
Therefore the perfect transmitting of the Grover walk with the tail number $2$ seems to be consistent with \cite{FaGu}. }

By applying the Cauchy-Schwartz inequality to (\ref{eq:stationary_vertex}) or (\ref{eq:stationary_edge}), we obtain the following corollary 
\red{which implies a penetration into the internal graph always occurs in the large time behavior.} 
\begin{corollary}
The total mass in the internal graph $G_0$ has the following lower boundary: 
\[ \sum_{a\in A(G_0):t(a)\in V_0} |\psi_\infty(a)|^2 \geq  \frac{|E_0|}{2}. \]
\end{corollary}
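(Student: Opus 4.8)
The plan is to get the bound edge by edge directly from the relation (\ref{eq:stationary_edge}) of Theorem~\ref{thm:perfect_transmittion}, with Cauchy--Schwarz entering only as the elementary two-term inequality. Since $A(G_0)$ is the symmetric arc set of $G_0$, every $a\in A(G_0)$ has both endpoints in $V_0$, and the two arcs $\{a,\bar a\}$ supported on a given edge $e\in E_0$ run through all of $A(G_0)$ as $e$ runs through $E_0$. So for each $e\in E_0$ choose one arc $a_e$ with support $e$; then $e\mapsto\{a_e,\bar a_e\}$ is a bijection from $E_0$ onto a partition of $\{a\in A(G_0):t(a)\in V_0\}$ into blocks of size two, and by (\ref{eq:stationary_edge}) we have $\psi_\infty(a_e)+\psi_\infty(\bar a_e)=1$ for every $e$.

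First I would record the elementary estimate: for any $z,w\in\mathbb{C}$, $|z|^2+|w|^2\ge \tfrac12|z+w|^2$, which follows from $|z+w|^2\le(|z|+|w|)^2\le 2(|z|^2+|w|^2)$ (this is the Cauchy--Schwarz/QM--AM step for two numbers). Applying it with $z=\psi_\infty(a_e)$ and $w=\psi_\infty(\bar a_e)$, and using $\psi_\infty(a_e)+\psi_\infty(\bar a_e)=1$, gives $|\psi_\infty(a_e)|^2+|\psi_\infty(\bar a_e)|^2\ge \tfrac12$ for every $e\in E_0$.

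Then I would sum this inequality over $e\in E_0$. By the bijection noted above, the left-hand side of the summed inequality is exactly $\sum_{a\in A(G_0):t(a)\in V_0}|\psi_\infty(a)|^2$, while the right-hand side is $|E_0|/2$, which is the claimed bound.

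There is essentially no obstacle here; the only points requiring a little care are that the inequality used must be the complex-valued one (magnitudes, not real parts), and the arc-bookkeeping, i.e.\ that $\{a\in A(G_0):t(a)\in V_0\}$ really is the disjoint union over $e\in E_0$ of $\{a_e,\bar a_e\}$. As an alternative one could instead start from the vertex relation (\ref{eq:stationary_vertex}) and apply Cauchy--Schwarz in the form $\bigl|\sum_{t(a)=u}\psi_\infty(a)\bigr|^2\le \tilde d(u)\sum_{t(a)=u}|\psi_\infty(a)|^2$, but then one has to reconcile $\tilde d(u)$ with the internal degree $d(u)$ and separate off the tail arcs incident to $u_+$ and $u_-$, so the edge route is the cleaner one.
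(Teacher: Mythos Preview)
Your proof is correct and is exactly the argument the paper has in mind: the text simply says ``by applying the Cauchy--Schwarz inequality to (\ref{eq:stationary_vertex}) or (\ref{eq:stationary_edge})'' without further detail, and your edgewise use of $|z|^2+|w|^2\ge\tfrac12|z+w|^2$ together with $\psi_\infty(a_e)+\psi_\infty(\bar a_e)=1$ is precisely the intended derivation from (\ref{eq:stationary_edge}). Your remark that the vertex route via (\ref{eq:stationary_vertex}) is messier because of the mismatch between $\tilde d(u)$ and $d(u)$ at the boundary is also accurate.
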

We also consider a natural extension of the number of tails from $2$ to $r\geq 2$. 
Then we obtain the following interesting result which shows that a temporal and spatial global view point of 
our Grover walk model can be interpreted as the one-step local Grover walk's scattering on a vertex of degree $r$. 
More precisely, we obtain the following theorem. 
\begin{theorem}\label{thm:scattering}
Consider an infinite graph $\tilde{G}$ constructed of a finite \red{internal} graph $G_0=(V_0,A_0)$ and $r$-tails $\mathbb{P}_1,\dots,\mathbb{P}_r:$ 
	\[ \tilde{G}=G_0 \cup \bigcup_{j=1}^{r}\mathbb{P}_j \mathrm{\;with\;} V(\mathbb{P}_j)\cap V_0 =\{u_j\} \mathrm{\;for\;} j=1,\dots,r. \]
Let $e_j\in A(\mathbb{P}_j)$ such that $t(e_j)=u_j$ for $j=1,\dots,r$. 
Assume that the amplitude of the inflow on $e_j$ is $\alpha_j\in \mathbb{C}$. 
Then there exists a stationary state $\psi_\infty$ and we have 
	\begin{align}
	\sum_{a\in A(\tilde{G}): t(a)=u}\psi_\infty(a) &= \sum_{a\in A(\tilde{G}): o(a)=u}\psi_\infty(a)
        	=\mathrm{ave}(\alpha_1,\dots,\alpha_r)\; \tilde{d}(u) \;\; (\forall u\in \tilde{V}); \label{eq:shinbashi}\\
        \psi_\infty(a)+\psi_\infty(\bar{a}) 
        	&= 2\mathrm{ave}(\alpha_1,\dots,\alpha_r) \;\; (\forall a\in \tilde{A}), \label{eq:shinbashi2}
	\end{align}
where $\mathrm{ave}(\alpha_1,\dots,\alpha_r)$ is the average of $\alpha_1,\dots,\alpha_r$. 
Moreover 
	\begin{equation}\label{eq:GroverScat} \Omega_{out}(G_0)=\mathrm{Gr}(r)\Omega_{in}(G_0) \end{equation}
holds, where 
	\begin{align*} 
        \Omega_{in}(G_0) &= {}^T[\alpha_1,\dots,\alpha_r]={}^T[\psi_\infty(e_1),\dots,\psi_\infty(e_r)]; \\
        \Omega_{out}(G_0) &= {}^T[\psi_\infty(\bar{e}_1),\dots,\psi_\infty(\bar{e}_r)]. 
        \end{align*} 
\end{theorem}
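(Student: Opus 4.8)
The strategy is to derive everything from the single fact that the stationary state is a $(+1)$-generalized eigenfunction of $U$, and then to read off the averaging constant from a discrete flux-balance argument on the infinite graph. First I would obtain the existence of $\psi_\infty=\lim_{n\to\infty}U^n\psi_{in}$ from Theorem~\ref{thm:existance2} (equivalently, by linearity of $U$, from the single-inflow cases $\alpha_j=\delta_{jk}$ superposed over $k$). Since $\psi_{n+1}=U\psi_n$, the convergence $\psi_n\to\psi_\infty$ is at least pointwise on $\tilde A$, and each coordinate of $U\psi$ in \eqref{eq:Grover_time_evolution} depends on only finitely many coordinates of $\psi$; hence $U$ commutes with the pointwise limit and $U\psi_\infty=\psi_\infty$. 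Moreover, on each tail $\mathbb{P}_j$ the dynamics is the free walk, which merely translates arc values toward and away from $u_j$; consequently $\psi_\infty$ keeps the value $\alpha_j$ on every arc of $\mathbb{P}_j$ directed toward $u_j$, in particular $\psi_\infty(e_j)=\alpha_j$.

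Next I would turn the eigenequation into the balance identities. Writing $U\psi_\infty=\psi_\infty$ in the form of \eqref{eq:Grover_time_evolution} and rearranging gives
\[
\psi_\infty(a)+\psi_\infty(\bar a)=\frac{2}{\tilde d(o(a))}\sum_{b:\,t(b)=o(a)}\psi_\infty(b)=:2c(o(a)),
\]
a quantity depending only on $o(a)$. Applying this to $\bar a$ and using $o(\bar a)=t(a)$ shows $c(o(a))=c(t(a))$ for every arc, so $c(\cdot)$ is constant on the connected graph $\tilde G$; call it $c$. This already yields $\psi_\infty(a)+\psi_\infty(\bar a)=2c$ for all $a\in\tilde A$ and $\sum_{t(b)=u}\psi_\infty(b)=\tilde d(u)\,c$ for all $u\in\tilde V$; summing the edge identity over the arcs $a$ with $o(a)=u$ and using the $a\mapsto\bar a$ symmetry gives $\sum_{o(a)=u}\psi_\infty(a)=\tilde d(u)\,c$ as well. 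So \eqref{eq:shinbashi}--\eqref{eq:shinbashi2} will follow once $c=\mathrm{ave}(\alpha_1,\dots,\alpha_r)$ is established.

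To pin down $c$, I would set $\phi(a):=\psi_\infty(a)-c$. Then $\phi(\bar a)=-\phi(a)$ and $\sum_{b:\,t(b)=u}\phi(b)=0$ for every $u$, i.e.\ $\phi$ is an antisymmetric Kirchhoff flow on $\tilde G$. On each tail the degree-$2$ Kirchhoff relations force $\phi$ to be constant in the outgoing direction with some value $f_j$; evaluating near $u_j$ and using $\psi_\infty(e_j)=\alpha_j$ gives $f_j=c-\alpha_j$. Now truncate $\tilde G$ to the finite subgraph $K_N$ consisting of $G_0$ together with the first $N$ vertices of each tail, and sum the identity $\sum_{b:\,o(b)=v}\phi(b)=0$ over $v\in K_N$: arcs with both endpoints in $K_N$ cancel in $\{a,\bar a\}$-pairs, leaving exactly the $r$ arcs leaving $K_N$, one per tail, with $\phi$-values $f_1,\dots,f_r$. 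Hence $\sum_{j=1}^r f_j=0$, i.e.\ $rc=\sum_j\alpha_j$, which is $c=\mathrm{ave}(\alpha_1,\dots,\alpha_r)$. Finally $\psi_\infty(\bar e_j)=2c-\alpha_j=\frac{2}{r}\sum_i\alpha_i-\alpha_j=\bigl(\mathrm{Gr}(r)\,{}^T[\alpha_1,\dots,\alpha_r]\bigr)_j$, which is precisely \eqref{eq:GroverScat}.

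The substantive input is the existence statement and the legitimacy of passing to $U\psi_\infty=\psi_\infty$ in the $\ell^\infty$ category, which is the content of Section~3 (and of \cite{FelHil1,FelHil2}); granting this, the rest is elementary. The one point inside the remaining argument that needs care is the flux balance: one must verify that an $\ell^\infty$-bounded Kirchhoff flow really is constant along each (infinite) tail, so that its flux across the far boundary of $K_N$ equals the near-$u_j$ value $c-\alpha_j$. After that, the pairwise cancellation on $K_N$ is routine, and the constancy of $c$ together with the two averaging identities and the identification with $\mathrm{Gr}(r)$ are linear bookkeeping.
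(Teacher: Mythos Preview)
Your argument is correct and takes a genuinely different, more elementary route than the paper. The paper reaches the constant $c$ by spectral means: it encodes $\psi_\infty$ through $\kappa_\infty(u)=\frac{2}{\sqrt{\tilde d(u)}}\langle K^*\delta_u,\psi_\infty\rangle$, derives a three-term recursion for $\Xi_n=KE_{PON}^nSK^*$ (Lemma~\ref{lem:ChebyLike}), sums it to obtain $(T-D)(\tfrac12 M^{1/2}\kappa_\infty+M^{-1/2}f_{in})=0$, and then invokes Perron--Frobenius (Lemma~\ref{lem:PF}) to conclude this vector is a multiple $c\,\tilde d^{1/2}$. To identify $c$ the paper then uses the unitarity-based conservation $\sum_j|\alpha_j|^2=\sum_j|\beta_j|^2$ (Lemma~\ref{lem:conservative}), applied separately to real and imaginary parts, which yields only $c\in\{0,\mathrm{ave}(\alpha)\}$, and finally rules out $c=0$ by a combinatorial flow contradiction. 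You bypass all of this: you read off the constancy of $c(u)$ directly from the eigenequation $U\psi_\infty=\psi_\infty$ (the paper records this identity as Lemma~\ref{lem:average} but does not exploit it for constancy), and you determine $c$ in one stroke via the discrete divergence theorem on a truncation $K_N$, with no case split to eliminate. Your approach is shorter and uses nothing beyond the existence theorem and connectedness of $\tilde G$; the paper's route, on the other hand, develops the $T,D,K$ machinery that is independently useful for the structural results on $\mathcal{H}_c$ in Section~3 and for the electric-circuit interpretation in Corollaries~\ref{1}--\ref{2}. The one place you flag for care---constancy of the Kirchhoff flow $\phi$ along each tail---follows immediately from the degree-$2$ vertex relations and antisymmetry, with no appeal to $\ell^\infty$-boundedness, so there is no gap there.
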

\noindent Since $\mathrm{Gr}(r)$ is self-adjoint unitary, Theorem~\ref{thm:scattering} implies that 
for arbitrary output flow $\Omega_{out}(G_0)$, there exists an input flow $\Omega_{in}(G_0)$ to accomplish the output flow $\Omega_{out}(G_0)$ 
using this Grover walk model; 
the input flow is of the form $\Omega_{in}(G_0)=\mathrm{Gr}(r) \Omega_{out}(G_0)$. 
\red{A constant inflow we consider in this paper corresponds to the
incident wave with $\theta =0$ to the internal graph in \cite{FelHil2}, thus our
situation may be said to be somewhat simpler.
Instead we can apply the technique of spectral decomposition of the Grover walk
and thus obtain more detailed information of the stationary state, for
example, on not only the surface of the internal graph but the interior.
To generalize this theorem for some general $\theta$ is one of the
interesting future problems because 
this problem is deeply related to an extraction of structures of the internal graph 
by just observing the response to the input of the general incident wave i.e., the scattering, 
as is considered by, for examples, \cite{FelHil1,FelHil2} and \cite{FaGoGu,FaGu} for the discrete-time and continuous-time cases, respectively. }

This paper is organized as follows. 
In section \ref{Sec:demonstration}, we give a demonstration for the cycle graph with three vertices case comparing 
with the theoretical and numerical results. \red{This simple but fruitful example shall show what we intend.}
In section \ref{Sec:uniquely_existence}, we give the proof of Theorem~\ref{thm:existence}. 
The time evolution restricted to the internal graph $G_0$ is regarded as a dynamical mapping with the every time external injection.
This map is no longer a normal operator in general. 
So we consider a general eigen-problem by taking the Jordan decomposition 
and show that the system is always included in the stable generalized eigenspace~\cite{R}, 
which implies the convergence of this dynamical system to a fix point. 
In this section, we also characterise the center generalized eigenspace for the Grover walk 
which must be eliminated from the general solution of the linear equation for the stationary state. 
We convert the eigen-problem, whose computational basis are generated by arcs, 
to a vertex based operator's one; a non-linear eigenequation with respect to 
the Dirichlet random walk associated with the boundary $\delta V=\{u_1,\dots,u_r\}$. 
Combining this with the arc-based analysis, we completely characterize the center generalized eigenspace. 
The center generalized eigenspace is generated by the set of fundamental cycles of $G_0$ and 
the eigenvectors of the Dirichlet random walk whose supports have no-overlap to any vertices connected to the tails. 
In section \ref{Sec:perfect transmittion}, we give the proof of Theorem~\ref{thm:perfect_transmittion}.
We show that the perfect transmission can be derived from 
the Perron-Frobenius theorem with some combinatorial flow analysis 
and the unitarity of the time evolution operator on the whole system and the stationarity of $\psi_\infty$.
%Since it is obvious the result holds when , 
In section \ref{Sec:scattering}, we take the extension of 
the number of tails from $2$ to $r\geq 2$
%the boundaries from $\{u_+,u_-\}$ to $\delta V$ with $|\delta V|\geq 2$ 
and we show Theorem~\ref{thm:scattering}. 
Combining it with the characterization of the center generalized eigenspace in the previous section, 
we can conclude that the stationary state is expressed by the combinatorial flow which is orthogonal to the eigenspace
induced by the fundamental cycles of $G_0$. This means that an electric current on the circuit described by the internal graph $G_0$ gives 
the stationary state driven by the Grover walk. See Corollaries \ref{1} and \ref{2}. 
We further discuss on that relation between quantum walks and the electric circuit in the forthcoming paper~\cite{HSforth}. 
Finally we give the summary in Section 6.
%%%%%%%%%%%%%%%
%%%%%%%%%%%%%%%
\section{A demonstration}\label{Sec:demonstration}
%%%%%%%%%%%%%%%
%%%%%%%%%%%%%%%
In this section, to demonstrate our result, we treat the cycle with three vertices denoted by $C_3$ as the internal graph $G_0$. 
We set the initial state by $z=1$. 
Let the complex valued amplitude at each arc on the $n$-th iteration denoted as in Fig.~\ref{fig:one}. Then we have 
\begin{align*} 
r_{n+1} &= \frac{2}{3}(a_n+b_n)-\frac{1}{3},\\
t_{n+1} &= \frac{2}{3}(e_n+f_n), \\
\begin{bmatrix}a_{n+1} \\ b_{n+1} \\ c_{n+1} \\ d_{n+1} \\ e_{n+1} \\ f_{n+1}\end{bmatrix} &= 
\begin{bmatrix}
0 & 0 & 0 & 1 & 0 & 0\\
0 & 0 & 0    & 0   & -1/3 & 2/3 \\
-1/3 & 2/3 & 0 & 0 & 0 & 0\\
0 & 0 & 0 & 0 & 2/3 & -1/3 \\
2/3 & -1/3 & 0 & 0 & 0 & 0 \\
0 & 0 & 1 & 0 & 0 & 0
\end{bmatrix}
\begin{bmatrix}a_{n} \\ b_{n} \\ c_{n} \\ d_{n} \\ e_{n} \\ f_{n}\end{bmatrix}
+\begin{bmatrix}0 \\ 0 \\ 2/3 \\ 0 \\ 2/3 \\ 0\end{bmatrix}.
\end{align*}
Putting $\psi_n:={}^T[a_n,b_n,\dots,f_n]$ and denote the matrix in RHS by $E_{PON}$ and the second term vector corresponding to 
\red{inflow injecting constantly into the internal graph}
in RHS by $\rho$, then we rewrite
	\begin{align}
        \psi_0 &= 0; \notag \\
        \psi_{n+1} &= E_{PON}\psi_n+\rho.
        \end{align}
Remark that since the matrix $E_{PON}$ is a submatrix of the infinite dimensional unitary operator on the whole system, 
then $E_{PON}$ is no longer unitary. 
Furthermore, $E_{PON}$ is not a normal operator, which means that we can not take an orthogonal decomposition to $E_{PON}$. 
\red{The most important thing we expect is whether $\psi_n$ convergences or not as $n$ tends to $\infty$. }
To confirm it, we have done the numerical analysis on the eigenvalue of $E_{PON}$. See Fig.~\ref{fig:two} for the geometric expression for $\sigma(E_{PON})$ on the complex plain.
We can observe that all the absolute value of the eigenvalues except the $(+1)$-eigenvalue is strictly less then $1$. 
The eigenspace of the $(+1)$-eigenvalue is expressed by 
	\[ \mathbb{C}{}^T\begin{bmatrix} 1 & -1 & -1 & 1 & 1 & -1 \end{bmatrix}. \]
Using the induction with respect to time step $n$ and putting the above vector $\bs{\gamma}$, \red{that is,
$\bs{\gamma}={}^T\begin{bmatrix} 1 & -1 & -1 & 1 & 1 & -1 \end{bmatrix}$,}
we can state that $\psi_n$ is orthogonal to $\bs{\gamma}$. 
This orthogonality is still not enough to show that $\bs{\gamma}$ is in the complement invariant space of the invariant subspace including $\psi_n$'s 
since $E_{PON}$ is not a normal operator. 
However as we will see later %in Lemma~\ref{orthogonal}, 
the invariant space of $E_{PON}$, whose absolute value of the eigenvalue is $1$ denoted by $\mathcal{H}_c$, 
is orthogonal to its complement invariant subspace denoted by $\mathcal{H}_s$. 
Thus there are no contributions of such a $(+1)$-eigenspace to this time evolution, 
which implies the convergence \red{of} $\psi_n$ since the other absolute value of the eigenvalues are strictly smaller than $1$. 
Then we can solve the following inhomogeneous linear equation with confidence to obtain the stationary state $\psi_\infty$:
	\[ (1-E_{PON})\psi_\infty=\rho. \]
Remarking that since $1\in \sigma(E_{PON})$, $(1-E_{PON})$ is not invertible. 
The expression for the solution space so that the first term in the following has no overlap to $(+1)$-eigenspace is 
	\[ {}^T\begin{bmatrix}1/3 & 1/6 & 2/3 & 1/3 & 5/6 & 2/3\end{bmatrix}+\mathbb{C} \bs{\gamma}; \]
that is, $\psi_\infty ={}^T\begin{bmatrix}1/3 & 1/6 & 2/3 & 1/3 & 5/6 & 2/3\end{bmatrix}$.
Therefore the reflection and transmission rates are computed by $r_*=(2/3) (1/3+1/6)-1/3=0$ and $t_*=(2/3) (5/6+2/3)=1$, respectively. 
Moreover we can confirm that $\psi_\infty$ satisfies both (\ref{eq:stationary_vertex}) and (\ref{eq:stationary_edge}).  
\begin{figure}[htbp]
\begin{center}
	\includegraphics[width=80mm]{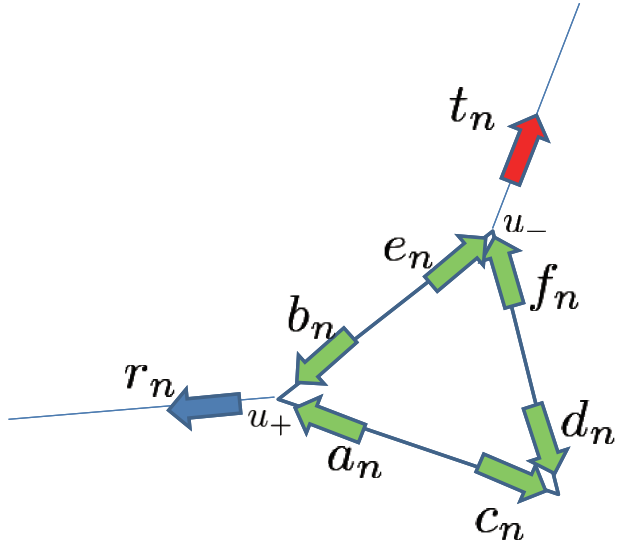}
\end{center}
\caption{The labeling of the complex amplitude at each arc of $C_3$ with tails at time $n$. }
\label{fig:one}
\end{figure}
\begin{figure}[htbp]
\begin{center}
	\includegraphics[width=80mm]{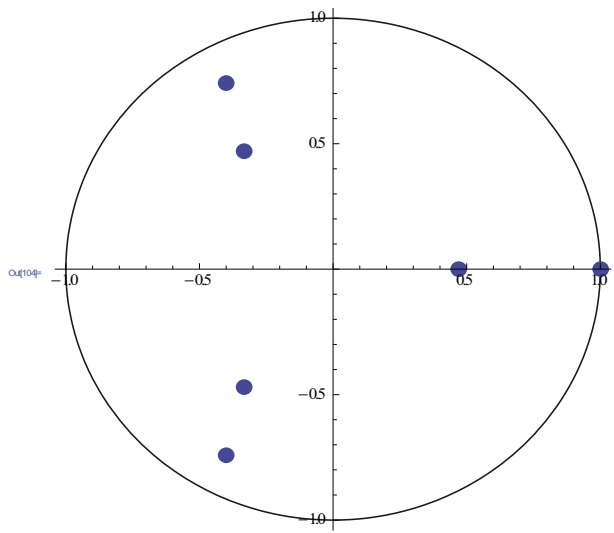}
\end{center}
\caption{The spectral distribution on the complex plain of $E_{PON}$ on $C_3$ with tails: 
the eigenvalue $(+1)$ comes from $\mathcal{L}^\perp$ which is an exceptional eigenspace defined in Sect.~3. }
\label{fig:two}
\end{figure}
%
%%%%%%%%%%%%%%%
\section{Proof of Theorem~\ref{thm:existence}: uniquely existence of the stationary state}\label{Sec:uniquely_existence}
%%%%%%%%%%%%%%%
\subsection{Proof of Theorem~\ref{thm:existence}}
Here we shall mainly give the proof of Theorem~\ref{thm:existance2} \red{stated soon later in this section}, 
which is a generalization of Theorem~\ref{thm:existence}; 
we can see the statement of Theorem~\ref{thm:existance2} coincides with that of Theorem~\ref{thm:existence}, 
if $r=2$, $\gamma_1=z=1$ and $\gamma_2=0$. 

%Now we consider more general setting. 
Let $\{\mathbb{P}_s\}_{s=1}^{r}$ be the additional tails to give an infinite inflow toward the internal graph. 
The vertices of the tail $\mathbb{P}_s$ are labeled by $\{0_s, -1_s, -2_s,\cdots\}$, where $0_s\in V_0$. 
%We set the joined vertices to the inflow tails by $\delta V_+:=\{0_s,\dots,0_{r_+}\}$. 
Define the sets of functions on $\tilde{A}$ and $\red{A_0}$ by $\mathbb{C}^{\tilde{A}}$ and $\mathbb{C}^{A_0}$, respectively.
Let $(\gamma_1,\dots,\gamma_{r})$ be a complex valued sequence assigned to each tails such that 
the initial state $\Psi_0\in \mathbb{C}^{\tilde{A}}$ is denoted by 
	\begin{equation}\label{eq:geneini} 
        \Psi_0(a)=\begin{cases} \gamma_s z^j & \text{: $o(a)\in V(\mathbb{P}_s)$, $t(a)=j_s$, $o(a)=(j-1)_s$, $(s=1,\dots,r)$, }\\ 0 & \text{: otherwise.} \end{cases} 
        \end{equation}
Here $z\in \mathbb{C}$ with $|z|=1$.         

%
%We should remark that we do not use a speciality of the Grover walk; we just use 
We consider the following general setting as follows. 
\begin{assumption}\label{assumption}
\noindent 
\begin{enumerate}
\item $G_0$: connected and finite; 
\item {\it the unitarity of $U$} on $\mathbb{C}^{\tilde{A}}$ such that $U^*=U^{-1}$; 
\item {\it the free walk on the tails} such that 
$(U\Psi)(a_j^{(s)})=\Psi(a_{j+1}^{(s)})$ and $(U\Psi)(\bar{a}_j^{(s)})=\Psi(\bar{a}_{j-1}^{(s)})$ ($|j|\geq 1$, $1\leq s\leq r$). 
Here $a_j^{(s)}\in A(\mathbb{P}_s)$ with $|j|=\dist(G_o, t(a_j^{(s)}))>\dist(G_o, o(a_j^{(s)}))$; 
\item the initial state is (\ref{eq:geneini}). 
\end{enumerate}
\end{assumption}
%We summarize our statement in the following theorem. 
%
\red{Let $\chi: \mathbb{C}^{\tilde{A}}\to \mathbb{C}^{A_0}$ be the boundary operator of $A_0$ such that for any $\Psi\in \mathbb{C}^{\tilde{A}}$, 
$(\chi\Psi)(a)=\Psi(a)$ $(a\in A_0)$. 
The adjoint $\chi^*: \mathbb{C}^{A_0}\to \mathbb{C}^{\tilde{A}}$ is described by 
	\[ (\chi^* \psi)(a) = \begin{cases} \psi(a) & \text{: $a\in A_0$,} \\ 0 & \text{: otherwise.} \end{cases} \]
Remark that $\chi\chi^*: \mathbb{C}^{A_0}\to \mathbb{C}^{A_0}$ is the identity operator of $\mathbb{C}^{A_0}$ 
and $\chi^*\chi: \mathbb{C}^{\tilde{A}}\to \mathbb{C}^{\tilde{A}}$ is the projection operator with respect to $A_0$. }
%%%
\begin{theorem}\label{thm:existance2}
We assume the above settings (1), (2), (3) and (4). 
Let $\Psi_n$ be the $n$-th iteration of the unitary evolution $U$ \red{for the initial state $\Psi_0$. Moreover} 
let us decompose $\Psi_n$ into $\Psi_n=\Psi_n^{(+)}\oplus \Psi_n^{(0)}\oplus \Psi_n^{(-)}$, where 
$\Psi_n^{(+)}:=z^{-n}\Psi_0$, $\Psi_n^{(0)}:=\chi^*\chi\Psi_n$ and $\Psi_n^{(-)}:=(1-\chi^*\chi)\Psi_n-\Psi_n^{(+)}$. 
Then 
	\[ \lim_{n\to\infty}z^{n}\Psi_n^{(\epsilon)}=\exists \varPhi_\infty^{(\epsilon)} \;\;(\epsilon\in\{0,\pm \}). \]
Moreover $\varPhi_\infty^{(\epsilon)}$'s satisfy
	\[ U(\varPhi_\infty^{(+)}+\varPhi_\infty^{(0)}+\varPhi_\infty^{(-)})=z^{-1}(\varPhi_\infty^{(+)}+\varPhi_\infty^{(0)}+\varPhi_\infty^{(-)}). \]
\end{theorem}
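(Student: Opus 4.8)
The plan is to reduce the dynamics to a finite-dimensional linear recursion on the internal graph and then invoke a Jordan-decomposition argument. First I would record the ``cut-off'' recursion: applying $U$ and projecting onto $A_0$, the sequence $\phi_n := \chi \Psi_n \in \mathbb{C}^{A_0}$ satisfies an inhomogeneous linear system $\phi_{n+1} = E\phi_n + \rho_n$, where $E := \chi U \chi^*$ is the internal time-evolution with the tails amputated and $\rho_n$ is the contribution of the inflow arriving at the boundary vertices $\{0_s\}$ at step $n$. By the free-walk assumption (3), the arc immediately entering $u_s$ at step $n$ carries the value $\gamma_s z^{\,n}$ (up to a shift in the index convention), so $\rho_n = z^n \rho$ for a fixed vector $\rho$; substituting $\phi_n = z^n \psi_n$ turns this into $\psi_{n+1} = (z^{-1}E)\psi_n + \rho$, a genuinely time-homogeneous affine recursion. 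Everything now hinges on the spectrum of $M := z^{-1}E$.

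The key spectral fact I would establish is: every eigenvalue of $M$ has modulus $\le 1$, and the generalized eigenspace $\mathcal{H}_c$ for eigenvalues on the unit circle is \emph{orthogonal} to its complementary invariant subspace $\mathcal{H}_s$ (the ``stable'' part, spectral radius $<1$), and moreover $\Psi_n$ — equivalently $\psi_n$ — lies in $\mathcal{H}_s$ for all $n$. The modulus bound follows because $E$ is a compression of the unitary $U$, hence a contraction, so $\|M\|\le 1$; genuine eigenvalues on the circle can only come from vectors on which the compression acts isometrically, i.e. from states supported on $A_0$ that $U$ maps back into $A_0$, and these are exactly the $\pm1$-eigenfunctions coming from the homological (fundamental-cycle) structure of $G_0$ together with the Dirichlet-random-walk eigenvectors avoiding the boundary — the ``center generalized eigenspace'' the introduction promises to characterize. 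On such vectors $U$ is unitary, so $\mathcal{H}_c$ is a reducing subspace for $U$ restricted appropriately, which gives the orthogonality $\mathcal{H}_c \perp \mathcal{H}_s$; in particular $M$ restricted to $\mathcal{H}_c$ is unitary (diagonalizable, no nontrivial Jordan blocks). The remaining, and main, obstacle is precisely this: $M = z^{-1}E$ is not normal, so a priori a unit-modulus eigenvalue could sit atop a nontrivial Jordan block, producing polynomially growing solutions and destroying convergence. The orthogonal splitting $\mathbb{C}^{A_0} = \mathcal{H}_c \oplus \mathcal{H}_s$ is what rules this out, and proving it cleanly (rather than just the algebraic eigenvalue count) is where the real work lies.

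Granting the splitting, I would finish as follows. Decompose the initial data and $\rho$ along $\mathcal{H}_c \oplus \mathcal{H}_s$. One checks by induction on $n$ that $\psi_n \perp \mathcal{H}_c$: the base case $\psi_0 = 0$ is trivial, and since $M$ preserves $\mathcal{H}_s$ it suffices that $\rho \in \mathcal{H}_s$, which one verifies directly against each generator of $\mathcal{H}_c$ (a fundamental cycle of $G_0$ or a boundary-avoiding Dirichlet eigenvector) using that $\rho$ is supported near the boundary arcs $\{e_s\}$ — exactly the $\mathcal{H}_c$-orthogonality relation that makes the $C_3$ computation in Section~2 work. On $\mathcal{H}_s$ the map $M$ has spectral radius strictly less than $1$, so $1 - M$ is invertible there and the affine iteration $\psi_{n+1} = M\psi_n + \rho$ converges geometrically to the fixed point $\psi_\infty = (1-M)^{-1}\rho \in \mathcal{H}_s$. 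Translating back, $z^n \chi\Psi_n \to \psi_\infty$, which is the $\epsilon=0$ component $\varPhi_\infty^{(0)} = \chi^*\psi_\infty$. The $\epsilon=+$ component is constant by definition ($z^n\Psi_n^{(+)} = \Psi_0$), so it converges trivially to $\varPhi_\infty^{(+)} = \Psi_0$. For $\epsilon=-$, the values on the tails are transported rigidly by the free walk (assumption (3)) and are determined at each site, once $n$ is large enough, by the already-converged boundary values $\chi\Psi_n$ together with the local scattering at $u_s$; hence $z^n\Psi_n^{(-)}$ converges sitewise, on every tail arc, to a limit $\varPhi_\infty^{(-)}$ (an $\ell^\infty$ function). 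Finally, passing to the limit in $z^{n+1}U\Psi_n = z^{n+1}\Psi_{n+1}$, using continuity of $U$ under sitewise limits of uniformly bounded sequences on the fixed graph $\tilde{G}$, gives $U(\varPhi_\infty^{(+)}+\varPhi_\infty^{(0)}+\varPhi_\infty^{(-)}) = z^{-1}(\varPhi_\infty^{(+)}+\varPhi_\infty^{(0)}+\varPhi_\infty^{(-)})$, as claimed.
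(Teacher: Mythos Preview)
Your approach is essentially the paper's: reduce to the finite-dimensional affine recursion governed by the compression $E=\chi U\chi^*$, split $\mathbb{C}^{A_0}=\mathcal{H}_c\oplus\mathcal{H}_s$ via the Jordan decomposition, show $\rho\in\mathcal{H}_s$, and then propagate convergence out along the tails. The key spectral facts you invoke --- $\|E\|\le 1$ from the compression inequality, and the observation that unit-circle eigenvectors of $E$ are precisely those $\varphi$ with $U\chi^*\varphi=\lambda\chi^*\varphi$ --- are exactly the paper's Lemmas~3.1--3.4.

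Two points deserve tightening. First, and more importantly: your verification that $\rho\perp\mathcal{H}_c$ appeals to the explicit Grover generators (fundamental cycles and boundary-avoiding Dirichlet eigenvectors), but Theorem~\ref{thm:existance2} is stated under the general Assumption~1, where no such description of $\mathcal{H}_c$ is available. The paper's argument is model-free and uses only the ingredient you already isolated: for any $\varphi_\lambda\in\mathcal{H}_c$ one has $U\chi^*\varphi_\lambda=\lambda\chi^*\varphi_\lambda$, hence
\[
\langle\rho,\varphi_\lambda\rangle=\langle\chi U\Psi_0,\varphi_\lambda\rangle=\langle\Psi_0,U^*\chi^*\varphi_\lambda\rangle=\overline{\lambda}\,\langle\Psi_0,\chi^*\varphi_\lambda\rangle=0,
\]
because $\Psi_0$ lives on tail arcs and $\chi^*\varphi_\lambda$ on $A_0$. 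Replace your generator-by-generator check with this one line and the proof becomes valid in the stated generality. Second, a sign slip: with the paper's labeling $\Psi_0(a)=\gamma_s z^{j}$ for $j\le 0$, the inflow at step $n$ carries $z^{-n}$, so $\rho_n=z^{-n}\rho$ and the relevant operator is $zE$, not $z^{-1}E$. Since $|z|=1$ this does not affect the spectral radius argument, but it is needed to match the limit $\lim_n z^{n}\Psi_n^{(\epsilon)}$ as stated.
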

\noindent{\it Proof of Theorem~\ref{thm:existance2}.} \\
\quad Let $E_{PON}$ be the submatrix of the whole unitary time evolution operator $U$ restricted to $A_0$,  
that is, $E_{PON}=\chi U\chi^*$. 
Putting $\psi_n:=\chi \Psi_n$, we have 
	\begin{align*}
        \psi_n &= \chi \Psi_n=\chi U\Psi_{n-1}=\chi U \chi^*\chi \Psi_{n-1}+\chi U(1-\chi^*\chi) \Psi_{n-1}\\
               &= E_{PON}\psi_{n-1}+z^{-(n-1)}\chi U\Psi_0
        \end{align*}
for $n\geq 1$. 
Therefore in the internal graph, the dynamics is described by 
	\begin{equation}\label{eq:master_eq} 
        \psi_0=0,\;\; \psi_{n+1}=E_{PON}\psi_n+z^{-n}\rho, 
        \end{equation}
where $\rho$ is the ``external source" defined by $\rho:=\chi U\Psi_0$. 
%which is expressed by
%	\[ \rho(a)=\begin{cases}  2\gamma_s/\tilde{d}(o(a)) & \text{: $o(a)=0_s$, $t(a)\in V_0$, $(s=1,\dots,r_+)$; } \\ 0 & \text{: otherwise.} \end{cases}  \]
Then (\ref{eq:master_eq}) implies 
	\[ \psi_n=z^{1-n}(1+z E_{PON}+\cdots+z^{n-1} E_{PON}^{n-1})\rho. \]
Thus our task will be \red{to analyze the spectra of $E_{PON}$ and to show the} convergence of $\psi_n$. 

Before going to spectral analysis on $E_{PON}$, we give the following remark. 
The time evolution operator $E_{PON}$ is no longer a normal operator, 
then $E_{PON}$ is not ensured the diagonalization in general. 
So from now on, we consider the Jordan decomposition. 
Recall a general theory on the linear algebra such that the generalized eigenspace of $\lambda$ is the invariant subspace spanned by 
$\{ \varphi \;|\; \exists m\geq 1 \mathrm{\;such\;that\;} (\lambda-E_{PON})^m\varphi=0 \}$. 
Moreover let us recall also that $E_{PON}$ is similar to 
	\[ \bigoplus_{\lambda_j \in \sigma(E_{PON})} J(\lambda_j,k_{j1})\oplus \cdots \oplus J(\lambda_j,k_{js(j)}). \]
Here $J(\lambda,k)$ is the $k$-dimensional matrix such that 
	\[J(\lambda,k)=
        \begin{bmatrix} 
        \lambda & 1       &         &        & \\
                & \lambda & 1       &        & \\
                &         & \ddots  & \ddots & \\
                &         &         & \ddots & 1 \\
                &         &         &        & \lambda
        \end{bmatrix}. \]
Since $\psi_n=z^{1-n}(1+z E_{PON}+\cdots+z^{n-1} E_{PON}^{n-1})\rho$, and our interest is the convergence in the limit of $n$, 
we remark that if $|\lambda|<1$, then 
	\begin{equation}\label{eq:Motomachi}
        \lim_{n\to\infty}\sum_{\ell=0}^{n}(zJ(\lambda,k))^\ell=z^{-1}
        \begin{bmatrix} 
        q & q^2 & q^3     & \cdots & q^k \\
          & q   & q^2     &        & q^{k-1} \\
          &     & \ddots  & \ddots & \vdots  \\
          &     &         & \ddots & q^2 \\
          &     &         &        & q  
        \end{bmatrix}, 
        \end{equation}
where $q:=z/(1-z\lambda)$ and all the \red{elements below the main diagonal} are zero, 
while if $|\lambda|\geq 1$, then the above LHS diverges. 
We introduce the invariant subspaces $\mathcal{H}_u$, $\mathcal{H}_{c}$ and $\mathcal{H}_{s}$ called 
a unstable generalized eigenspace, a center generalized eigenspace 
and a stable generalized eigenspace induced by $E_{PON}$, respectively~\cite{R}:
	\begin{align}
        \mathcal{H}_{u} &:= \spann\{ \psi\in\mathbb{C}^{A_0}  \;|\; \psi \mathrm{\;is\;a\;generalized\;eigenvector\;for\;an\;eigenvalue\;\lambda\;with\;|\lambda|>1} \}; \notag \\
        \mathcal{H}_{c} &:= \spann\{ \psi\in\mathbb{C}^{A_0}  \;|\; \psi \mathrm{\;is\;a\;generalized\;eigenvector\;for\;an\;eigenvalue\;\lambda\;with\;|\lambda|=1} \}; \notag \\
        \mathcal{H}_{s} &:= \spann\{ \psi\in\mathbb{C}^{A_0}  \;|\; \psi \mathrm{\;is\;a\;generalized\;eigenvector\;for\;an\;eigenvalue\;\lambda\;with\;|\lambda|<1} \}. \label{eq:generalized_ES}
        \end{align}
%It is easy to see that 
%if the initial state belongs to $\mathcal{H}_c$, then the iteration of $E_{PON}$ grows exponentially
%and if the initial state belongs to $\mathcal{H}_c$, 
%then the iteration of $E_{PON}$ grows sub-exponentially, 
%while if the initial state $\psi_0$ belong to $\mathcal{H}_s$, 
%then the iteration of $E_{PON}$ exponentially contracts for large time step $n\to\infty$. 
\red{Considering} the $n$-th iteration of our model, $\psi_n=z^{1-n}(1+z E_{PON}+\cdots+z^{n-1} E_{PON}^{n-1})\rho$, 
and (\ref{eq:Motomachi}), we \red{should concentrate on} to showing $\rho \in \mathcal{H}_s$ for the convergence of $z^{n}\psi_n$. 
\begin{lemma}\label{lem:H_u=0}
$\mathcal{H}_{u}=\emptyset$. 
\end{lemma}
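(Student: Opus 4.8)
The plan is to show that $E_{PON}$ has no eigenvalue $\lambda$ with $|\lambda|>1$ by exploiting the fact that $E_{PON}=\chi U \chi^*$ is a compression of the unitary operator $U$ to the subspace $\mathbb{C}^{A_0}$. First I would observe the general principle: if $U$ is unitary on a Hilbert space $\mathcal{H}$ and $P$ is an orthogonal projection, then the compression $T=PUP$ (viewed as an operator on $P\mathcal{H}$) is a contraction, i.e. $\|T\psi\|\leq \|U P\psi\|=\|P\psi\|\leq\|\psi\|$ for all $\psi\in P\mathcal{H}$. Here $P=\chi^*\chi$ and $T=E_{PON}$ acting on $\mathbb{C}^{A_0}$ (identified with $P\mathbb{C}^{\tilde{A}}$), so $\|E_{PON}\|\leq 1$.

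Given $\|E_{PON}\|\leq 1$, the key step is the standard fact that the spectral radius of an operator is bounded by its norm: every eigenvalue $\lambda$ of $E_{PON}$ satisfies $|\lambda|\leq \|E_{PON}\|\leq 1$. Indeed, if $E_{PON}\varphi=\lambda\varphi$ with $\varphi\neq 0$, then $|\lambda|\,\|\varphi\|=\|E_{PON}\varphi\|\leq\|\varphi\|$, hence $|\lambda|\leq 1$. Consequently there is no eigenvalue with $|\lambda|>1$, and since a generalized eigenvector for $\lambda$ exists only when $\lambda$ is an eigenvalue, the generalized eigenspace for any $\lambda$ with $|\lambda|>1$ is trivial. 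Therefore $\mathcal{H}_u=\emptyset$ (equivalently $\{0\}$), which is exactly the claim of Lemma~\ref{lem:H_u=0}.

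The only point requiring a little care — and the step I would flag as the main obstacle, though it is a mild one — is verifying that the compression is genuinely a contraction in this finite-versus-infinite setting: one must check that for $\psi\in\mathbb{C}^{A_0}$ the vector $\chi^*\psi\in\mathbb{C}^{\tilde{A}}$ lies in a space on which $U$ is an isometry. This is immediate because $\chi^*\psi$ is finitely supported (supported on $A_0$), so $\|\chi^*\psi\|_{\ell^2(\tilde{A})}=\|\psi\|_{\ell^2(A_0)}<\infty$, and $U$ (being unitary on $\mathbb{C}^{\tilde{A}}$ by Assumption~\ref{assumption}(2)) preserves this $\ell^2$-norm; then applying $\chi$ only decreases the norm. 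Chaining these, $\|E_{PON}\psi\|=\|\chi U\chi^*\psi\|\leq\|U\chi^*\psi\|=\|\chi^*\psi\|=\|\psi\|$, completing the argument.

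One could alternatively phrase the whole lemma spectrally: the numerical range of $E_{PON}$ is contained in the closed unit disk because $\langle E_{PON}\psi,\psi\rangle=\langle U\chi^*\psi,\chi^*\psi\rangle$ and $U$ unitary forces $|\langle U\chi^*\psi,\chi^*\psi\rangle|\leq\|\chi^*\psi\|^2=\|\psi\|^2$, and the spectrum lies in the closure of the numerical range. Either route yields $|\lambda|\leq 1$ for all $\lambda\in\sigma(E_{PON})$ and hence $\mathcal{H}_u=\emptyset$; I would present the norm-contraction version as it is the most elementary and self-contained.
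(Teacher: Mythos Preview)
Your proof is correct and follows essentially the same route as the paper's: one shows that $E_{PON}=\chi U\chi^*$ is a contraction via the chain $\|\chi U\chi^*\psi\|\leq\|U\chi^*\psi\|=\|\chi^*\psi\|=\|\psi\|$, whence every eigenvalue satisfies $|\lambda|\leq 1$. Your additional remarks on generalized eigenvectors and the finite-support/$\ell^2$ point are careful elaborations, but the core argument is identical to the paper's one-line proof.
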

\begin{proof}
Assume that $E_{PON}\varphi=\lambda \varphi$ holds. 
Then taking square modulus, we have 
$|\lambda|^2 ||\varphi||^2=||\chi U \chi^* \varphi||^2\leq ||U \chi^* \varphi||^2=||\chi^* \varphi||^2\leq ||\varphi||^2$. 
\end{proof}
The following statements are simple but the keys for the proof of the convergence. 
\begin{lemma}\label{lem:Koriyama}
If $|\lambda|=1$ and $\varphi_\lambda\in \ker(\lambda-E_{PON})$, then $E_{PON}^*\varphi_\lambda=\lambda^{-1}\varphi_\lambda$. 
Here $E_{PON}^*$ is the adjoint of $E_{PON}$.
\end{lemma}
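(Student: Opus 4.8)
**Proof plan for Lemma 3.3 (if $|\lambda|=1$ and $\varphi_\lambda\in\ker(\lambda-E_{PON})$, then $E_{PON}^*\varphi_\lambda=\lambda^{-1}\varphi_\lambda$).**

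The plan is to exploit the fact, already established in the proof of Lemma~\ref{lem:H_u=0}, that $E_{PON}=\chi U\chi^*$ is a \emph{contraction} on $\mathbb{C}^{A_0}$ (since $\|E_{PON}\psi\|=\|\chi U\chi^*\psi\|\le\|U\chi^*\psi\|=\|\chi^*\psi\|=\|\psi\|$), together with the elementary Hilbert-space fact that an eigenvector of a contraction whose eigenvalue has modulus $1$ is automatically an eigenvector of the adjoint with the conjugate eigenvalue. Concretely, first I would chase the chain of inequalities in the equality case: if $E_{PON}\varphi_\lambda=\lambda\varphi_\lambda$ with $|\lambda|=1$, then $\|\chi U\chi^*\varphi_\lambda\|=|\lambda|\,\|\varphi_\lambda\|=\|\varphi_\lambda\|=\|\chi^*\varphi_\lambda\|$, so the inequality $\|\chi(U\chi^*\varphi_\lambda)\|\le\|U\chi^*\varphi_\lambda\|$ is in fact an equality. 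Since $\chi$ is the coordinate projection onto $A_0$ followed by the identification, equality here forces $U\chi^*\varphi_\lambda$ to be supported on $A_0$, i.e. $\chi^*\chi U\chi^*\varphi_\lambda=U\chi^*\varphi_\lambda$, hence $\chi^*(E_{PON}\varphi_\lambda)=U\chi^*\varphi_\lambda$, that is $U\chi^*\varphi_\lambda=\lambda\,\chi^*\varphi_\lambda$.

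So the vector $\Phi:=\chi^*\varphi_\lambda\in\mathbb{C}^{\tilde A}$ is a genuine eigenvector of the \emph{unitary} operator $U$ with eigenvalue $\lambda$. For a unitary operator $U^*=U^{-1}$, and $U\Phi=\lambda\Phi$ immediately gives $U^*\Phi=\lambda^{-1}\Phi$ (using $|\lambda|=1$, $\lambda^{-1}=\bar\lambda$). Now I would apply $\chi$: $E_{PON}^*\varphi_\lambda=\chi U^*\chi^*\varphi_\lambda=\chi U^*\Phi=\lambda^{-1}\chi\Phi=\lambda^{-1}\chi\chi^*\varphi_\lambda=\lambda^{-1}\varphi_\lambda$, where the last step uses $\chi\chi^*=\mathrm{id}$ on $\mathbb{C}^{A_0}$. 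This is exactly the claimed identity. Note the key point extracted along the way — that $U\chi^*\varphi_\lambda$ stays inside $A_0$ — is itself worth recording, since it says the center generalized eigenvectors of $E_{PON}$ lift to honest eigenvectors of $U$ that live entirely on the internal graph; this is presumably what later feeds the orthogonality of $\mathcal{H}_c$ and $\mathcal{H}_s$.

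The only subtle point — the ``main obstacle,'' though it is mild — is justifying that equality in $\|\chi\eta\|\le\|\eta\|$ for $\eta=U\chi^*\varphi_\lambda$ really does force $\eta$ to vanish off $A_0$: this is just because $\|\eta\|^2=\|\chi\eta\|^2+\|(1-\chi^*\chi)\eta\|^2$ (orthogonal decomposition of $\mathbb{C}^{\tilde A}$ into the $A_0$-coordinates and the rest), so equality $\Leftrightarrow(1-\chi^*\chi)\eta=0$. I would also remark that the hypothesis is stated for $\varphi_\lambda\in\ker(\lambda-E_{PON})$, i.e. a true eigenvector rather than a generalized one; the argument above uses that genuinely (a rank-one Jordan block at $|\lambda|=1$ would be needed to even make sense of a higher generalized eigenvector, and part of the surrounding discussion shows $E_{PON}$ has no nontrivial Jordan blocks on the unit circle — but that is a separate statement and not needed here). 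An alternative one-line phrasing I could use instead: the operator $T:=\chi^*E_{PON}\chi^*{}^{-1}$ does not quite make sense, so I will keep the direct computation above, which is cleanest.
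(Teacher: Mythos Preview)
Your proof is correct and follows essentially the same approach as the paper's own argument: both first establish that $U\chi^*\varphi_\lambda=\lambda\chi^*\varphi_\lambda$ by showing that equality in the contraction inequality $\|\chi U\chi^*\varphi_\lambda\|\le\|U\chi^*\varphi_\lambda\|$ forces $U\chi^*\varphi_\lambda$ to be supported on $A_0$ (you via the orthogonal decomposition $\|\eta\|^2=\|\chi\eta\|^2+\|(1-\chi^*\chi)\eta\|^2$, the paper via the equivalent contradiction), and then deduce the adjoint relation from unitarity of $U$. Your explicit computation $E_{PON}^*\varphi_\lambda=\chi U^*\chi^*\varphi_\lambda=\lambda^{-1}\chi\chi^*\varphi_\lambda=\lambda^{-1}\varphi_\lambda$ matches the paper's final step exactly, and your side remark that $\chi^*\varphi_\lambda$ lifts to a genuine eigenvector of $U$ is precisely the paper's equation~(\ref{eq:hiroshima}), used later for Lemma~\ref{lem:Sendai}.
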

\begin{proof}
%If $\psi\in \ker (\lambda-U)$, then $\chi \psi=\psi$; 
%if it would be $\chi \psi \neq \psi$; that is, $\supp(\psi)\cap A(\mathbb{P}^{(s)})\neq \emptyset$ with some $s$,
%then $0\neq \psi(a_0)=\psi(a_1)=\psi(a_2)=\cdots$ or $0\neq \psi(\bar{a}_0)=\psi(\bar{a}_1)=\psi(a_2)=\cdots$ by the definition of tails, 
%where $t(a_j)=j_s$ and $o(a_j)=(j-1)_s$, which implies $\psi\notin \ell^2$. 
%Using this fact, we obtain $\ker(\lambda-E_{PON})=\ker(\lambda-U)$. 
%We extend the function $\varphi_\lambda$ on $A_0$ to $\tilde{\varphi}_\lambda$ on $A$ such that
%	\[ \tilde{\varphi}_\lambda(a)=\begin{cases}\varphi_\lambda(a) & \text{: $a\in A_0$}\\  0 & \text{otherwise.} \end{cases} \]
%Obviously, $\chi^*\varphi_\lambda=\varphi_\lambda$. 
Since $E_{PON}\varphi_\lambda=\lambda \varphi_\lambda$, $\chi U\chi^*\varphi_\lambda=\lambda \varphi_\lambda$ holds. 
In the following, let us see $\chi^*\chi U \chi^*\varphi_\lambda=U \chi^*\varphi_\lambda$ if $|\lambda|=1$. 
Assume that $\chi^*\chi U \chi^*\varphi_\lambda \neq U \chi^*\varphi_\lambda$. 
Then $\supp(U\chi^*\varphi_\lambda)\cap (A\setminus A_0)\neq \emptyset$, that is, there must exist $a\in A\setminus A_0$ such that $(U\chi^*\varphi_\lambda)(a)\neq 0$. 
Thus we have 
	\[||U\chi^*\varphi_\lambda||\gneq||\chi U\chi^*\varphi_\lambda||=||E_{PON}\varphi_\lambda||; \]
on the other hand, 
	\[ ||U\chi^*\varphi_\lambda||=||\chi^*\varphi_\lambda||=||\varphi_\lambda|| \]
since $U$ is unitary. This contradicts $E_{PON}\varphi_\lambda=\lambda\varphi_\lambda$ with $|\lambda|=1$. 
In result, 
	\begin{equation}\label{eq:hiroshima}
        U\chi^*\varphi_\lambda=\lambda\chi^*\varphi_\lambda
        \end{equation} 
if $E_{PON}\varphi_\lambda=\lambda\varphi_\lambda$ with $|\lambda|=1$. 
We can easily see $E_{PON}^*\varphi_\lambda=\lambda^* \varphi_\lambda$ remarking $|\lambda|=1$, $(\chi U\chi^*)^*=\chi U^* \chi^*$ and the unitarity of $U$. 
\end{proof}
By Lemma~\ref{lem:Koriyama}, we can show that for any $\lambda\in \sigma(E_{PON})$ with $|\lambda|=1$, 
	\[ \ker(\lambda-E_{PON})^m=\ker(\lambda-E_{PON}) \]
for any $m\geq 1$ as follows.
\begin{lemma}\label{eq:Fukushima}
$\mathcal{H}_c$ is an eigenspace; that is,  
	\begin{equation}
        \oplus_{|\lambda|=1}\ker(\lambda-E_{PON})=\mathcal{H}_c. 
	\end{equation}
\end{lemma}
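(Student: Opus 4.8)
\textit{Proof proposal.} The inclusion $\bigoplus_{|\lambda|=1}\ker(\lambda-E_{PON})\subseteq\mathcal{H}_c$ is immediate from the definition~(\ref{eq:generalized_ES}), so the whole content is the reverse inclusion: I must show that no unimodular eigenvalue of $E_{PON}$ carries a nontrivial Jordan block, i.e. that $\ker(\lambda-E_{PON})^m=\ker(\lambda-E_{PON})$ for every $m\geq 1$ whenever $|\lambda|=1$. The plan is to settle the case $m=2$ first and then bootstrap by induction.

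For $m=2$: fix $\lambda\in\sigma(E_{PON})$ with $|\lambda|=1$ and take $\varphi\in\ker(\lambda-E_{PON})^2$. Set $\psi:=(\lambda-E_{PON})\varphi$, which lies in $\ker(\lambda-E_{PON})$ and is therefore a genuine eigenvector. Then Lemma~\ref{lem:Koriyama} applies to $\psi$ and gives $E_{PON}^*\psi=\lambda^{-1}\psi$, i.e. $(\lambda^{-1}-E_{PON}^*)\psi=0$. Since $|\lambda|=1$, the adjoint of $(\lambda-E_{PON})$ is $(\overline{\lambda}-E_{PON}^*)=(\lambda^{-1}-E_{PON}^*)$, so
\[
\|\psi\|^2=\langle(\lambda-E_{PON})\varphi,\,\psi\rangle=\langle\varphi,\,(\lambda^{-1}-E_{PON}^*)\psi\rangle=0 .
\]
Hence $\psi=0$, so $\varphi\in\ker(\lambda-E_{PON})$, proving $\ker(\lambda-E_{PON})^2=\ker(\lambda-E_{PON})$.

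For general $m$: I would induct on $m$. Assuming $\ker(\lambda-E_{PON})^m=\ker(\lambda-E_{PON})$, any $\varphi\in\ker(\lambda-E_{PON})^{m+1}$ satisfies $(\lambda-E_{PON})\varphi\in\ker(\lambda-E_{PON})^m=\ker(\lambda-E_{PON})$, hence $\varphi\in\ker(\lambda-E_{PON})^2=\ker(\lambda-E_{PON})$ by the base case. Thus the generalized eigenspace of each unimodular $\lambda$ coincides with its ordinary eigenspace; since generalized eigenspaces attached to distinct eigenvalues are always linearly independent, taking the sum over $|\lambda|=1$ yields $\mathcal{H}_c=\bigoplus_{|\lambda|=1}\ker(\lambda-E_{PON})$.

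I do not expect a genuine obstacle here: the real work has already been done in Lemma~\ref{lem:Koriyama}, whose content is that, although $E_{PON}$ fails to be normal, each eigenvector for a unimodular eigenvalue is simultaneously an eigenvector of $E_{PON}^*$ with the reciprocal eigenvalue. The only point requiring care is the displayed inner-product computation, which is exactly where this ``normality on the unit circle'' is used to force the off-diagonal Jordan entry to vanish; the remainder is routine linear algebra.
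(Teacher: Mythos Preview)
Your proof is correct and follows essentially the same approach as the paper: both arguments hinge on Lemma~\ref{lem:Koriyama} to conclude $(\lambda-E_{PON})^*\phi=0$ for $\phi\in\ker(\lambda-E_{PON})$ with $|\lambda|=1$, and then use the inner-product identity $\langle(\lambda-E_{PON})\varphi,\phi\rangle=\langle\varphi,(\lambda-E_{PON})^*\phi\rangle=0$ to kill the Jordan block. The only difference is organizational---the paper argues by contradiction for a minimal $m\geq2$ directly, whereas you first handle $m=2$ and then induct---but the substance is identical.
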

\begin{proof}
For $\varphi\in \mathcal{H}_c$ with $(\lambda-E_{PON})^m \varphi=0$ $(m\geq 2)$ and 
$(\lambda-E_{PON})^{m-1} \varphi\neq 0,\dots, (\lambda-E_{PON}) \varphi\neq 0$,  
noting that $\phi:=(\lambda-E_{PON})^{m-1} \varphi\in \ker(\lambda-E_{PON})$, then 
	\[ 0\neq \langle (\lambda-E_{PON})^{m-1} \varphi,\phi  \rangle=\langle (\lambda-E_{PON})^{m-2} \varphi,(\lambda-E_{PON})^*\phi  \rangle \]
By Lemma~\ref{lem:Koriyama}, $(\lambda-E_{PON})^*\phi=0$. Thus $m$ must be $1$. 
\end{proof}
\begin{lemma}\label{lem:Shiroishi-Zao}
$\mathcal{H}_c \perp \mathcal{H}_{s}$. 
\end{lemma}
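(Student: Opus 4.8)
The plan is to show that $\mathcal{H}_c$ and $\mathcal{H}_s$ are orthogonal by exploiting the adjoint relation established in Lemma~\ref{lem:Koriyama}. First I would fix an eigenvector $\varphi_\lambda \in \ker(\lambda - E_{PON})$ with $|\lambda| = 1$; by Lemma~\ref{eq:Fukushima} these span all of $\mathcal{H}_c$, so it suffices to show $\varphi_\lambda \perp \mathcal{H}_s$. By Lemma~\ref{lem:Koriyama}, $E_{PON}^* \varphi_\lambda = \lambda^{-1} \varphi_\lambda$, i.e. $\varphi_\lambda$ is an eigenvector of the adjoint with eigenvalue $\bar\lambda$ (using $|\lambda|=1$). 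Equivalently, $(\mu - E_{PON})^* \varphi_\lambda = (\bar\mu - \lambda^{-1})\varphi_\lambda$ for any scalar $\mu$.

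Next I would take an arbitrary generalized eigenvector $\phi \in \mathcal{H}_s$, say $(\mu - E_{PON})^k \phi = 0$ for some $\mu$ with $|\mu| < 1$ and some $k \geq 1$. Then I compute
\begin{align*}
0 &= \langle (\mu - E_{PON})^k \phi,\ \varphi_\lambda \rangle = \langle \phi,\ ((\mu - E_{PON})^*)^k \varphi_\lambda \rangle = \langle \phi,\ (\bar\mu - \lambda^{-1})^k \varphi_\lambda \rangle = (\bar\mu - \lambda^{-1})^k \langle \phi, \varphi_\lambda \rangle.
\end{align*}
Since $|\mu| < 1$ and $|\lambda^{-1}| = 1$, we have $\bar\mu \neq \lambda^{-1}$, so $(\bar\mu - \lambda^{-1})^k \neq 0$, and therefore $\langle \phi, \varphi_\lambda \rangle = 0$. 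As $\phi$ ranges over a spanning set of $\mathcal{H}_s$ and $\varphi_\lambda$ over a spanning set of $\mathcal{H}_c$, this gives $\mathcal{H}_c \perp \mathcal{H}_s$.

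The only mild subtlety — and the place I would be most careful — is the bookkeeping of eigenvalues: one must use that every eigenvalue $\mu$ appearing in $\mathcal{H}_s$ has $|\mu|<1$ strictly, while every $\lambda$ appearing in $\mathcal{H}_c$ has $|\lambda|=1$, so the scalars $\bar\mu$ and $\lambda^{-1}$ can never coincide; this separation is exactly what makes the pairing vanish. Everything else is the standard adjoint manoeuvre for moving powers of $(\mu - E_{PON})$ across the inner product, combined with Lemma~\ref{lem:Koriyama} to identify $\varphi_\lambda$ as an honest eigenvector of $E_{PON}^*$. No new ideas beyond the two preceding lemmas are needed, so I expect the proof to be short.
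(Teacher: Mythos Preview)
Your proof is correct and follows essentially the same approach as the paper's: both reduce to eigenvectors via Lemma~\ref{eq:Fukushima}, then use Lemma~\ref{lem:Koriyama} to move $(\mu-E_{PON})^k$ across the inner product as $(\bar\mu-\bar\lambda)^k$, and conclude by noting $\mu\neq\lambda$ since $|\mu|<1=|\lambda|$. Your write-up is slightly more explicit about why the scalar is nonzero, but there is no substantive difference.
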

\begin{proof}
Let $\varphi_\lambda\in \ker(\lambda-E_{PON})\subset \mathcal{H}_c$ and $\varphi_{\mu}\in \ker(\mu-E_{PON})^m\subset \mathcal{H}_s$ with 
$(\mu-E_{PON})^m \varphi_\mu=0$ and $(\mu-E_{PON})^{m-1} \varphi_\mu\neq 0,\dots, (\mu-E_{PON}) \varphi_\mu\neq 0$. 
By (\ref{eq:Fukushima}), it is sufficient to check the orthogonality of $\varphi_\lambda$ and $\varphi_{\mu}$. 
Then 
	\begin{align} 
        0 &= \langle (\mu-E_{PON})^m\varphi_\mu, \varphi_\lambda \rangle 
           = \langle \varphi_\mu, {(\mu-E_{PON})^m}^*\varphi_\lambda \rangle \\
          &= {(\mu-\lambda)^m}^*\langle \varphi_\mu, \varphi_\lambda \rangle
        \end{align}
For the final equation, we used Lemma~\ref{lem:Koriyama}. Since $\mu \neq \lambda$, we obtain the orthogonality. 
\end{proof}
\begin{lemma}\label{lem:Sendai}
We have $\rho \in \mathcal{H}_s$. 
\end{lemma}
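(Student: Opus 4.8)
The plan is to reduce the statement to an orthogonality. Since $\mathcal{H}_u=\emptyset$ by Lemma~\ref{lem:H_u=0}, the generalized eigenspace decomposition of $\mathbb{C}^{A_0}$ with respect to $E_{PON}$ reads $\mathbb{C}^{A_0}=\mathcal{H}_c\oplus\mathcal{H}_s$, and by Lemma~\ref{lem:Shiroishi-Zao} this direct sum is orthogonal, so $\mathcal{H}_s=\mathcal{H}_c^{\perp}$. Hence it suffices to prove $\rho\perp\mathcal{H}_c$, and by Lemma~\ref{eq:Fukushima} this amounts to checking $\langle\rho,\varphi_\lambda\rangle=0$ for every $\lambda$ with $|\lambda|=1$ and every $\varphi_\lambda\in\ker(\lambda-E_{PON})$.

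Next I would pull the operators across the inner product. Recall $\rho=\chi U\Psi_0$, and that in the proof of Lemma~\ref{lem:Koriyama} we showed $U\chi^*\varphi_\lambda=\lambda\chi^*\varphi_\lambda$ (see (\ref{eq:hiroshima})), which together with the unitarity of $U$ and $|\lambda|=1$ gives $U^*\chi^*\varphi_\lambda=\overline{\lambda}\,\chi^*\varphi_\lambda$. Then, using the defining property of $\chi^*$ and moving $U$ to its adjoint,
\[ \langle\rho,\varphi_\lambda\rangle=\langle\chi U\Psi_0,\varphi_\lambda\rangle=\langle U\Psi_0,\chi^*\varphi_\lambda\rangle=\langle\Psi_0,U^*\chi^*\varphi_\lambda\rangle=\overline{\lambda}\,\langle\Psi_0,\chi^*\varphi_\lambda\rangle. \]
So everything comes down to showing $\langle\Psi_0,\chi^*\varphi_\lambda\rangle=0$ (the precise value of the scalar prefactor is immaterial).

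This last identity is purely a matter of disjoint supports. The vector $\chi^*\varphi_\lambda$ is supported on $A_0$, whereas by (\ref{eq:geneini}) every arc $a$ in the support of $\Psi_0$ satisfies $t(a)=j_s$ with $j\le 0$ and $o(a)=(j-1)_s$, so $o(a)$ is a tail vertex of $\mathbb{P}_s$ distinct from $0_s$; in particular $o(a)\notin V_0$, hence $a\notin A_0$. Thus $\supp\Psi_0\cap A_0=\emptyset$ and $\langle\Psi_0,\chi^*\varphi_\lambda\rangle=0$, which gives $\langle\rho,\varphi_\lambda\rangle=0$ for all such $\lambda,\varphi_\lambda$, and therefore $\rho\in\mathcal{H}_s$. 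The one place that genuinely requires Lemmas~\ref{lem:H_u=0}--\ref{lem:Shiroishi-Zao} rather than a soft argument is the reduction in the first paragraph: without $\mathcal{H}_u=\emptyset$ the orthogonal complement of $\mathcal{H}_c$ would not be $\mathcal{H}_s$ alone, so orthogonality to $\mathcal{H}_c$ would not yield membership in $\mathcal{H}_s$. The remainder is a two-line adjoint manipulation plus the observation that the external source $\rho$, being the one-step $U$-image of a state living entirely on the tails, feeds into $A_0$ only through the boundary and is in particular orthogonal to any eigenvector of $U$ supported on $A_0$.
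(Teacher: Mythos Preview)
Your proof is correct and follows essentially the same approach as the paper's: show $\langle\rho,\varphi_\lambda\rangle=0$ for every eigenvector $\varphi_\lambda$ in $\mathcal{H}_c$ by moving $\chi$ and $U$ across the inner product via (\ref{eq:hiroshima}), then use the disjoint supports of $\Psi_0$ and $\chi^*\varphi_\lambda$; finally invoke Lemma~\ref{lem:Shiroishi-Zao} (together with $\mathcal{H}_u=\emptyset$) to pass from $\rho\perp\mathcal{H}_c$ to $\rho\in\mathcal{H}_s$. You are simply more explicit than the paper about the reduction $\mathcal{H}_s=\mathcal{H}_c^\perp$ and about why $\langle\Psi_0,\chi^*\varphi_\lambda\rangle=0$, both of which the paper leaves implicit.
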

\begin{proof}
The center generalized eigenspace $\mathcal{H}_c$ is the eigenspace spanned by $\varphi_\lambda$'s. 
Here $\chi^*\varphi_\lambda\in \ker(\lambda-U)$ by (\ref{eq:hiroshima}). 
We examine the orthogonality. Lemma~\ref{lem:Koriyama} implies 
	\begin{align*}
         \langle \rho, \varphi_\lambda \rangle 
                 & = \langle \chi U\Psi_0, \varphi_\lambda \rangle 
                   = \lambda^{-1} \langle \Psi_0, \chi^*\varphi_\lambda \rangle 
                   = 0.
        \end{align*}
%for any $\varphi_\lambda\in \ker(\lambda-U)$. 
Then by Lemma~\ref{lem:Shiroishi-Zao}, we have obtained the desired conclusion. 
\end{proof}
By Lemma~\ref{lem:Sendai}, we have $E_{PON}^j \rho \in \mathcal{H}_s$ for any $j\geq 0$. 
Since $\psi_n=z^{1-n}(1+z E_{PON}+\cdots+z^{n-1} E_{PON}^{n-1})\rho$, then 
we obtain the existence $\lim_{n\to\infty}z^{n}\psi_n$. %We have completed the proof.  $\square$ \\
%thm

Finally, we consider a meaning of the existence of $\varphi_\infty:=\lim_{n\to\infty}z^{n}\psi_n$ to the whole system $\tilde{G}$.
Putting $\varphi_n:=z^n\psi_n$, we have $z^{-1}\varphi_{n+1}=E_{PON}\varphi_n+\rho$.
Then 
	\begin{equation}\label{eq:Yagiyama}
        z^{-1}\varphi_\infty=E_{PON}\varphi_\infty+\rho
        \end{equation}
holds. 
The state newly going outside of the internal graph at time $n$ is denoted by $t_n$; the support of $t_n$ is all the arcs 
of tails whose origins are $\delta V=\{u_1,\dots,u_r\}$. It holds that 
	\[ t_{n+1}=(1-\chi^*\chi)U\chi^*\psi_n+(z^{-n}(1-\chi^*\chi)U\Psi_0-z^{-(n+1)}\Psi_0). \]
\red{Here the first term of $t_n(\bar{e_j})$ corresponds to the value on $\mathbb{P}_j$ transmitted from the internal graph 
and the second one does to the value on $\mathbb{P}_j$ scattered at $t(e_j)$ from the external paths. }
Using (\ref{eq:Yagiyama}), we have 
	\[ t_{n+1}=z^{-(n+1)}(zU(\chi^*\varphi_n+\Psi_0)-(\chi^*\varphi_{n+1}+\Psi_0)). \]
Then putting $\tau_n:=z^nt_n$, we obtain the limit of $\tau_n$ such that
	\begin{equation}\label{eq:Aobayamaeki}
        \tau_\infty:=\lim_{n\to\infty}\tau_n=(zU-1)(\chi^*\varphi_\infty+\Psi_0). 
        \end{equation}
Now let us put $\Psi_{out}:=\tau_\infty+zU\tau_\infty+z^2U^2\tau_\infty+\cdots$, which represents the history of the out flow from the internal graph
for large time steps. 
Then we have 
	\begin{align*}
        \chi U(\chi^*\varphi_\infty+\Psi_0+\Psi_{out})
        	&= \chi U\chi^*\chi (\chi^*\varphi_\infty+\Psi_0+\Psi_{out})+ \chi U(1-\chi^*\chi)(\chi^*\varphi_\infty+\Psi_0+\Psi_{out}) \\
                &= E_{PON}\varphi_\infty+\rho \\
                &= z^{-1}\varphi_\infty 
        \end{align*}
by (\ref{eq:Yagiyama}). 
On the other hand, 
	\begin{align*}
        (1-\chi^*\chi) U(\chi^*\varphi_\infty+\Psi_0+\Psi_{out})
        	&= (1-\chi^*\chi) U(\chi^*\varphi_\infty+\Psi_0) +U\Psi_{out} \\
                &= z^{-1}(1-\chi^*\chi) (\tau_\infty+\varphi_\infty+\Psi_0) + z^{-1}\Psi_{out}-z^{-1}\tau_\infty \\
                &= z^{-1}(\Psi_0+\Psi_{out}). 
        \end{align*}
Here in the second equation, we used (\ref{eq:Aobayamaeki}), and $\Psi_{out}=\tau_\infty+zU\Psi_{out}$.  
Then putting $\Psi_\infty:=\chi^*\varphi_\infty+\Psi_0+\Psi_{out}$, we have  
	\begin{equation*}
        U\Psi_\infty=z^{-1}\Psi_\infty, 
        \end{equation*}
\red{which completes} the proof of Theorem~\ref{thm:existance2}. $\square$

%%%%%%%%%%%%%%%%%%%
\subsection{$\mathcal{H}_c$ for the Grover walk case}
%%%%%%%%%%%%%%%%%%%
In the previous subsection, we \red{observed} the convergence \red{in} the our model in the long time limit. 
\red{In order to} find the stationary solution, we \red{have to} solve the linear equation $\psi_\infty=E_{PON}\psi_{\infty}+\rho$, where 
$\psi_\infty:=\lim_{n\to\infty}z^{n-1}\psi_n$. 
Consider $z=1$ case. 
Then the solution is $\psi_\infty=(1-E_{PON}|_{\mathcal{H}_s})^{-1}\rho$ in formal. 
For more practical point of view, we need to eliminate the element of $\mathcal{H}_c$; this is the meaning of ``$E_{PON}|_{\mathcal{H}_s}$". 
\red{Therefore in this subsection, we confirm the consistency of the previous section and 
characterize $\mathcal{H}_c$ for the Grover walk case.} 
%In this section, we regard $E_{PON}$ as the $|A_0|\times |A_0|$ matrix.

We introduce the incident matrix describing the incidence from an arc to its terminal vertex as follows: 
for any $u\in V_0$ and $a\in A_0$, 
	\[ (K)_{u,a}=\begin{cases} 1/\sqrt{\tilde{d}(u)} & \text{: $t(a)=u$;} \\ 0 & \text{: otherwise.} \end{cases} \]
\red{This} boundary operator $K$ satisfies the following properties. 
%%%%%
\begin{lemma}\label{lem:propertyK}
\noindent
\begin{enumerate}
\item $E_{PON}=S(2K^*K-I)$, where $(S\psi)(a)=\psi(\bar{a})$;
\item $KK^*=D$, where $(Df)(u)=(d(u)/\tilde{d}(u)) f(u)$;
\item $KSK^*=T$, where $T$ is the Dirichlet random walk operator on $G_0$ with the boundary $\delta V$; that is, 
	\[ (T)_{u,v}=\begin{cases}  1/\sqrt{\tilde{d}(u)\tilde{d}(v)} & \text{: $u$ and $v$ are adjacent in $G_0$,} \\ 0 & \text{: otherwise.} \end{cases} \]
\end{enumerate}
\end{lemma}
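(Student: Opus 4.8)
The plan is to verify each of the three identities by a direct matrix-element computation, working from the definition of the incidence matrix $K$ and the shift $S$, and then to interpret the resulting operators. The only genuine ingredient is bookkeeping with the degree weights $\tilde{d}(u)$ (the degree in $\tilde{G}$, \emph{not} in $G_0$), since a vertex $u\in\{u_1,\dots,u_r\}$ carrying a tail has $\tilde{d}(u)=d(u)+1$, and this is exactly the point where the ``Dirichlet'' (sub-stochastic) nature of $T$ enters.

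\textbf{Step 1: the factorization $E_{PON}=S(2K^*K-I)$.} First I would compute $(K^*K)_{a,b}$ for $a,b\in A_0$. By definition $(K^*)_{a,u}=1/\sqrt{\tilde{d}(u)}$ if $t(a)=u$ and $0$ otherwise, so $(K^*K)_{a,b}=\sum_{u\in V_0}(K^*)_{a,u}(K)_{u,b}=\frac{1}{\tilde{d}(t(a))}\,[\,t(a)=t(b)\,]$. Hence $((2K^*K-I)\psi)(a)=-\psi(a)+\frac{2}{\tilde{d}(t(a))}\sum_{b:\,t(b)=t(a)}\psi(b)$, and applying $S$ replaces $a$ by $\bar a$ on the left, i.e. $(S(2K^*K-I)\psi)(a)=-\psi(\bar a)+\frac{2}{\tilde d(t(\bar a))}\sum_{b:\,t(b)=t(\bar a)}\psi(b)=-\psi(\bar a)+\frac{2}{\tilde d(o(a))}\sum_{b:\,t(b)=o(a)}\psi(b)$. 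Comparing with \eqref{eq:Grover_time_evolution} (restricted to $A_0$, with $\mathrm{deg}=\tilde d$ because $E_{PON}=\chi U\chi^*$ uses the degree in $\tilde G$), this is exactly $E_{PON}\psi$. The one subtlety to state carefully is that the inner sum in \eqref{eq:Grover_time_evolution} runs over all $b$ with $t(b)=o(a)$ \emph{in $\tilde G$}, while here it runs only over $b\in A_0$; but $\chi U\chi^*$ precisely discards the tail contributions, so the two agree.

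\textbf{Steps 2 and 3: $KK^*=D$ and $KSK^*=T$.} For $KK^*$: $(KK^*)_{u,v}=\sum_{a\in A_0}(K)_{u,a}(K^*)_{a,v}=\frac{1}{\sqrt{\tilde d(u)\tilde d(v)}}\,\bigl|\{a\in A_0:\,t(a)=u=v\}\bigr|$, which is $0$ for $u\neq v$ and $\frac{d(u)}{\tilde d(u)}$ for $u=v$ since $u$ has exactly $d(u)$ incoming arcs \emph{inside $G_0$}; that is $D$. For $KSK^*$: $(KSK^*)_{u,v}=\sum_{a,b\in A_0}(K)_{u,a}(S)_{a,b}(K^*)_{b,v}=\sum_{a\in A_0:\,t(a)=u}(K^*)_{\bar a,v}=\frac{1}{\sqrt{\tilde d(u)\tilde d(v)}}\,\bigl|\{a\in A_0:\,t(a)=u,\ t(\bar a)=v\}\bigr|=\frac{1}{\sqrt{\tilde d(u)\tilde d(v)}}\,\bigl|\{a\in A_0:\,t(a)=u,\ o(a)=v\}\bigr|$. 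Assuming $G_0$ is simple this count is $1$ iff $u\sim v$ in $G_0$ and $0$ otherwise (with a harmless diagonal convention if loops are excluded), which is exactly the operator $T$ as defined. I would also remark that $T$ is self-adjoint and sub-stochastic precisely because $\tilde d(u)>d(u)$ at the boundary vertices, justifying the name ``Dirichlet random walk with boundary $\delta V$.''

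I do not anticipate a real obstacle here: all three claims are finite linear-algebra identities. The only thing requiring care — and worth a sentence in the proof — is consistently tracking that $K$ carries the weight $1/\sqrt{\tilde d(u)}$ from the ambient graph $\tilde G$ while the arc/adjacency counts are taken \emph{inside} $G_0$; this mismatch is not an error but the source of the boundary effect, and it is what makes $E_{PON}$ a genuine compression (non-unitary, non-normal) rather than the full Grover unitary. If $G_0$ is allowed to have multi-edges or loops one should state the convention, but the formulas go through with the obvious reading of the incidence counts.
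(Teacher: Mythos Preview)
Your proposal is correct: each of the three identities is verified by the straightforward matrix-element computation you outline, and your remark about the sum in $E_{PON}$ running only over $A_0$ (because of the compression $\chi U\chi^*$) is exactly the point that needs to be said. The paper itself does not spell out a proof of this lemma at all---it is stated and then immediately used---so your direct verification is precisely the intended argument, not a different route.
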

%%%%%
Therefore if we take the product of $K^*$ \red{and} $SK^*$ from the right to $E_{PON}$, 
we can see a relatively familiar self-adjoint operator, $T$, and the 
almost similar to the identity operator except \red{on} the boundaries, $D$, \red{respectively}. 
Indeed we have the following lemma using the above properties.
%%%%%
\begin{lemma}\label{lem:fromPONtoGON}
Let $L$ be the $2|V_0|\times |A_0|$ matrix such that $L=\begin{bmatrix} K^* & SK^* \end{bmatrix}$. 
Then we have 
	\[E_{PON}L=LE_{GON},\]
where 
	\[ E_{GON}=\begin{bmatrix} 0 & -I_{|V_0|} \\ 2D-I_{|V_0|} & 2T \end{bmatrix}. \]
\end{lemma}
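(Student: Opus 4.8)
The plan is to verify $E_{PON}L=LE_{GON}$ blockwise. Writing $L=[K^*\ SK^*]$ as a block row and expanding the right-hand side,
\[ LE_{GON}=[K^*\ SK^*]\begin{bmatrix} 0 & -I_{|V_0|} \\ 2D-I_{|V_0|} & 2T \end{bmatrix}=\bigl[\,SK^*(2D-I)\ \ \ -K^*+2SK^*T\,\bigr], \]
so the claim is equivalent to the two identities $E_{PON}K^*=SK^*(2D-I)$ and $E_{PON}SK^*=-K^*+2SK^*T$, obtained by reading off the two block columns. Both will be pure algebra, using only the three formulas of Lemma~\ref{lem:propertyK} and the elementary fact that the flip $S$ is an involution, $S^2=I$ (indeed $(S^2\psi)(a)=\psi(\bar{\bar a})=\psi(a)$).

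For the first identity I would substitute $E_{PON}=S(2K^*K-I)$ from Lemma~\ref{lem:propertyK}(1) and push $K^*$ through, using $KK^*=D$ from Lemma~\ref{lem:propertyK}(2) and associativity:
\[ E_{PON}K^*=S(2K^*KK^*-K^*)=S(2K^*D-K^*)=SK^*(2D-I). \]
For the second, the same substitution together with $S^2=I$ and $KSK^*=T$ from Lemma~\ref{lem:propertyK}(3) gives
\[ E_{PON}SK^*=S(2K^*K-I)SK^*=2SK^*(KSK^*)-S^2K^*=2SK^*T-K^*. \]
Comparing these with the two block columns displayed above completes the proof.

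There is no genuine analytic obstacle here beyond bookkeeping: one must keep track of which factor acts on $\mathbb{C}^{A_0}$ (namely $S$) and which on $\mathbb{C}^{V_0}$ (namely $D$ and $T$), so that composites such as $SK^*D$ and $SK^*T$ are composed in the correct order, and one must remember that $S$ is its own inverse, so that $S^2K^*=K^*$. Conceptually the content of the lemma is that the three a priori separate relations of Lemma~\ref{lem:propertyK} package into a single intertwining relation between the $|A_0|$-dimensional arc evolution $E_{PON}$ and the $2|V_0|$-dimensional vertex operator $E_{GON}$; this is exactly what will allow the spectral analysis — in particular the determination of $\mathcal{H}_c$ — to be transported from the arc picture to the more tractable vertex picture built out of the Dirichlet random walk $T$ and the boundary multiplier $D$.
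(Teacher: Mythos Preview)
Your proof is correct and is exactly the intended argument: the paper does not spell out a proof of this lemma, leaving it as an immediate consequence of the three identities in Lemma~\ref{lem:propertyK}, which is precisely the blockwise verification you carry out. The only quibble is cosmetic: the paper states the size of $L$ as $2|V_0|\times |A_0|$, but as you implicitly recognized, for $E_{PON}L$ and $LE_{GON}$ to be defined $L$ must be $|A_0|\times 2|V_0|$, consistent with $L\,{}^T[f\ g]=K^*f+SK^*g$ used later in the paper.
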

%%%%%
%The first equality derives from a standard argument, the second one dervise from the following argument.
%It is trivial $\ker(K)\cap \ker(KS) \subset \left\{ \ker(1-S)\cap \ker(K)\right \} \oplus \left\{ \ker(1+S)\cap \ker(K)\right \}$ and 
%the opposite inclusion is shown by $\ker(1-S)\cap \ker(K)\subset \ker(KS)$ and $\ker(1+S)\cap \ker(K)\subset \ker(KS)$. 
%
We define $\mathcal{L}=\{K^*f+SK^*g \;|\;f,g\in \mathbb{C}^{|V_0|}  \}\subset \mathbb{C}^{|A_0|}$. 
Then Lemma~\ref{lem:fromPONtoGON} immediately implies $E_{PON}(\mathcal{L})\subset \mathcal{L}$. 
On the other hand, since $E_{GON}$ is an invertible $2|V_0|\times 2|V_0|$ matrix, 
then $E_{PON}(\mathcal{L})\supset \mathcal{L}$. 
Thus the following lemma holds. 
\begin{lemma}\label{lem:invariant}
The subspace $\mathcal{L}$ is invariant subspace under the action of $E_{PON}$; that is, 
	\[ E_{PON}(\mathcal{L})=\mathcal{L}.\]
\end{lemma}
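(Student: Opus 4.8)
The plan is to identify $\mathcal{L}$ with the column space (image) of the rectangular matrix $L=\begin{bmatrix} K^* & SK^* \end{bmatrix}$, i.e. $\mathcal{L}=\{L\bs{v} \;|\; \bs{v}\in \mathbb{C}^{2|V_0|}\}$, where a column vector $\bs{v}$ is split into an upper and a lower block playing the roles of $f$ and $g$ in the definition of $\mathcal{L}$. Once $\mathcal{L}$ is written this way, the assertion $E_{PON}(\mathcal{L})=\mathcal{L}$ reduces to a short algebraic manipulation of the intertwining relation $E_{PON}L=LE_{GON}$ supplied by Lemma~\ref{lem:fromPONtoGON}, combined with the invertibility of the $2|V_0|\times 2|V_0|$ matrix $E_{GON}$. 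I want to stress that no spectral information about $E_{PON}$ itself is used: $E_{PON}$ is a non-normal contraction and need not be invertible on $\mathbb{C}^{|A_0|}$; the point is only that $E_{PON}$ acts bijectively on the distinguished subspace $\mathcal{L}$.

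First I would prove $E_{PON}(\mathcal{L})\subseteq\mathcal{L}$: for any $h=L\bs{v}\in\mathcal{L}$, Lemma~\ref{lem:fromPONtoGON} gives
\[ E_{PON}h=(E_{PON}L)\bs{v}=(LE_{GON})\bs{v}=L(E_{GON}\bs{v})\in \mathcal{L}. \]
Then I would prove the reverse inclusion $\mathcal{L}\subseteq E_{PON}(\mathcal{L})$ by solving backwards: given $h=L\bs{v}\in\mathcal{L}$, set $h':=L(E_{GON}^{-1}\bs{v})$, which belongs to $\mathcal{L}$ because $E_{GON}$ is invertible, and compute
\[ E_{PON}h'=(E_{PON}L)(E_{GON}^{-1}\bs{v})=(LE_{GON})(E_{GON}^{-1}\bs{v})=L\bs{v}=h. \]
Taken together, these two inclusions yield $E_{PON}(\mathcal{L})=\mathcal{L}$, and in fact show that $E_{PON}$ restricts to a surjection of the finite-dimensional space $\mathcal{L}$ onto itself, hence to an automorphism of $\mathcal{L}$ (so that the later manipulations with $E_{PON}|_{\mathcal{H}_s}$ that discard $\mathcal{L}^\perp$-type components are legitimate).

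I do not expect a serious obstacle here. Everything rests on the already-established intertwining identity of Lemma~\ref{lem:fromPONtoGON} and on the invertibility of $E_{GON}$, the latter being read off from its explicit block form $E_{GON}=\begin{bmatrix} 0 & -I_{|V_0|} \\ 2D-I_{|V_0|} & 2T \end{bmatrix}$ together with the properties of $D$ and $T$ in Lemma~\ref{lem:propertyK}. The only bookkeeping that needs care is to keep straight the two ambient dimensions, $|A_0|$ (on which $E_{PON}$ and the elements of $\mathcal{L}$ live) and $2|V_0|$ (on which $E_{GON}$ acts), so that the products $E_{PON}L$ and $LE_{GON}$ are associated on the correct sides; after that the proof is exactly the pair of three-term computations displayed above.
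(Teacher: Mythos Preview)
Your proposal is correct and follows essentially the same route as the paper: the paper also derives $E_{PON}(\mathcal{L})\subset\mathcal{L}$ directly from the intertwining relation $E_{PON}L=LE_{GON}$ of Lemma~\ref{lem:fromPONtoGON}, and then obtains the reverse inclusion $E_{PON}(\mathcal{L})\supset\mathcal{L}$ from the invertibility of the $2|V_0|\times 2|V_0|$ matrix $E_{GON}$. Your write-up simply makes explicit the preimage $h'=L(E_{GON}^{-1}\bs{v})$ that the paper leaves implicit.
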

When $E_{PON}$ would be a normal operator, $\mathcal{L}^\perp$ is the complement invariant subspace of $\mathcal{L}$ automatically. 
Now the normality of $E_{PON}$ does not hold. 
However without the normality of $E_{PON}$, the following statement still holds. 
\begin{lemma}\label{lem:invariantcomplement}
$E_{PON}$ can be decomposed into $E_{PON}=E_{PON}|_{\mathcal{L}}\oplus E_{PON}|_{\mathcal{L}^\perp}$, 
that is, 
the orthogonal complement subspace of $\mathcal{L}$ is the invariant subspace with respect to $E_{PON}$. 
More precisely, 
\[ E_{PON}(\mathcal{L}^\perp)=\mathcal{L}^\perp.  \]
\end{lemma}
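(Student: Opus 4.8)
The plan is to exploit the factorisation $E_{PON}=S(2K^*K-I)$ from Lemma~\ref{lem:propertyK}, in which $S$ is a self-adjoint unitary ($S^{*}=S$, $S^{2}=I$) and $M:=2K^*K-I$ is self-adjoint. Writing $\mathcal{L}=\mathrm{Range}(K^*)+\mathrm{Range}(SK^*)$, the two elementary facts I would establish first are: (i) $S\mathcal{L}=\mathcal{L}$, which is immediate from the definition of $\mathcal{L}$ since $S(K^*f+SK^*g)=SK^*f+K^*g$ (so $S$ only interchanges the roles of $f$ and $g$), and, $S$ being orthogonal, this also gives $S\mathcal{L}^{\perp}=\mathcal{L}^{\perp}$; and (ii) $M\mathcal{L}\subseteq\mathcal{L}$, which follows from the short computation $M(K^*f)=K^*(2D-I)f$ and $M(SK^*g)=2K^*Tg-SK^*g$, using $KK^*=D$ and $KSK^*=T$ from Lemma~\ref{lem:propertyK}. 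Because $M$ is self-adjoint, (ii) at once yields $M\mathcal{L}^{\perp}\subseteq\mathcal{L}^{\perp}$ as well, and hence $E_{PON}\mathcal{L}^{\perp}=SM\mathcal{L}^{\perp}\subseteq S\mathcal{L}^{\perp}=\mathcal{L}^{\perp}$. Together with Lemma~\ref{lem:invariant} this already gives the orthogonal direct-sum decomposition $E_{PON}=E_{PON}|_{\mathcal{L}}\oplus E_{PON}|_{\mathcal{L}^{\perp}}$.

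The remaining point, and the step I expect to be the only real obstacle, is upgrading the inclusion $E_{PON}\mathcal{L}^{\perp}\subseteq\mathcal{L}^{\perp}$ to the stated equality; equivalently, one must show that $M$ restricted to $\mathcal{L}^{\perp}$ is surjective. Here the key observation is that $\ker M\subseteq\mathcal{L}$: if $Mv=0$ then $K^*Kv=\tfrac12 v$, hence $v=2K^*Kv=K^*(2Kv)\in\mathrm{Range}(K^*)\subseteq\mathcal{L}$. (In other words, a vector in $\ker M$ is a $\tfrac12$-eigenvector of $K^*K$, so it has nonzero eigenvalue and therefore lies in the range of $K^*K$.) Consequently $\ker M\cap\mathcal{L}^{\perp}=\{0\}$, so the self-adjoint operator $M|_{\mathcal{L}^{\perp}}\colon\mathcal{L}^{\perp}\to\mathcal{L}^{\perp}$ is injective and hence bijective. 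Therefore $E_{PON}\mathcal{L}^{\perp}=S\bigl(M\mathcal{L}^{\perp}\bigr)=S\mathcal{L}^{\perp}=\mathcal{L}^{\perp}$, which is the assertion.

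Two comments on the plan. First, the equality could instead be deduced from Lemma~\ref{lem:invariant}: from $E_{PON}\mathcal{L}=\mathcal{L}$ and $S\mathcal{L}=\mathcal{L}$ one gets $M\mathcal{L}=\mathcal{L}$, and combined with $\ker M\subseteq\mathcal{L}$ this forces $\ker M=\{0\}$, so $M$ (hence $E_{PON}=SM$) is invertible on the whole space, after which a dimension count closes the argument; the route above is preferable only in that it is self-contained. Second, the argument nowhere uses normality of $E_{PON}$: everything is driven by the self-adjointness of $S$ and of $M=2K^*K-I$ together with the inclusion $\ker M\subseteq\mathrm{Range}(K^*)$, which is precisely why the orthogonal complement of $\mathcal{L}$ remains an invariant — and in fact bijectively invariant — subspace for the non-normal operator $E_{PON}$.
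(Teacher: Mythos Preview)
Your argument is correct but takes a genuinely different route from the paper. The paper proceeds by explicitly identifying $\mathcal{L}^\perp=\ker K\cap\ker(KS)$ and then decomposing it as $\bigl(\ker(S+1)\cap\ker K\bigr)\oplus\bigl(\ker(S-1)\cap\ker K\bigr)$; a direct check then shows that $E_{PON}$ acts as $+1$ on the first summand and $-1$ on the second, so $\mathcal{L}^\perp$ is not merely invariant but is an eigenspace of $E_{PON}$ with eigenvalues $\pm 1$. Your approach instead exploits the factorisation $E_{PON}=SM$ with $S$ self-adjoint unitary and $M=2K^*K-I$ self-adjoint, and deduces invariance abstractly from $M\mathcal{L}\subseteq\mathcal{L}$ and $S\mathcal{L}=\mathcal{L}$, then upgrades to equality via $\ker M\subseteq\mathrm{Range}(K^*)$. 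The trade-off: your proof is cleaner and more portable (it would work for any pair of self-adjoint factors satisfying these containments), but the paper's argument yields the explicit eigenstructure $E_{PON}|_{\mathcal{L}^\perp}=1\oplus(-1)$ as a free by-product, and this is used immediately afterwards to conclude $\mathcal{L}^\perp\subset\mathcal{H}_c$ and later in Theorem~\ref{lem:kerEPON-lambda}. If you adopt your route, you will still need to supply that eigenvalue computation separately.
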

\begin{proof}
The orthogonal complement $\mathcal{L}^\perp$ is expressed by 
	\begin{align} 
        \mathcal{L}^\perp
        & = \ker(K)\cap \ker(KS) \notag \\
        & = \left\{ \ker(S+1)\cap \ker(K)\right \} \oplus \left\{ \ker(S-1)\cap \ker(K)\right \}  \label{eq:machiko}
        \end{align}
%We put $\mathcal{C}_\pm :=\ker(1\pm S)\cap \ker(K)$. 
It is easy to confirm that, for any $\psi_{\pm}\in \ker(S\pm 1)\cap \ker(K)$, $E_{PON}\psi_\pm=\pm\psi_\pm$ hold, respectively. 
Then $\mathcal{L}^\perp$ is an invariant subspace under the action of $E_{PON}$, that is, 
$E_{PON}(\mathcal{L}^\perp)=\mathcal{L}^\perp$. 
\end{proof}
Under the decomposition of $\mathcal{L}^\perp=\left\{ \ker(1-S)\cap \ker(K)\right \} \oplus \left\{ \ker(1+S)\cap \ker(K)\right \}$, 
we have $U_{\mathcal{L}^\perp}=-1 \oplus 1$ 
\red{and then} $\mathcal{L}^\perp\subset \mathcal{H}_c$. 
%So to show the existence of  ``$\psi_\infty$", we need to show $\psi_n\in \mathcal{L}$ since $\mathcal{L}^\perp \subset \mathcal{H}_c$. 
Since the external source $\rho$ is expressed by
	\begin{equation}\label{eq:extarnal_sourse} 
        \rho=\sum_{u_j\in \delta V} \frac{\gamma_{u_j} }{\sqrt{\tilde{d}(u_j)}}SK^*\delta_{u_j}, 
        \end{equation}
then we have $\rho\in \mathcal{L}$, 
\red{where for any $u\in V_0$, $\delta_u\in \mathbb{C}^{V_0}$ is the characteristic vector of $u$ such that
	\[ \delta_u(v)=\begin{cases} 1 & \text{: $u=v$,}\\ 0 & \text{: $u\neq v$.} \end{cases} \]}
Therefore by Lemma~\ref{lem:invariant}, we have $\psi_n\in \mathcal{L}$ for any $n\geq 0$ which is consistent with the previous section.
Remark that in the previous example of $G_0=C_3$, the eigenvalue $(+1)$ comes from this eigenspace $\mathcal{L}^\perp$. 

Now we can concentrate on the subspace $\mathcal{L}$. 
By Lemma~\ref{lem:fromPONtoGON}, the eigenequation $E_{PON}|_\mathcal{L}\psi=\lambda \psi$ for $\psi=K^*f+SK^*g$ is switched to 
	\[ L(\lambda-E_{GON})\begin{bmatrix}f \\ g \end{bmatrix}=0. \]
We give a useful characterization of $\ker L$. 
\begin{lemma}
	\[ \ker(L)=\ker(1-E_{GON}^2). \]
\end{lemma}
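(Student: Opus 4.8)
The plan is to prove the two inclusions separately, working directly with the block structure of $L = \begin{bmatrix} K^* & SK^* \end{bmatrix}$ and of $E_{GON}$, and exploiting the identities in Lemma~\ref{lem:propertyK}, namely $KK^* = D$, $KSK^* = T$, $S^2 = I$, and the relation $E_{PON}L = L E_{GON}$ from Lemma~\ref{lem:fromPONtoGON}. First I would compute $E_{GON}^2$ explicitly. Writing $E_{GON} = \begin{bmatrix} 0 & -I \\ 2D-I & 2T \end{bmatrix}$, one gets
\[
E_{GON}^2 = \begin{bmatrix} -(2D-I) & -2T \\ 2T(2D-I) & (2D-I) + 4T^2 \end{bmatrix},
\]
so $1 - E_{GON}^2 = \begin{bmatrix} 2D & 2T \\ -2T(2D-I) & -(2D-I) - 4T^2 \end{bmatrix}$ (up to arranging signs carefully when I do the computation in full). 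The key algebraic fact I expect to need is that $L^*L$, or rather the Gram-type operator built from $K$ and $SK^*$, is closely tied to $1 - E_{GON}^2$; concretely, a vector ${}^T[f,g]$ lies in $\ker L$ iff $K^*f = -SK^*g$, and applying $K$ and $KS$ to this relation turns it into the pair of scalar (vertex-space) equations $Df = -Tg$ and $Tf = -(2D-I)g$ — exactly the rows of $(1-E_{GON}^2)\,{}^T[f,g] = 0$ after normalization. That correspondence is the heart of the argument.

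For the inclusion $\ker L \subset \ker(1 - E_{GON}^2)$: suppose $L\,{}^T[f,g] = K^*f + SK^*g = 0$. Hitting this with $K$ gives $KK^*f + KSK^*g = Df + Tg = 0$; hitting it with $KS$ and using $S^2 = I$, $S^* = S$ gives $KSK^*f + KK^*g = Tf + Dg = 0$. I would then check that these two relations, combined with the structural identity $E_{PON}L = LE_{GON}$ (applied to ${}^T[f,g]$, which gives $L E_{GON}\,{}^T[f,g] = 0$, i.e.\ $E_{GON}\,{}^T[f,g] \in \ker L$ as well), force $E_{GON}^2\,{}^T[f,g] = {}^T[f,g]$. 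The cleanest route is: $\ker L$ is $E_{GON}$-invariant (since $E_{PON}L = L E_{GON}$ and $E_{GON}$ is invertible, the argument of Lemma~\ref{lem:invariant} applies), and on $\ker L$ the operator $E_{GON}$ acts as an involution — this I would verify by showing $E_{GON}^2$ restricted to $\ker L$ is the identity, using $Df = -Tg$, $Dg = -Tf$ together with the explicit form of $E_{GON}^2$ above.

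For the reverse inclusion $\ker(1 - E_{GON}^2) \subset \ker L$: take ${}^T[f,g]$ with $E_{GON}^2\,{}^T[f,g] = {}^T[f,g]$. Reading off the top block of $(1 - E_{GON}^2)\,{}^T[f,g] = 0$ should give (after the bookkeeping) $Df + Tg = 0$, i.e.\ $KK^*f + KSK^*g = 0$, i.e.\ $K(K^*f + SK^*g) = 0$. Reading off the bottom block similarly gives $KS(K^*f + SK^*g) = 0$. Hence $\psi := K^*f + SK^*g \in \ker K \cap \ker(KS) = \mathcal{L}^\perp$ by the identity~(\ref{eq:machiko}). But $\psi \in \mathcal{L}$ by construction ($\psi = K^*f + SK^*g$ is manifestly of the form defining $\mathcal{L}$), and $\mathcal{L} \cap \mathcal{L}^\perp = \{0\}$, so $\psi = 0$, i.e.\ $L\,{}^T[f,g] = 0$. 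The main obstacle I anticipate is purely computational: getting the block entries of $E_{GON}^2$ and the signs in $1 - E_{GON}^2$ exactly right, and matching them against $Df + Tg$ and $Tf + Dg$ with the correct normalization factors coming from the $1/\sqrt{\tilde d(u)}$ in the definition of $K$ — but conceptually the argument is just "pair $K$ with $K^*$ and $KS$ with $SK^*$, then invoke $\mathcal{L} \cap \mathcal{L}^\perp = 0$."
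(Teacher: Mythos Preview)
Your proposal is correct and follows essentially the same route as the paper: for $\ker L\subset\ker(1-E_{GON}^2)$ you apply $K$ and $KS$ to $K^*f+SK^*g=0$ to obtain $Df+Tg=0$ and $Tf+Dg=0$, and for the reverse inclusion you use $\psi=K^*f+SK^*g\in\mathcal{L}\cap\mathcal{L}^\perp=\{0\}$. The paper packages both directions a bit more economically by observing, via row reduction, that $\ker(1-E_{GON}^2)=\ker\begin{bmatrix}D & T\\ T & D\end{bmatrix}$ outright (the bottom row $-2T(2D-I)f+(2D-4T^2)g$ reduces to $2(Tf+Dg)$ once $Df+Tg=0$ is imposed), which makes your detour through $E_{GON}$-invariance of $\ker L$ unnecessary.
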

\begin{proof}
For any ${}^T[f\;g]\in \ker(L)$, it holds $K^*f+SK^*g=0$. 
Then taking product of $K$ and $KS$ from the \red{left}, we have 
	\[ Df+Tg=0,\;\;Tf+Dg=0,  \]
respectively. 
By the Gaussian elimination process, we have 
	\[ \ker(1-E_{GON}^2)=\ker\begin{bmatrix} D & T \\ T & D \end{bmatrix}. \]
Then we have ${}^T[f\;g]\in \ker(1-E_{GON}^{\red{2}})$. 
On the other hand, for any ${}^T[f\;g]\in \ker(1-E_{GON}^{\red{2}})$, it holds 
	\[ K(K^*f+SK^*g)=0,\;\;KS(K^*f+SK^*g)=0, \]
which is equivalent to $K^*f+SK^*g\in \mathcal{L}^\perp$. 
\red{With the fact} $K^*f+SK^*g\in \mathcal{L}$ by definition of $\mathcal{L}$, 
\red{we have} $K^*f+SK^*g=0$. Thus ${}^T[f\;g]\in \ker(L)$. 
\end{proof}
Therefore the eigenequation $(\lambda-E_{PON}|_\mathcal{L})\psi=0$ is equivalent 
to solving the following eigenequation
	\[ (1-E_{GON}^2)(\lambda-E_{GON})\phi=0, \;\;\phi\notin \ker(1-E_{GON}^2). \]

If $\lambda=\pm 1$, then $\phi\in \ker(1\mp E_{GON})^2\setminus \ker(1\mp E_{GON})$. 
However we will show in Lemma~\ref{lem:symmetricity} that this case can be also eliminated; that is, $\ker(1\mp E_{PON}|_{\mathcal{L}})=\{0\}$. 
Then $L\phi=0$.

If $\lambda\neq \pm 1$, then by Lemma~\ref{lem:fromPONtoGON}
	\begin{align}
        E_{PON}|_{\mathcal{L}}\psi &= \lambda \psi \notag\\
        & \Leftrightarrow (E_{GON}-\lambda) \begin{bmatrix} f_\lambda \\ g_\lambda \end{bmatrix}=0,\;\;\psi=K^*f_\lambda+K^*Sg_\lambda \notag \\
        & \Leftrightarrow f_\lambda=-\lambda^{-1}g_\lambda,\;g_\lambda\in\ker(\lambda^2-2\lambda T+(2D-1)). \label{eq:non-linear_eigen_eq}
        \end{align}
Thus if we could solve the eigenequation $\det(\lambda^2-2\lambda T+(2D-1))=0$ with respect to $\lambda$, 
we would obtain the spectrum of $E_{PON}$ directly, 
but it is hard to directly find an effective expression for the solution in our impression 
although we will use the expression (\ref{eq:non-linear_eigen_eq}) later. 

Then from now on, in the last half of this discussion, we take some consideration \red{on} $E_{PON}$ directly, 
without the consideration \red{on} $E_{GON}$, and finally we combine this consideration with (\ref{eq:non-linear_eigen_eq}), and address to show that 
all the non-negligible eigenstates of eigenvalues are $|\lambda|\lneq 1$. 
To this end, we consider the eigenequation 
\[ E_{PON}\varphi_\lambda=\lambda\varphi_\lambda \;(|\lambda|=1) \]
and find some properties of $\varphi_\lambda$. 

First, we give the following lemma. %we state $\psi_n\in \mathcal{H}_c\oplus \mathcal{H}_s$ in the following lemma. 
\begin{lemma}
Put $C':=2K^*K-I$. 
The operator $C'$ is decomposed into $C'=\oplus_{u\in V}C'_u$ under the space decomposition of 
	\[ \mathbb{C}^{|A_0|}=\bigoplus_{u\in V_0}\spann\{\delta_a \;|\; t(a)=u\} \]
For any $u\in V_0$, the local operator $C_u'$ is expressed by 
	\[ C_u'=\frac{2}{\tilde{d}(u)}J_{\tilde{d}(u)}-I_{\tilde{d}(u)}, \]
where $J_d$ and $I_d$ are the $d$-dimensional all $1$ matrix and identity matrix, respectively. 
Then we have 
	\begin{equation}
        \sigma(C_u')=\{2d(u)/\tilde{d}(u)-1, -1\}.
%        =
%        \begin{cases}  
%        \{1,-1\} & \text{: $u\notin \delta V$,}\\ 
%        \{1-2/(d(u)+1),-1\} & \text{: $u_+\neq u_-$, $u\in \delta V$,}\\
%        \{1-2/(d(u)+2),-1\} & \text{: $u_+= u_-$, $u\in \delta V$.}
%        \end{cases}
        \end{equation}
\end{lemma}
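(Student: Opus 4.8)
The plan is to verify the three assertions about $C' := 2K^*K - I$ in turn: first the block-diagonal decomposition of $C'$, then the local form of each block $C'_u$, and finally the spectrum of that block.

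First I would observe that $K^*K$ acts locally. Indeed, for $a, b \in A_0$, one computes $(K^*K)_{a,b} = \sum_{u \in V_0} (K^*)_{a,u}(K)_{u,b} = \sum_{u : t(a)=u,\, t(b)=u} 1/\tilde d(u)$, which is nonzero precisely when $t(a) = t(b)$, in which case it equals $1/\tilde d(t(a))$. Hence $K^*K$ (and therefore $C' = 2K^*K - I$) carries each subspace $\mathcal{A}_u := \spann\{\delta_a \mid t(a) = u\}$ into itself and vanishes between distinct such subspaces; since $\mathbb{C}^{|A_0|} = \bigoplus_{u \in V_0} \mathcal{A}_u$ is an orthogonal direct sum decomposition indexed by the terminal vertex of an arc, this proves $C' = \bigoplus_{u \in V_0} C'_u$ with $C'_u := C'|_{\mathcal{A}_u}$.

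Next, I would identify $C'_u$ explicitly. The dimension of $\mathcal{A}_u$ is the number of arcs of $A_0$ with terminus $u$, which is $d(u)$ — but here one must be careful: the block size quoted in the statement is $\tilde d(u)$, so I would instead take the ambient local space to have dimension $\tilde d(u)$, matching the convention in the definition of $K$ (whose entries use $1/\sqrt{\tilde d(u)}$, i.e. $K$ is built relative to the degree in $\tilde G$). On $\mathcal{A}_u$ the operator $K^*K$ has all matrix entries equal to $1/\tilde d(u)$, i.e. $K^*K|_{\mathcal{A}_u} = (1/\tilde d(u)) J_{\tilde d(u)}$, so $C'_u = (2/\tilde d(u)) J_{\tilde d(u)} - I_{\tilde d(u)}$ as claimed.

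Finally, for the spectrum: $J_d$ is the rank-one matrix $\mathbf{1}\mathbf{1}^T$ with eigenvalue $d$ on the span of the all-ones vector and eigenvalue $0$ on its orthogonal complement. Hence $\sigma\!\big((2/\tilde d(u)) J_{\tilde d(u)}\big) = \{2\tilde d(u)/\tilde d(u),\, 0\} = \{2,\,0\}$ — wait, this would give $\{1, -1\}$ for $C'_u$, not $\{2d(u)/\tilde d(u) - 1, -1\}$; the resolution is that the relevant space is the $d(u)$-dimensional $\mathcal A_u$ sitting inside the $\tilde d(u)$-dimensional local space, so on $\mathcal A_u$ the all-ones vector (over the $d(u)$ arcs) is an eigenvector of $(2/\tilde d(u))J$ with eigenvalue $2d(u)/\tilde d(u)$, giving $C'_u$-eigenvalue $2d(u)/\tilde d(u) - 1$, while its orthogonal complement within $\mathcal A_u$ (dimension $d(u) - 1$) gives eigenvalue $-1$. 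So $\sigma(C'_u) = \{2d(u)/\tilde d(u) - 1,\, -1\}$. The main obstacle is precisely this bookkeeping between the degree $d(u)$ in $G_0$ and the degree $\tilde d(u)$ in $\tilde G$: one must pin down exactly which space $C'_u$ is viewed as acting on, since the matrix $J_{\tilde d(u)}$ in the statement and the eigenvalue $2d(u)/\tilde d(u) - 1$ only become consistent once one remembers that the arcs of $A_0$ at $u$ number $d(u)$, not $\tilde d(u)$ (the difference being the single tail-arc at $u$ when $u \in \delta V$, and zero otherwise). Everything else is the standard spectral analysis of a rank-one perturbation of the identity.
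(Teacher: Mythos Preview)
Your proposal is correct, and in fact the paper states this lemma without proof, treating it as an elementary computation. Your argument --- computing $(K^*K)_{a,b}$ directly, reading off the block-diagonal structure, and then diagonalising the rank-one perturbation $(2/\tilde d(u))J - I$ --- is exactly the intended verification.

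The one point worth flagging explicitly (which you do identify, if somewhat hesitantly): the paper's statement writes the block as $\frac{2}{\tilde d(u)}J_{\tilde d(u)} - I_{\tilde d(u)}$, but since $C'_u$ acts on $\spann\{\delta_a : a\in A_0,\ t(a)=u\}$, which has dimension $d(u)$, the subscripts on $J$ and $I$ should really be $d(u)$. Your resolution is the right one: on this $d(u)$-dimensional block, $J_{d(u)}$ has eigenvalues $d(u)$ and $0$, so $C'_u$ has eigenvalues $2d(u)/\tilde d(u)-1$ and $-1$, matching the stated spectrum. For $u\notin\delta V$ one has $d(u)=\tilde d(u)$ and the distinction is invisible; it only matters at boundary vertices, which is precisely where the paper uses the fact that $|2d(u)/\tilde d(u)-1|<1$ in the paragraph following the lemma.
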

Since $2d(u)/\tilde{d}(u)-1<1$ for $u\in \delta V$, 
we have $||C_u'\psi||\leq ||\psi||$. 
Moreover since $E_{PON}=SC'$ and $S$ \red{are} unitary, \red{the relation} 
\[ E_{PON}\varphi_\lambda=\lambda\varphi_\lambda   \]
\red{implies that} $||\varphi_\lambda||\cdot|\lambda|= ||C'\varphi_\lambda||\leq ||\varphi_\lambda||$ which implies $|\lambda|<1$. 
This is consistent with Lemma~\ref{lem:H_u=0}.

Secondly, we show $\{\pm 1\}\notin \sigma(E_{PON}|_\mathcal{L})$; that is, the derivation of $(\pm 1)$-generalized eigenspace come from 
$\mathcal{L}^\perp\subset \mathcal{H}_c$. 
To this end, we \red{give} the following lemma related to the Kirchhoff condition on the boundary. 
\begin{lemma}\label{lem:Kirchhoff_boundary}
%Assume that $d_{\pm}>1$. %$u_+\neq u_-$. 
Let $\lambda\in \sigma(E_{PON})$ with $|\lambda|=1$ 
and we set $\varphi_{\lambda}\in \mathbb{C}^{|A_0|}$ by its eigenvector.
Then 
	\[ \sum_{a\in A_0 : t(a)=u_*} \varphi_\lambda(a)= \sum_{a\in A_0 : o(a)=u_*} \varphi_\lambda(a)=0\]
for every $u_*\in \{u_\pm\}$. 
\end{lemma}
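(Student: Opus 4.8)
The plan is to invoke Lemma~\ref{lem:Koriyama}, which is exactly tailored to this situation. Since $|\lambda|=1$ and $\varphi_\lambda\in\ker(\lambda-E_{PON})$, the identity (\ref{eq:hiroshima}) shows that $\Phi:=\chi^*\varphi_\lambda$ is a genuine eigenfunction of the full unitary $U$ on $\tilde{G}$, namely $U\Phi=\lambda\Phi$, and its support is contained in $A_0$; in particular $\Phi$ vanishes on every arc of every tail. The two Kirchhoff-type identities at $u_*$ will then follow just by reading off the Grover update rule (\ref{eq:Grover_time_evolution}) at the two tail arcs incident to $u_*$.

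First, for the ``terminus'' identity, I would fix $u_*=u_j\in\delta V$ and let $e_j\in A(\mathbb{P}_j)$ be the tail arc with $t(e_j)=u_j$, so that $\bar{e}_j$ leaves $u_j$ into the tail. Evaluating $U\Phi=\lambda\Phi$ at the arc $\bar{e}_j$ via (\ref{eq:Grover_time_evolution}): the right-hand side is $\lambda\Phi(\bar{e}_j)=0$ because $\bar{e}_j\notin A_0$, whereas the left-hand side equals $-\Phi(e_j)+\frac{2}{\tilde{d}(u_j)}\sum_{b:t(b)=u_j}\Phi(b)$, where $\Phi(e_j)=0$ (also a tail arc) and in the sum only the $b\in A_0$ with $t(b)=u_j$ contribute. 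Hence $\sum_{a\in A_0:\,t(a)=u_j}\varphi_\lambda(a)=0$.

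Next, for the ``origin'' identity I would run the mirror argument with $U^*$ in place of $U$. Since $|\lambda|=1$ and $U$ is unitary, $U^*\Phi=\lambda^{-1}\Phi$, and $U^*$ has the analogous local form $(U^*\Psi)(a)=-\Psi(\bar{a})+\frac{2}{\tilde{d}(t(a))}\sum_{c:\,o(c)=t(a)}\Psi(c)$. Evaluating $U^*\Phi=\lambda^{-1}\Phi$ at the arc $e_j$ (so that $t(e_j)=u_j$), the right-hand side is $\lambda^{-1}\Phi(e_j)=0$, while the left-hand side is $-\Phi(\bar{e}_j)+\frac{2}{\tilde{d}(u_j)}\sum_{c:\,o(c)=u_j}\Phi(c)=\frac{2}{\tilde{d}(u_j)}\sum_{c\in A_0:\,o(c)=u_j}\varphi_\lambda(c)$, so $\sum_{a\in A_0:\,o(a)=u_j}\varphi_\lambda(a)=0$. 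Alternatively one can avoid $U^*$ and stay inside $\mathbb{C}^{A_0}$: from $E_{PON}=SC'$ with $C':=2K^*K-I$ and $S^2=I$ one gets $S\varphi_\lambda=\lambda^{-1}C'\varphi_\lambda$; since the terminus identity already forces $(C'\varphi_\lambda)(a)=-\varphi_\lambda(a)$ for every $a\in A_0$ with $t(a)=u_j$, summing $\varphi_\lambda(a)=\lambda^{-1}(C'\varphi_\lambda)(\bar{a})$ over $a\in A_0$ with $o(a)=u_j$ again gives $0$.

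I do not foresee a genuine obstacle; the only point demanding care is the incidence bookkeeping at $u_*$, namely that $\tilde{d}(u_*)=d(u_*)+1$ and that the single extra arc into (respectively out of) $u_*$ that does not lie in $A_0$ is precisely $e_j$ (respectively $\bar{e}_j$), on which $\Phi$ vanishes thanks to Lemma~\ref{lem:Koriyama}. Everything else is just the local Grover rule.
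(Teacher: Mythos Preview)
Your proof is correct, but it takes a different route from the paper's. For the terminus identity $\sum_{t(a)=u_*}\varphi_\lambda(a)=0$, the paper does not invoke Lemma~\ref{lem:Koriyama} or the extension $\chi^*\varphi_\lambda$; instead it argues spectrally inside $\mathbb{C}^{A_0}$. Writing $E_{PON}=SC'$ with $C'=\oplus_u C'_u$, the local block $C'_{u_*}$ at a boundary vertex has eigenvalues $2d(u_*)/\tilde d(u_*)-1$ and $-1$, the first of which has modulus strictly less than $1$ because $d(u_*)<\tilde d(u_*)$. From $\|C'\varphi_\lambda\|=\|S^{-1}\lambda\varphi_\lambda\|=\|\varphi_\lambda\|$ the paper concludes that $\varphi_\lambda$ restricted to $\{a:t(a)=u_*\}$ must lie entirely in the $(-1)$-eigenspace of $C'_{u_*}$, which is exactly the Kirchhoff condition. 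For the origin identity the paper then uses essentially the same manipulation as your alternative: from (\ref{eq:balance}) with $o(a)=u_*$ one gets $\varphi_\lambda(a)=-\lambda\varphi_\lambda(\bar a)$ for $t(a)=u_*$, and summing yields the second identity.

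Your approach has the advantage of being conceptually transparent and symmetric (the $U^*$ argument for the origin sum is pleasingly parallel), and it exploits Lemma~\ref{lem:Koriyama} exactly where its content lies. The paper's approach stays entirely in the finite-dimensional space $\mathbb{C}^{A_0}$ and isolates the mechanism---contraction of the local coin at boundary vertices---that later feeds into the characterization of $\mathcal{H}_c$. Either way the bookkeeping you flag ($\tilde d(u_*)=d(u_*)+1$, with the single extra incoming/outgoing arc being $e_j$/$\bar e_j$) is the only delicate point, and you have handled it correctly.
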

\begin{proof}
The $(2d(u)/\tilde{d}(u))$-eigenstate and $(-1)$-eigenstate of $C_u'$ are 
	\begin{align}
        \ker[(2d(u)/\tilde{d}(u)-1)-C'_u] &= \mathbb{C}{}^T[1\;\;\;1], \\
        \ker[1+C'_u] &= \{ \varphi\in \mathbb{C}^{d(u)} \;|\; \sum_{a:t(a)=u}\varphi(a)=0 \}, 
        \end{align}
respectively for any $u\in V_0$. 
If $E_{PON}\varphi_\lambda=\lambda \varphi_\lambda$ with $|\lambda|=1$, then we have 
$C'\varphi_\lambda=\lambda S\varphi_\lambda$ since $S$ is a self-adjoint unitary.
Taking the norms \red{of} both sides, we have 
	\[ ||C'\varphi_\lambda||=||\varphi_\lambda|| \]
because $|\lambda|=1$. 
To conserve the norm, $\varphi_\lambda|_{t(a)\in u_*}$ ($u_*\in \delta V$) 
%and $\varphi_\lambda|_{t(a)\in u_-}$ 
must belong to $(-1)$-eigenstate of $C_{u_*}$, %and $C_{u_-}$, respectively, 
since the absolute value of eigenvalue $2d(u_*)/\tilde{d}(u_*)-1$ for $u_*\in\delta V$ cannot be $1$ 
due to the setting of the graph with tails; $d(u_*)<\tilde{d}(u_*)$ for $u_*\in \delta V$. 
Therefore we have 
	\begin{equation} 
        \sum_{a:t(a)=u_*}\varphi_\lambda(a)=0. \label{eq:terminus}
        \end{equation}
On the other hand, $E_{PON}\varphi_\lambda=\lambda \varphi_\lambda$ implies 
	\begin{equation}\label{eq:balance} 
        \frac{1}{\tilde{d}(o(a))}\sum_{b:t(b)=o(a)}\varphi_\lambda(b)=\frac{\varphi_\lambda(\bar{a})+\lambda\varphi_\lambda(a)}{2}. 
        \end{equation}
for any $a\in A_0$ by definition of $E_{PON}$. 
So if $o(a)=u_*$, then by (\ref{eq:terminus}), the above equation is reduced to 
	\begin{equation*}
        \varphi_\lambda(\bar{a})=-\lambda\varphi_\lambda(a)
        \end{equation*}
which is equivalent to 
	\begin{equation}\label{eq:pseudo-symmeticity}
        \varphi_\lambda(a)=-\lambda\varphi_\lambda(\bar{a}),\;\;(t(a)=u_*).
        \end{equation}
Applying this to (\ref{eq:terminus}), we have 
	\[ \sum_{a:t(a)=u_*}\varphi_\lambda(a)=-\lambda \sum_{a:t(a)=u_*}\varphi_\lambda(\bar{a})=-\lambda \sum_{a:o(a)=u_*}\varphi_\lambda(a)=0. \]
Since $\lambda\neq 0$, we have 
	\[ \sum_{a:o(a)=u_*}\varphi_\lambda(a)=0. \]
\end{proof}
%
%\begin{lemma}
%Let $\lambda$ and $\varphi_\lambda$ be the above. Then we have 
%	\[ C'\varphi={C'}^{-1}\varphi_\lambda,\;\;C'S\varphi={C'}^{-1}S\varphi_\lambda. \]
%\end{lemma}
%\begin{proof}
%\end{proof}
%
Using Lemma~\ref{lem:Kirchhoff_boundary}, we obtain the following lemma. 
\begin{lemma}\label{lem:symmetricity}
Let $\lambda\in \sigma(E_{PON}|_\mathcal{L})$ with $|\lambda|=1$. Then $\lambda\notin \{\pm 1\}$.
\end{lemma}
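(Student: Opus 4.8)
The plan is to show that $\lambda = 1$ and $\lambda = -1$ cannot be eigenvalues of $E_{PON}|_{\mathcal{L}}$ by combining the boundary Kirchhoff condition from Lemma~\ref{lem:Kirchhoff_boundary} with the structure of eigenvectors lying in $\mathcal{L}$. First I would take $\lambda \in \{\pm 1\}$ with $E_{PON}\varphi_\lambda = \lambda \varphi_\lambda$ and $\varphi_\lambda \in \mathcal{L}$, so that $\varphi_\lambda = K^*f + SK^*g$ for some $f,g \in \mathbb{C}^{V_0}$; by the characterization (\ref{eq:non-linear_eigen_eq})-type argument, or directly via Lemma~\ref{lem:fromPONtoGON}, the eigenvector corresponds to a genuine (non-kernel) eigenvector $[f\ g]^T$ of $E_{GON}$. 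The idea is that the boundary relations force $\varphi_\lambda$ to vanish on all arcs incident to $\delta V$, and then to propagate this vanishing through the internal graph using the eigenequation, ultimately contradicting $\varphi_\lambda \neq 0$ (together with the fact that such an eigenvector must not lie in $\mathcal{L}^\perp$).

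The key steps, in order, would be: (i) invoke Lemma~\ref{lem:Kirchhoff_boundary} to get $\sum_{t(a)=u_*}\varphi_\lambda(a) = \sum_{o(a)=u_*}\varphi_\lambda(a) = 0$ for $u_* \in \delta V$, and the pointwise relation (\ref{eq:pseudo-symmeticity}), $\varphi_\lambda(a) = -\lambda \varphi_\lambda(\bar a)$ whenever $t(a) = u_*$; (ii) for $\lambda = 1$ this gives $\varphi_\lambda(a) = -\varphi_\lambda(\bar a)$ on boundary-incident arcs, i.e. $\varphi_\lambda$ is antisymmetric there, while $\varphi_\lambda \in \mathcal{L}$ means (via $KS = $ symmetrization) it has a symmetric component determined by $g$; pairing these, the antisymmetry at the boundary together with $\varphi_\lambda = K^*f + SK^* g$ forces $f(u_*)$ and $g(u_*)$ to be tied in a way that, combined with (\ref{eq:balance}) at the boundary, kills the boundary values; (iii) then use the balance equation (\ref{eq:balance}) at interior vertices to show the vanishing spreads: since $G_0$ is connected, once $\varphi_\lambda$ vanishes on arcs incident to $\delta V$ one argues inductively along paths that it must vanish everywhere, so $\varphi_\lambda = 0$; (iv) the case $\lambda = -1$ is handled symmetrically, using that (\ref{eq:pseudo-symmeticity}) becomes $\varphi_\lambda(a) = \varphi_\lambda(\bar a)$ at the boundary, again incompatible with membership in $\mathcal{L} \ominus \mathcal{L}^\perp$ unless $\varphi_\lambda$ is zero.

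An alternative and possibly cleaner route, which I would pursue in parallel, is spectral: translate to $E_{GON}$ via Lemma~\ref{lem:fromPONtoGON}, where $\lambda = \pm 1$ eigenvectors in $\mathcal{L}$ correspond to solutions $g_\lambda \in \ker(\lambda^2 - 2\lambda T + (2D - I))$ with $f_\lambda = -\lambda^{-1} g_\lambda$, i.e. $g_{\pm 1} \in \ker(2T + (2D - I) \mp 2I) = \ker(2T - 2(I - D) \mp 2I) $. Since for boundary vertices $D$ acts as multiplication by $d(u)/\tilde d(u) < 1$ and for interior vertices as the identity, the operator $I - D$ is supported on $\delta V$; writing the equation $Tg = (I - D \pm I)g$ and restricting to interior vertices where $D = I$ gives $(Tg)(u) = \pm g(u)$ there, which via connectedness and Lemma~\ref{lem:Kirchhoff_boundary}'s boundary vanishing forces $g$ to be a $\pm 1$-eigenvector of the full (non-Dirichlet) random walk type operator supported away from $\delta V$ — but such a vector lies in $\mathcal{L}^\perp$, hence $L\phi = 0$, contradicting $\varphi_\lambda \in \mathcal{L}$, $\varphi_\lambda \neq 0$.

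The main obstacle I anticipate is step (iii)/(iv): carefully propagating the vanishing from the boundary-incident arcs to the whole connected internal graph using only the balance equation (\ref{eq:balance}) and the global sum conditions, without accidentally using normality or self-adjointness that $E_{PON}$ lacks. One must be precise about why membership in $\mathcal{L}$ (equivalently, orthogonality to $\ker(S \pm 1) \cap \ker K$, the source of the honest $\pm 1$ eigenvectors) is exactly what rules out the surviving solutions — i.e., showing the $\pm 1$ generalized eigenspace is \emph{entirely} contained in $\mathcal{L}^\perp$. The bookkeeping of separating the $S$-symmetric and $S$-antisymmetric parts of $\varphi_\lambda$ at the boundary, and reconciling (\ref{eq:pseudo-symmeticity}) with the form $K^*f + SK^*g$, is where the real work lies; everything else is routine linear algebra given the lemmas already established.
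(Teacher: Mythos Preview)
Your proposal has a genuine gap in step (iii), and the paper's argument avoids it by a global observation you are missing.

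First, note that nothing in (i)--(ii) gives you that $\varphi_\lambda$ \emph{vanishes} on boundary-incident arcs; Lemma~\ref{lem:Kirchhoff_boundary} and (\ref{eq:pseudo-symmeticity}) only give you that the \emph{sums} $\sum_{t(a)=u_*}\varphi_\lambda(a)$ and $\sum_{o(a)=u_*}\varphi_\lambda(a)$ vanish, and that $\varphi_\lambda(a)=-\lambda\varphi_\lambda(\bar a)$ pointwise on those arcs. So the starting point for your ``propagation'' is wrong. Second, even if you had vanishing on the boundary arcs, an inductive propagation along paths cannot succeed: at interior vertices the eigenequation (\ref{eq:balance}) is just the ordinary Grover relation, which is satisfied by every $w_+(c)$ for a cycle $c$ avoiding $\delta V$. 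Such a vector is a nonzero $(+1)$-eigenvector of $E_{PON}$ vanishing on all boundary-incident arcs, so your propagation argument would have to rule it out --- but it lies in $\mathcal{L}^\perp$, and you never explain concretely how the hypothesis $\varphi_\lambda\in\mathcal{L}$ enters the inductive step. Your own remark that this is ``where the real work lies'' is correct; the work is not done.

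The paper's route is shorter and bypasses propagation entirely. For $\lambda=1$, the eigenequation gives
\[
\frac{\varphi_\lambda(a)+\varphi_\lambda(\bar a)}{2}
=\frac{1}{\tilde d(o(a))}\sum_{t(b)=o(a)}\varphi_\lambda(b)
=\frac{1}{\tilde d(t(a))}\sum_{t(b)=t(a)}\varphi_\lambda(b),
\]
so $u\mapsto \tilde d(u)^{-1}\sum_{t(b)=u}\varphi_\lambda(b)$ is constant along every edge, hence (by connectedness of $G_0$) equal to a single constant $c$, i.e.\ $\sum_{t(b)=u}\varphi_\lambda(b)=c\,\tilde d(u)$ for all $u$. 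Now Lemma~\ref{lem:Kirchhoff_boundary} at any $u_*\in\delta V$ forces $c=0$. This immediately yields $\varphi_\lambda(a)+\varphi_\lambda(\bar a)=0$ for all $a$ and $\sum_{t(b)=u}\varphi_\lambda(b)=0$ for all $u$, i.e.\ $\varphi_\lambda\in\ker(S+1)\cap\ker K=\mathcal{L}^\perp$, contradicting $\varphi_\lambda\in\mathcal{L}\setminus\{0\}$. The $\lambda=-1$ case is analogous. The point is that the conclusion is $\varphi_\lambda\in\mathcal{L}^\perp$, not $\varphi_\lambda=0$; the reversible-measure observation is the global substitute for your attempted local propagation.
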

\begin{proof}
\red{Assume} $\lambda=1$. Let $E_{PON}|_{\mathcal{L}}\varphi_\lambda=\varphi_\lambda$ with $\varphi_\lambda\neq 0$; \red{we shall} show a contradiction. 
For \red{the case where} $\lambda =-1$, we can show it in a similar way. 
%In the first step, we show if the above holds, then $E_{PON}S\varphi_\lambda=S\varphi_\lambda$. 
The eigenequation $E_{PON}\varphi_\lambda=\varphi_\lambda$ holds if and only if 
	\begin{align}\label{eq:DBC_seed}
        \frac{\varphi_\lambda(a)+\varphi_\lambda(\bar{a})}{2} &= \frac{1}{\tilde{d}(o(a))}\sum_{b:t(b)=o(a)}\varphi_\lambda(b)
         = \frac{1}{\tilde{d}(t(a))}\sum_{b:t(b)=t(a)}\varphi_\lambda(b) 
        \end{align}
The first equality comes from the definition of $E_{PON}$ and the second equality is obtained by changing the arc $a$ in the first equation 
to the inverse $\bar{a}$. Combining the second and third ones, we can notice that 
``$\sum_{t(a)=u}\varphi_\lambda$" is the value of the reversible measure on $u\in V_0$ of the isotropic random walk on the whole graph $\tilde{G}$.
%this is the detailed balanced condition of the isotropic random walk on the whole graph. 
Thus for any $u\in V_0$, there exists a constant $c$ which is independent of \red{the choice of} vertices such that for any $u\in V_0$, 
	\begin{equation}\label{eq:DBC} 
        \sum_{b:t(b)=u}\varphi_\lambda(b)=c\tilde{d}(u). 
        \end{equation}
Then for every $a\in A_0$, it holds that 
	\[ c= \frac{\varphi_\lambda(a)+\varphi_\lambda(\bar{a})}{2} = \frac{1}{\tilde{d}(o(a))}\sum_{b:t(b)=o(a)}\varphi_\lambda(b)
         = \frac{1}{\tilde{d}(t(a))}\sum_{b:t(b)=t(a)}\varphi_\lambda(b).  \]
By Lemma~\ref{lem:Kirchhoff_boundary}, choosing $a\in A_0$ such that $o(a)\in \delta V$, we have $c=0$, 
which implies 
	\begin{align}
        \varphi_\lambda(a)+\varphi_\lambda(\bar{a}) &= 0,\;(a\in A_0) \label{eq:ibuki}\\
        \sum_{b:t(b)=u}\varphi_\lambda(b) &= 0,\;(u\in V_0) \label{eq:akiko}.
        \end{align}
\red{Therefore (\ref{eq:ibuki}) and (\ref{eq:akiko}) imply $\varphi_\lambda\in\ker(S+1)$ and $\varphi_\lambda\in \ker K$, respectively. 
Then we have $\varphi_\lambda\in \mathcal{L}^\perp$ by (\ref{eq:machiko})}, which is the contradiction. 

\end{proof}

Thirdly, we show the generalized eigenspace of $E_{PON}$ with $|\lambda|=1$ are spanned by eigenvectors of $U$ restricted to $A_0$. 
To this end, now we combine the above statement obtained by arc based analysis with the fact (\ref{eq:non-linear_eigen_eq}) obtained by vertex based analysis. 
%The following lemma plays an important role to characterize the generalized eigenspace with the eigenvalue $|\lambda|=1$. 
%
\begin{lemma}\label{lem:persistency}
Let $\lambda\in \sigma(E_{PON}|_\mathcal{L})$ with $|\lambda|=1$ and $g_\lambda$ be the function defined in (\ref{eq:non-linear_eigen_eq}). 
Then we have 
	\[ g_\lambda(u_*)=0 \]
for any $u_*\in \delta V$.
\end{lemma}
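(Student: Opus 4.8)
The plan is to push the arc-based boundary condition of Lemma~\ref{lem:Kirchhoff_boundary} through the dictionary $K,D,T$ of Lemma~\ref{lem:propertyK} and read it off as a condition on the vertex function $g_\lambda$. First, since $|\lambda|=1$, Lemma~\ref{lem:symmetricity} tells us $\lambda\neq\pm1$, so we are exactly in the situation of (\ref{eq:non-linear_eigen_eq}): the eigenvector is $\varphi_\lambda=K^*f_\lambda+SK^*g_\lambda$ with $f_\lambda=-\lambda^{-1}g_\lambda$ and $g_\lambda\in\Ker(\lambda^2-2\lambda T+(2D-I))$. This is the step where having already established Lemma~\ref{lem:symmetricity} is indispensable.

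Next I would rewrite the two boundary sums of Lemma~\ref{lem:Kirchhoff_boundary} in terms of $g_\lambda$. Since $K$ is, up to the scalar $1/\sqrt{\tilde d(u)}$, the arc-to-terminus incidence operator, one has $\sum_{a\in A_0:\,t(a)=u}\varphi_\lambda(a)=\sqrt{\tilde d(u)}\,(K\varphi_\lambda)(u)$ and, because $\{\bar a:t(a)=u\}=\{b\in A_0:o(b)=u\}$, also $\sum_{a\in A_0:\,o(a)=u}\varphi_\lambda(a)=\sqrt{\tilde d(u)}\,(KS\varphi_\lambda)(u)$. Plugging in $\varphi_\lambda=K^*f_\lambda+SK^*g_\lambda$ and using $KK^*=D$, $KSK^*=T$, $S^2=I$ from Lemma~\ref{lem:propertyK} together with $f_\lambda=-\lambda^{-1}g_\lambda$ gives $K\varphi_\lambda=(T-\lambda^{-1}D)g_\lambda$ and $KS\varphi_\lambda=(D-\lambda^{-1}T)g_\lambda$.

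Now apply Lemma~\ref{lem:Kirchhoff_boundary} at a boundary vertex $u_*\in\delta V$ (the lemma is stated for the two-tail case but its proof is local at $u_*$ and applies verbatim to each $u_j$): both sums vanish, so $\big((T-\lambda^{-1}D)g_\lambda\big)(u_*)=0$ and $\big((D-\lambda^{-1}T)g_\lambda\big)(u_*)=0$. Adding these two relations and using $\lambda\neq1$ yields $\big((D+T)g_\lambda\big)(u_*)=0$; subtracting them and using $\lambda\neq-1$ yields $\big((T-D)g_\lambda\big)(u_*)=0$. Combining, $\big(Dg_\lambda\big)(u_*)=0$, i.e.\ $\tfrac{d(u_*)}{\tilde d(u_*)}g_\lambda(u_*)=0$; since $G_0$ is connected and $u_*$ carries a tail it has $d(u_*)\geq1$ (the case $|V_0|=1$ being vacuous as then $A_0=\emptyset$), so $g_\lambda(u_*)=0$.

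The computation itself is short, and the only genuine obstacle is recognizing that \emph{both} boundary relations—at the terminus and at the origin—are needed: a single one only gives $\big(Tg_\lambda\big)(u_*)=\lambda^{-1}\big(Dg_\lambda\big)(u_*)$, which by itself does not force $g_\lambda(u_*)=0$. It is precisely the exclusion $\lambda\neq\pm1$ coming from Lemma~\ref{lem:symmetricity} that makes the sum-and-difference argument close.
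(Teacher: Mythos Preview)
Your proof is correct and follows essentially the same route as the paper's: both arguments feed the two boundary identities of Lemma~\ref{lem:Kirchhoff_boundary} through the representation (\ref{eq:non-linear_eigen_eq}) to obtain two linear relations at $u_*$, and then use $\lambda\neq\pm1$ (Lemma~\ref{lem:symmetricity}) to eliminate and conclude $g_\lambda(u_*)=0$. The only cosmetic difference is that the paper carries out the two arc-sums by hand (introducing an auxiliary quantity $\eta(u_*)=\sum_{a:t(a)=u_*}g_\lambda(o(a))/\sqrt{\tilde d(o(a))}$ and eliminating it to reach $(\lambda-\lambda^{-1})g_\lambda(u_*)=0$), whereas you package the same computation in the operator identities $K\varphi_\lambda=(T-\lambda^{-1}D)g_\lambda$ and $KS\varphi_\lambda=(D-\lambda^{-1}T)g_\lambda$ and then add/subtract; the underlying linear algebra is identical.
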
 
\begin{proof}
By Lemma~\ref{lem:fromPONtoGON}, we can write 
	\begin{equation}\label{eq:expressionRW}
        \varphi_\lambda=K^*g_\lambda-\lambda SK^*g_\lambda. 
        \end{equation}
Using the Kirchhoff boundary condition on $u_*\in \delta V$ in Lemma~\ref{lem:Kirchhoff_boundary}, 
we have for the inflows to $u_*$, 
	\begin{align*} 
        \sum_{a:t(a)=u_*}\varphi_\lambda(a) 
         &= \sum_{a:t(a)=u_*}\left( \frac{g_\lambda(u_*)}{\sqrt{\tilde{d}(u_*)}}-\lambda \frac{g_\lambda(o(a))}{\sqrt{\tilde{d}(o(a))}}\right) \
         &= \frac{d(u_*)}{\sqrt{\tilde{d}(u_*)}}g_\lambda(u_*)-\lambda \eta(u_*)=0,
        \end{align*}
where $\eta(u_*) := \sum_{a:t(a)=u_*}g_\lambda(o(a))/\sqrt{\tilde{d}(o(a))}$. 
On the other hand, for the outflows from $u_*$, 
	\[ \sum_{a:o(a)=u_*}\varphi_\lambda(a)=-\lambda \frac{d(u_*)}{\sqrt{\tilde{d}(u_*)}}g_\lambda(u_*)+ \eta(u_*)=0. \]
The above two equations provide $(\lambda-\lambda^{-1})g(u_*)=0$. Since $\lambda\neq \pm 1$ by Lemma~\ref{lem:symmetricity}, 
we conclude that $g(u_*)=0$. 
\end{proof}

\red{Let $\Gamma$ be the set of all fundamental cycles in $G_0$; in particular, $\Gamma_{even}$ and $\Gamma_{odd}$ are the subsets of $\Gamma$ of even and odd length, respectively. 
Let $w_+:\Gamma\to \mathbb{C}^{A_0}$ be 
	\[ (w_+(\xi))(a_j)=1,\;\;(w_+(\xi))(\bar{a}_j)=-1 \]
for a fundamental cycle $\xi=(a_1,\dots,a_r)\in \Gamma$ with $t(a_1)=o(a_2),\dots,t(a_{r-1})=o(a_r)$ and $t(a_r)=o(a_1)$, otherwise $(w_+(\xi))(e)=0$. 
Moreover let $w_-:\Gamma_{even}\to \mathbb{C}^{A_0}$ be 
	\[ (w_-(\xi))(a_j)=(w_-(\xi))(\bar{a}_j)=(-1)^j \]
for an even length fundamental cycle $\xi=(a_1,\dots,a_r)\in \Gamma_{even}$; otherwise $(w_-(\xi))(e)=0$. 
It is easy to check that 
	\[ E_{PON}w_{\pm}(\xi)= \pm w_{\pm}(\xi). \]
Furthermore if $|\Gamma_{odd}|\geq 2$, then we fix an odd cycle $c_0$. For any other cycle $c$ of add length, we can find a closed path 
constructed by $c_0$ and $c$, say $c_0-c$. 
Here if $c_0$ and $c$ have some common vertex, then we find a closed path of even length $\tilde{c}=(\gamma_1,\dots,\gamma_{2n})$ such that 
$\gamma_i\neq \gamma_j,\bar{\gamma}_j$ if $i\neq j$. 
Thus we can define $w_-(c_0-c)$ as stated above. 
If $c_0$ and $c$ are disjoint, then there exists a path $p=(e_1,\dots,e_m)$ such that 
	\[ V(c_0)\cap V(P)=o(e_1),\;\; V(c)\cap V(P)=t(e_m). \]
We may set $o(e_1)=o(a_1)$ and $t(e_m)=o(b_1)$, where $c_0=(a_1,a_2,\dots,a_{2n-1})$ and $c=(b_1,b_2,\dots,b_{2\ell-1})$.
Then we define $w_-(c_0-c)$ as 
	\[ (w_-(c_0-c))(e)=\begin{cases} (-1)^i & \text{: $e=a_j$}\\ 2(-1)^{j-1} & \text{: $e=e_j$}\\ (-1)^{m-1+k}  & \text{: $e=b_k$}\\ 0 & \text{: otherwise}\end{cases} \]
and $(w_-(c_0-c))(\bar{e})=(w_-(c_0-c))(e)$. 
We can easily check that 
	\[ E_{PON}w_{-}(c_0-c)= - w_{-}(c_0-c). \]
}

We obtain the following theorem. 
\begin{theorem}\label{lem:kerEPON-lambda}
Let $\sigma_{per}:=\{ x\in \mathbb{R} \;|\; g\in \ker(x-T)\setminus\{0\},\; g(u)=0\;(u\in \delta V) \}\subset \sigma(T)$. 
Let us define the following three subspaces $\mathcal{C}_\pm$, $\mathcal{T}_{per} \subset \mathbb{C}^{A_0}$ 
such that  
\begin{align*}
\mathcal{C}_+ &:= \red{\mathrm{span}}\{ w_+(c) \;|\; c\in \Gamma \}; \\
\mathcal{C}_- &:= \red{\mathrm{span}}\{w_-(c) \;|\; c\in \Gamma_{even}\} \cup \{w_-(c_o-c) \;|\; c\in \Gamma_{odd}\setminus\{c_o\}\}; \\
\mathcal{T}_{per} &:= \bigoplus_{x\in \sigma_{per}}\{K^*g-e^{\pm i \arccos x} SK^*g \;|\; g\in \ker(x-T)\}.
\end{align*}
Then for the Grover walk case, 
the center generalized eigenspace $\mathcal{H}_c$ is expressed as follows: 
	\[ \mathcal{H}_c=\mathcal{C}_+ \oplus  \mathcal{C}_- \oplus \mathcal{T}_{per}. \]
Under this decomposition, the eigenvalues of the first and second terms are $\pm 1$ while those of the finial term are $\lambda$
with $|\lambda|=1$ and $\lambda \neq \pm 1$.

%All the eigenvectors (not generalized eigenvectors) with the eigenvalue $|\lambda|=1$ of $E_{PON}$ are all the (cut off) eigenvectors of $U$
%whose supports are included in $A_0$, that is, for every $\lambda$ with $|\lambda|=1$, 
%	\begin{equation} 
%        \ker(E_{PON}-\lambda)=\{ \psi'\in \mathbb{C}^{|A_0|} \;|\; \psi\in \ker(U-\lambda)\}, 
%        \end{equation}
%where $\psi'(a)\in \mathbb{C}^{|A_0|}$ is defined by $\psi'(a)=\psi(a)$ for any $a\in A_0$. 
\end{theorem}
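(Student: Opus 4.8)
The plan is to reduce the statement to two separate identifications by exploiting the block decomposition $E_{PON}=E_{PON}|_{\mathcal L}\oplus E_{PON}|_{\mathcal L^\perp}$ of Lemma~\ref{lem:invariantcomplement} over $\mathbb C^{A_0}=\mathcal L\oplus\mathcal L^\perp$. For a block-diagonal operator the generalised eigenspace of any $\lambda$ is the direct sum of the generalised eigenspaces of the two blocks, so $\mathcal H_c=\big(\mathcal H_c\cap\mathcal L^\perp\big)\oplus\big(\mathcal H_c\cap\mathcal L\big)$. Since $E_{PON}|_{\mathcal L^\perp}$ has only the eigenvalues $\pm1$ (Lemma~\ref{lem:invariantcomplement}), the first summand is all of $\mathcal L^\perp$; by Lemma~\ref{lem:symmetricity} the second summand involves only $\lambda$ with $|\lambda|=1$ and $\lambda\neq\pm1$, and by Lemma~\ref{eq:Fukushima} it is a genuine eigenspace, namely $\bigoplus_{|\lambda|=1,\ \lambda\neq\pm1}\ker(\lambda-E_{PON}|_{\mathcal L})$. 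It thus remains to show $\mathcal L^\perp=\mathcal C_+\oplus\mathcal C_-$ and $\mathcal H_c\cap\mathcal L=\mathcal T_{per}$.

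For the first, I would use the description $\mathcal L^\perp=\big(\ker(S-1)\cap\ker K\big)\oplus\big(\ker(S+1)\cap\ker K\big)$ from Lemma~\ref{lem:invariantcomplement}, on which $E_{PON}$ acts as $-1$ and $+1$ respectively. Restricting a function to edges identifies $\ker(S+1)\cap\ker K$ with the cycle space of $G_0$ (antisymmetric flows that are divergence-free at every internal vertex), which has dimension $|E_0|-|V_0|+1$ and for which the fundamental cycles $w_+(c)$, $c\in\Gamma$, form the standard basis; hence $\mathcal C_+=\ker(S+1)\cap\ker K$. Similarly $\ker(S-1)\cap\ker K$ is identified with the kernel of the \emph{unsigned} incidence matrix of $G_0$, of dimension $|E_0|-|V_0|+1$ if $G_0$ is bipartite and $|E_0|-|V_0|$ otherwise. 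Each $w_-(c)$ with $c\in\Gamma_{even}$ and each $w_-(c_o-c)$ with $c\in\Gamma_{odd}\setminus\{c_o\}$ already lies in this kernel (as noted in the text), and the number of these vectors is exactly the dimension above in both cases; so, provided they are linearly independent, a dimension count gives $\mathcal C_-=\ker(S-1)\cap\ker K$. Linear independence follows from the standard spanning-tree argument, tracking on which non-tree edges each of the vectors is supported.

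For $\mathcal H_c\cap\mathcal L$, let $\lambda=e^{i\theta}$, $\theta\in(0,\pi)$, be a unit-modulus eigenvalue of $E_{PON}|_{\mathcal L}$. By~(\ref{eq:non-linear_eigen_eq}) the eigenvector has the form $\varphi_\lambda=K^*g_\lambda-\lambda SK^*g_\lambda$ with $g_\lambda\in\ker(\lambda^2-2\lambda T+(2D-1))\setminus\{0\}$, and by Lemma~\ref{lem:persistency} $g_\lambda$ vanishes on $\delta V$. Evaluating the defining equation at any vertex and using $g_\lambda|_{\delta V}=0$ makes the boundary corrections in $D$ disappear, collapsing it to $(Tg_\lambda)(u)=\tfrac{\lambda+\lambda^{-1}}{2}\,g_\lambda(u)=\cos\theta\,g_\lambda(u)$ at every $u$; hence $\cos\theta\in\sigma_{per}$ and $g_\lambda\in\ker(\cos\theta-T)$ with $g_\lambda|_{\delta V}=0$. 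Conversely, given $x\in\sigma_{per}$ and such a $g$, the identity $\lambda^2-2\lambda x+1=0$ for $\lambda=e^{\pm i\arccos x}$ shows $g\in\ker(\lambda^2-2\lambda T+(2D-1))$, so $K^*g-\lambda SK^*g$ is an eigenvector of $E_{PON}|_{\mathcal L}$; it is nonzero because $K^*g-\lambda SK^*g=0$ would put ${}^T[-\lambda^{-1}g,\,g]$ in $\ker L$, forcing $x=\lambda^{-1}$, which is impossible since $\lambda\neq\pm1$ (indeed $\pm1\notin\sigma_{per}$, exactly as in Lemma~\ref{lem:symmetricity}, equivalently because the boundary forces $\pm1\notin\sigma(T)$, so $\arccos x\in(0,\pi)$). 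Running over both signs and all $x\in\sigma_{per}$, and using that eigenvectors for distinct eigenvalues are linearly independent, gives $\mathcal H_c\cap\mathcal L=\mathcal T_{per}$, with eigenvalues $e^{\pm i\arccos x}\neq\pm1$ as claimed.

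Putting the two identifications together with the block splitting gives $\mathcal H_c=\mathcal C_+\oplus\mathcal C_-\oplus\mathcal T_{per}$, with eigenvalues $+1$, $-1$ and $e^{\pm i\arccos x}$ on the three summands. The step I expect to be the real obstacle is the purely graph-theoretic one: showing that the $w_-$'s form a basis of the unsigned-incidence kernel, in particular organising the vectors $w_-(c_o-c)$ coming from two disjoint odd cycles joined by a path so that the linear-independence bookkeeping goes through uniformly.
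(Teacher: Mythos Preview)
Your argument follows essentially the same route as the paper: split $\mathcal H_c$ along $\mathcal L\oplus\mathcal L^\perp$ using Lemma~\ref{lem:invariantcomplement}; for $\lambda\neq\pm1$ combine~(\ref{eq:non-linear_eigen_eq}) with Lemma~\ref{lem:persistency} to collapse the quadratic equation to $Tg=\zeta(\lambda)g$ with $g|_{\delta V}=0$; and for $\lambda=\pm1$ identify $\ker K\cap\ker(S\mp1)$ with $\mathcal C_\pm$. The one substantive difference is in this last step: the paper simply invokes \cite{HKSS_JFA} for the identity $\mathcal C_\pm=\ker K\cap\ker(1\pm S)$, whereas you supply an elementary dimension count via the (signed and unsigned) incidence kernels. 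Your version is self-contained and makes explicit the converse inclusion for $\mathcal T_{per}$, which the paper leaves implicit; the price is exactly the bookkeeping you flag, namely the linear independence of the $w_-(c_o-c)$ vectors in the non-bipartite case, which is routine but does need to be written out. Both approaches are correct and lead to the same decomposition.
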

\begin{proof}
Let us consider the eigenequation $E_{PON}\varphi_\lambda=\lambda\varphi_\lambda$ with $|\lambda|=1$. 
By \red{Lemmas~\ref{lem:invariantcomplement} and ~\ref{lem:symmetricity}}, 
we have $\lambda=\pm 1$ if and only if $\varphi_\lambda\in \mathcal{L}^\perp\subset \mathcal{H}_c$. 
On the other hand, by (\ref{eq:non-linear_eigen_eq}), if $\lambda \neq \pm 1$, 
then $\varphi_\lambda=K^{*}g_\lambda+SK^{*}g_\lambda\in \mathcal{L}\cap \mathcal{H}_c$, 
where $g_\lambda\in \ker(\lambda^2-2\lambda T+(2D-1))$. 
%We will show that 
%if $|\lambda|=1$ with $\lambda\in \sigma(E_{PON})$, then for any $\varphi_\lambda\in \ker(\lambda-E_{PON})$, we have 
%        \begin{equation}
%        U\tilde{\varphi}_\lambda=\lambda\tilde{\varphi}_\lambda. 
%        \end{equation}
\begin{enumerate}
\item \red{the case where} $\lambda \neq \pm 1$ case. 
From Lemma~\ref{lem:persistency}, 
for $\varphi_\lambda\in \mathcal{L}$ with $|\lambda|=1$, and for any $u\in V_0$, we have 
	\[ (Tg_\lambda)(u)=\zeta(\lambda)g_\lambda(u), \]
since $g_\lambda\in \ker(\lambda^2-2\lambda T+(2D-1))$ and $\mathrm{supp}(g_\lambda)\subseteq V_0\setminus \delta V$. 
Here $\zeta(\lambda)=(\lambda+\lambda^{-1})/2$. 
%Therefore $\tilde{T}\tilde{g}_\lambda=j(\lambda)\tilde{g}_\lambda$ holds which implies 
%\[\tilde{K}^*\tilde{g}_\lambda-\lambda S\tilde{K}^*\tilde{g}_\lambda\in \ker (U-\lambda)\]
%by (\ref{eq:GroverSpec1}). 
%It is easy to check that the definition of $\tilde{K}$ and $\mathrm{supp}(g_\lambda)\subseteq V_0\setminus \delta V_0$ imply
%\[ \tilde{\varphi}_\lambda=\tilde{K}^*\tilde{g}_\lambda-\lambda S\tilde{K}^*\tilde{g}_\lambda, \]
%where $\varphi_\lambda=K^*g_\lambda-\lambda SK^*g_\lambda$. 
%Then we have 
%	\begin{equation}\label{eq:lambdanot1} 
%        U\tilde{\varphi}_\lambda=\lambda \tilde{\varphi}_\lambda \;\;(|\lambda|=1,\;\lambda\notin\{\pm 1\}). 
%        \end{equation}
\item \red{the case where} $\lambda =\pm 1$. 
Recalling Lemma~\ref{lem:invariantcomplement}, %and (\ref{eq:cyclespace}), 
we can state that the orthogonal complement space $\mathcal{L}^\perp$ is the invariant subspace with respect to $E_{PON}$, 
and under the decomposition of $\mathcal{L}^\perp=\ker K\cap \ker(1+S) \oplus \ker K \cap \ker(1-S)$, we have 
$E_{PON}|_{\mathcal{L}^\perp}=1\oplus -1$. 
By using the fact \cite{HKSS_JFA}, we have $\mathcal{C}_\pm=\ker K\cap \ker(1\pm S)$. 
%Thus when $\lambda\in \{\pm 1\}$, combining it with Lemma~\ref{lem:symmetricity} implies $\tilde{\varphi}_{\pm}\in \mathcal{L}^\perp$.
%It is easy to check that keeping our mind the fact; $\psi_\pm \in \mathcal{C}_{\pm}$ if and only if 
%$\varphi_{\pm 1} (a)\pm \varphi_{\pm 1}(\bar{a})=0$ ($a\in A_0$) and $\sum_{a\in A_0: t(a)=u}\varphi_{\pm 1}(a)=0$ ($u\in V_0$), 
%we have 
%	\begin{equation}\label{eq:lambda1} U\tilde{\varphi}_{\pm 1}=\pm \tilde{\varphi}_{\pm 1}. \end{equation}
\end{enumerate}
%Therefore by (\ref{eq:lambdanot1}) and (\ref{eq:lambda1}), 
%if $|\lambda|=1$ with $\lambda\in \sigma(E_{PON})$, then for any $\varphi_\lambda\in \ker(\lambda-E_{PON})$, we have 
%        \begin{equation}
%        U\tilde{\varphi}_\lambda=\lambda\tilde{\varphi}_\lambda. 
%        \end{equation}

%On the other hand, if $U\tilde{\psi}=\lambda\tilde{\psi}$ with some $\psi\in \mathbb{C}^{|A_0|}$, then $\psi$ is an eigenvector of $E_{PON}$
%since 
%\[ 0=(U-\lambda)\tilde{\psi}= \chi_{A_0}(U-\lambda)\chi_{A_0} \tilde{\psi}= (E_{PON}-\lambda)\psi \]
%by the definition of $U$. 
%If there exists $\psi\in\ell^2(\tilde{A})$ such that $U\psi=\lambda\psi$ with $|\lambda|=1$, then $\supp(\psi)\subseteq A_0$ because 
%if not, $\psi(a_x)=\lambda^x\psi(a_0)$ holds $(x\leq 0)$. Thus $||\psi||^2>\sum_{x=0}^{\infty}|\psi(a_{-x})|^2=\infty$ holds, which is the contradiction. 
%Here $a_x\in \tilde{A}$ is the arc of injection tails such that $t(a_0)=u_+$, $o(a_{x-1})=t(a_{x})$ $(x\leq 1)$. 
%Thus if $Uf=\lambda f$ with $f\in \ell^2(A)$, then $f$ must be of the form $\tilde{\varphi}_\lambda$. 
%
%Taking together all, for every $|\lambda|=1$, we have 
%	\[ \ker(\lambda-E_{PON})=\{ \psi' \;|\; \psi\in \ker(\lambda-U) \}.  \]
This completes the proof. 
\end{proof}

As a by-product of the above discussion, we obtain the following properties of (\ref{eq:non-linear_eigen_eq}). 
\begin{corollary}
Let $\lambda$ be a solution of 
	\[ \det(\lambda^2-2T\lambda+(2D-1))=0. \]
Then $\lambda$ satisfies the following properties:
\begin{enumerate}
\item $|\lambda|\leq 1$;
\item If $|\lambda|=1$, then we have 
\[ \ker(\lambda^2-2T\lambda+(2D-1))= \{ g\in \ker(\zeta(\lambda)-T) \;|\; g(u)=0\;\;\forall u\in \delta V\}. \] 
\end{enumerate}
\end{corollary}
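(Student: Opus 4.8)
The plan is to read off both items essentially for free from the structural results already assembled above. The key observation is that $\det(\lambda^2 - 2T\lambda + (2D-1)) = 0$ is, by the chain \eqref{eq:non-linear_eigen_eq}, nothing but the condition that $\lambda$ is an eigenvalue of $E_{PON}|_{\mathcal L}$ (with eigenvector $\psi = K^*f_\lambda + SK^*g_\lambda$, $f_\lambda = -\lambda^{-1}g_\lambda$, $g_\lambda\in\ker(\lambda^2-2\lambda T+(2D-1))$), at least once we handle the degenerate cases $\lambda = 0$ and $\lambda = \pm 1$ separately. The case $\lambda = 0$ is impossible since $E_{PON} = S(2K^*K-I)$ is a product of the unitary $S$ and an invertible operator on $\mathcal L$ (indeed $E_{GON}$ is invertible by Lemma~\ref{lem:fromPONtoGON}), so $0\notin\sigma(E_{PON}|_{\mathcal L})$; equivalently $\det(2D-1)\neq 0$ because $2d(u)/\tilde d(u) - 1 \neq \pm 1$ wherever it could vanish and, more simply, the Dirichlet block is nonsingular. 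So assume $\lambda\neq 0$.

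For item (1): if $|\lambda| > 1$ then $\lambda\in\sigma(E_{PON})$ would contradict Lemma~\ref{lem:H_u=0} ($\mathcal H_u = \emptyset$); hence $|\lambda|\le 1$. For the cases $\lambda = \pm 1$ we invoke Lemma~\ref{lem:symmetricity}: $\{\pm 1\}\notin\sigma(E_{PON}|_{\mathcal L})$, and since any solution $\lambda = \pm 1$ of the determinantal equation would, via the Jordan-block argument around \eqref{eq:non-linear_eigen_eq} (the case $\lambda = \pm 1$ forcing $\phi\in\ker(1\mp E_{GON})^2$), produce either an element of $\ker L$ or a genuine $(\pm1)$-eigenvector in $\mathcal L$, the latter being excluded, nothing new appears. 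Thus every solution has $|\lambda|\le 1$, proving (1).

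For item (2): suppose $|\lambda| = 1$. By (1) and the above, $\lambda\neq\pm 1$, so by \eqref{eq:non-linear_eigen_eq} we are in the genuinely-eigenvalue regime with eigenvector $\varphi_\lambda = K^*g_\lambda + SK^*g_\lambda$ (up to the scalar $-\lambda^{-1}$ absorbed into $f_\lambda$), i.e.\ $\varphi_\lambda\in\mathcal L\cap\mathcal H_c$. Lemma~\ref{lem:persistency} then gives $g_\lambda(u_*) = 0$ for every $u_*\in\delta V$, and the computation in the proof of Theorem~\ref{lem:kerEPON-lambda}, case $\lambda\neq\pm1$, shows $Tg_\lambda = \zeta(\lambda)g_\lambda$ on all of $V_0$ with $\zeta(\lambda) = (\lambda+\lambda^{-1})/2$. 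Hence $g_\lambda\in\ker(\zeta(\lambda)-T)$ with $g_\lambda$ vanishing on $\delta V$, giving the inclusion ``$\subseteq$''. Conversely, if $g\in\ker(\zeta(\lambda)-T)$ and $g|_{\delta V} = 0$, then $Dg = g$ on $\supp(g)\subseteq V_0\setminus\delta V$ (since $d(u) = \tilde d(u)$ there), so $(\lambda^2 - 2\lambda T + (2D-1))g = \lambda^2 g - 2\lambda\zeta(\lambda)g + g = \lambda^2 g - \lambda(\lambda+\lambda^{-1})g + g = 0$, giving ``$\supseteq$''. This proves (2).

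The only genuine subtlety is bookkeeping around the degenerate values $\lambda\in\{0,\pm1\}$ of the non-linear pencil: one must make sure that ``$\det(\lambda^2-2T\lambda+(2D-1))=0$'' really does correspond to ``$\lambda\in\sigma(E_{PON}|_{\mathcal L})$'' even there, rather than to a spurious root. This is dispatched by noting that at $\lambda = 0$ the pencil reduces to $\det(2D-1)$, which is nonzero, and at $\lambda = \pm1$ Lemma~\ref{lem:symmetricity} together with the Jordan-structure analysis preceding \eqref{eq:non-linear_eigen_eq} rules these out; everywhere else the equivalence in \eqref{eq:non-linear_eigen_eq} is exact. Once that is said, both claims are immediate corollaries of Lemmas~\ref{lem:H_u=0}, \ref{lem:symmetricity}, \ref{lem:persistency} and the computation inside the proof of Theorem~\ref{lem:kerEPON-lambda}.
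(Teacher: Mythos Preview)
Your overall route is exactly what the paper intends: the corollary is stated without a separate proof, as a ``by-product'' of the preceding discussion, so invoking Lemma~\ref{lem:H_u=0} for (1) and Lemmas~\ref{lem:symmetricity}, \ref{lem:persistency} together with the case $\lambda\neq\pm1$ inside Theorem~\ref{lem:kerEPON-lambda} for (2) is precisely the intended argument.

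There is, however, a genuine error in your handling of $\lambda=\pm 1$. You assert that Lemma~\ref{lem:symmetricity} (plus the Jordan bookkeeping) rules out $\pm 1$ as roots of $\det(\lambda^2-2T\lambda+(2D-1))=0$, but it does not: that lemma only excludes $\pm 1$ from $\sigma(E_{PON}|_{\mathcal L})$, and the passage from a root of the pencil to an eigenvalue of $E_{PON}|_{\mathcal L}$ goes through $L$, whose kernel $\ker L=\ker(1-E_{GON}^2)$ is nontrivial precisely at these points. Concretely, at $\lambda=1$ the pencil becomes $2(D-T)$, and one checks directly (or via Lemma~\ref{lem:PF}) that $\tilde d^{1/2}\in\ker(D-T)$; so $\lambda=1$ is \emph{always} a root of the determinantal equation. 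Your sentence ``By (1) and the above, $\lambda\neq\pm1$'' in the proof of (2) is therefore false. In fact, at $\lambda=1$ the two sides of (2) disagree: the left side contains $\tilde d^{1/2}$ while the right side is $\{0\}$ (since $\rho(T)<1$ by irreducible sub-stochasticity of the Dirichlet walk). Thus the corollary as literally written requires the implicit proviso $\lambda\neq\pm1$ that the surrounding discussion carries throughout; your proof should make that restriction explicit rather than claim it has been derived. (A smaller slip: your claim $\det(2D-1)\neq0$ can also fail, e.g.\ when a boundary vertex has $d(u)=1$ and one attached tail so that $2d(u)/\tilde d(u)-1=0$; but since $|0|\le1$ this is harmless for the corollary.)
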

%

%%%%%%%%%%%%%%%%
\section{Proof of Theorem~\ref{thm:perfect_transmittion}: perfect transmission}\label{Sec:perfect transmittion}
%%%%%%%%%%%%%%%%
\red{Let us recall that $d(u)$ and $\tilde{d}(u)$ denote the degrees of $u\in V_0$ in the internal graph $G_0$ and 
the whole infinite graph $\tilde{G}_0$, respectively. }
From this section, we restrict our consideration to the case where 
$r=2$, $\gamma_1=z=1$ and $\gamma_2=0$ in (\ref{eq:geneini}). 
We put $\Psi_\infty\in \mathbb{C}^{\tilde{A}}$ as the stationary state and $\psi_\infty:=\chi\Psi_\infty$. 
Let $\kappa_\infty:V_0\to \mathbb{C}$ such that 
	\[ \kappa_\infty(u)=\frac{2}{\sqrt{\tilde{d}(u)}}\langle K^*\delta_u,\psi_\infty \rangle. \]
Using this notation, the transmission and reflection rates are expressed by 
	\begin{align} 
        t_* &= \begin{cases} \kappa_\infty(u_-) & \text{: $u_+\neq u_-$,}\\ \kappa_\infty(u_-)+2/\tilde{d}_- & \text{: $u_+=u_-$,} \end{cases} \label{eq:transmittion}\\
        r_* &= \kappa_\infty(u_+)+\frac{2}{\tilde{d}_+}-1. \label{eq:reflection}
        \end{align}
Our target is to show $t_*=1$ \red{or} $r_*=0$.  
The stationary state restricted to the internal graph $\psi_\infty$ is described by $(1+E_{PON}+E_{PON}^2+\cdots)\rho$ and 
the external source $\rho$ is rewritten by (\ref{eq:extarnal_sourse}). 
Recall that the convergence is ensured by Theorem~\ref{thm:existence}. 
Then $\kappa_\infty(u)$ is reexpressed by
	\begin{align*}
        \kappa_\infty(u) &= \langle  K^*\delta_u, (1+E_{PON}+E_{PON}^2+\cdots)SK^*\delta_{u_+} \rangle \frac{2}{\sqrt{\tilde{d}}_+}\frac{2}{\sqrt{\tilde{d}(u)}}\\
        &= \langle  \delta_u, (\Xi_0+\Xi_1+\cdots)\delta_{u_+} \rangle \frac{2}{\sqrt{\tilde{d}}_+}\frac{2}{\sqrt{\tilde{d}(u)}},\\
        \end{align*}
where $\Xi_n:= KE_{PON}^nSK^*$. 
Then our interest is switched to the sequence of $\Xi_n$. 
We find the following three-term recursion relation of $\Xi_n$ as follows.
\begin{lemma}\label{lem:ChebyLike}
Let $\Xi_n$ be the above. Then we have 
	\begin{align*}
        \Xi_0 &= T; \\
        \Xi_{1} &= 2T^2-D; \\
        \Xi_{n} &= 2T\Xi_{n-1}-(2D-1)\Xi_{n-2}\;\;(n\geq 2).
        \end{align*}
\end{lemma}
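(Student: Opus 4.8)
The plan is to establish the three-term recursion for $\Xi_n := KE_{PON}^nSK^*$ by exploiting the algebraic identities in Lemma~\ref{lem:propertyK}, namely $E_{PON} = S(2K^*K - I)$, $KK^* = D$, and $KSK^* = T$, together with the fact that $S$ is a self-adjoint involution ($S^2 = I$, $S^* = S$). First I would compute the base cases directly. For $\Xi_0 = KSK^*$, this is exactly $T$ by property (3). For $\Xi_1 = KE_{PON}SK^* = KS(2K^*K - I)SK^*$, I expand to get $2(KSK^*)(KSK^*) - KS^2K^* = 2T^2 - KK^* = 2T^2 - D$, using $S^2 = I$ and property (2). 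These are quick.

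The main step is the recursion for $n \geq 2$. The natural idea is to insert the identity $E_{PON}^n = E_{PON}^{n-2}E_{PON}^2$ and to find a useful expression for $E_{PON}^2$ or, more precisely, to relate $E_{PON}^n S K^*$ to $E_{PON}^{n-1}SK^*$ and $E_{PON}^{n-2}SK^*$. Here the key observation I would aim for is an operator identity of the shape
\[
E_{PON}^{n}SK^* = 2E_{PON}^{n-1}SK^*\,T - E_{PON}^{n-2}SK^*(2D-I),
\]
which upon left-multiplication by $K$ gives exactly $\Xi_n = 2\Xi_{n-1}T - \Xi_{n-2}(2D-I)$; since $T$ and $D$ are self-adjoint one can then transpose to the stated form $\Xi_n = 2T\Xi_{n-1} - (2D-1)\Xi_{n-2}$. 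To prove such an identity, I would look at the column space: $SK^*$ maps $\mathbb{C}^{V_0}$ into $\mathcal{L}$, and on $\mathcal{L}$ the operator $E_{PON}$ is conjugate (via $L = [K^*\ SK^*]$) to $E_{GON} = \begin{bmatrix} 0 & -I \\ 2D - I & 2T \end{bmatrix}$ by Lemma~\ref{lem:fromPONtoGON}. The characteristic-type relation $E_{GON}^2 = 2\,\mathrm{diag\text{-like block}}\cdot E_{GON} - \cdots$ coming from the $2\times 2$ block structure is precisely what produces a three-term recursion; concretely, writing $E_{PON}^n SK^*\delta_u = K^* f_n + SK^* g_n$ and tracking $(f_n, g_n)$ through $E_{GON}$ yields $g_{n+1} = 2Tg_n - (2D-I)g_{n-1}$ after eliminating $f_n$, and $\Xi_n$ is read off from the $g_n$ component.

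Alternatively — and this may be the cleaner route to write up — I would prove the recursion by induction directly on the arc picture: show $E_{PON}^2 SK^* = 2\,E_{PON}SK^*\,T - SK^*(2D - I) + (\text{something in } \mathcal{L}^\perp\text{-direction that dies under }K)$. The subtlety, and what I expect to be the main obstacle, is that $E_{PON}$ does not satisfy a clean global polynomial identity on all of $\mathbb{C}^{A_0}$ — the recursion only closes because we sandwich between $K$ on the left and $SK^*$ on the right, i.e.\ we are genuinely working inside the invariant subspace $\mathcal{L}$ where the Chebyshev-like structure of $E_{GON}$ lives. So the care needed is to verify that the terms generated outside $\mathcal{L}$ (or in $\ker K$) contribute nothing after applying $K$; once that bookkeeping is done the computation is the standard one that turns a companion-matrix relation into a Chebyshev-type three-term recurrence. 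I would double-check the base case $\Xi_1 = 2T^2 - D$ against this recursion's shape to confirm the indexing (note the first coefficient uses $D$, not $2D-1$, which is the tell-tale sign of a Chebyshev recursion started with mismatched initial data, consistent with $U$-type versus $T$-type Chebyshev polynomials).
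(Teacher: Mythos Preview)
The paper states this lemma without proof and uses it immediately, so there is nothing to compare against; your proposal is the natural argument and is essentially correct. The base cases are right as written. For the recursion, the cleanest execution avoids your transpose step altogether: from $E_{PON}=S(2K^*K-I)$ one computes directly
\[
KE_{PON}=2(KSK^*)K-KS=2TK-KS,\qquad KSE_{PON}=K(2K^*K-I)=(2D-I)K,
\]
whence $KE_{PON}^2=(2TK-KS)E_{PON}=2T\,KE_{PON}-(2D-I)K$. Right-multiplying by $E_{PON}^{n-2}SK^*$ yields the recursion with $T$ and $2D-I$ on the left, exactly as stated, and with no $\mathcal{L}^\perp$ error terms to discard.

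Two small cautions about your write-up. First, the transposition step: from $\Xi_n=2\Xi_{n-1}T-\Xi_{n-2}(2D-I)$, self-adjointness of $T$ and $D$ alone only gives the recursion for $\Xi_n^*$. You need each $\Xi_n$ itself to be self-adjoint; this is true because $SE_{PON}S=E_{PON}^*$ forces $\Xi_n^*=KS(E_{PON}^*)^nK^*=KE_{PON}^nSK^*=\Xi_n$, but it should be said. Second, in the $E_{GON}$ picture, $\Xi_n$ is not literally ``read off from the $g_n$ component'': one has $\Xi_n g=Df_n+Tg_n=(g_{n+1}-g_{n-1})/2$, and it is this fixed shift of the sequence $(g_n)$ that inherits the three-term recurrence. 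The anticipated ``something in $\mathcal{L}^\perp$ that dies under $K$'' never appears; the identities above are exact.
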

We take the summation of $\Xi_n$ over $n$: $\xi_\infty:=\sum_{n=0}^\infty\Xi_n\delta_{u_+}$. 
Using Lemma~\ref{lem:ChebyLike}, we obtain
	\begin{equation*}
        (T-D)(2\xi_\infty+\delta_{u_+})=0.
        \end{equation*}
\red{Furthermore, } putting the $|V_0|$-dimensional degree matrix $M$ by $(Mf)(u)=\tilde{d}(u)f(u)$, 
\red{we can give another expression $\xi_\infty$ as}
	\[ \xi_\infty=\frac{\sqrt{\tilde{d}_+}}{4}M^{1/2}\kappa_\infty. \] 
Thus 
	\begin{equation}\label{eq:master_eq2}
        (T-D)\left(\frac{1}{2}M^{1/2}\kappa_\infty+\frac{1}{\sqrt{\tilde{d}_+}}\delta_{u_+}\right)=0 
        \end{equation}
holds. To solve (\ref{eq:master_eq2}), we need to clarify $\ker(T-D)$. 
\begin{lemma}\label{lem:PF}
Let $T$ and $D$ be the above. Then we have 
	\[ \ker(T-D)=\mathbb{C}\tilde{d}^{1/2}, \]
where $\tilde{d}^{1/2}(u):=\sqrt{\tilde{d}(u)}$. 
\end{lemma}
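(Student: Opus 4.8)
The plan is to show the two inclusions $\mathbb{C}\tilde{d}^{1/2}\subseteq\ker(T-D)$ and $\ker(T-D)\subseteq\mathbb{C}\tilde{d}^{1/2}$. The first inclusion is a direct computation: for the vector $\tilde{d}^{1/2}$, using $(T)_{u,v}=1/\sqrt{\tilde d(u)\tilde d(v)}$ when $u\sim v$ in $G_0$ and $(Df)(u)=(d(u)/\tilde d(u))f(u)$, one gets $(T\tilde d^{1/2})(u)=\sum_{v\sim_{G_0} u}\tilde d(v)^{-1/2}\cdot\tilde d(v)^{1/2}/\sqrt{\tilde d(u)}=d(u)/\sqrt{\tilde d(u)}=(D\tilde d^{1/2})(u)$, since $u$ has exactly $d(u)=d_{G_0}(u)$ neighbours in $G_0$. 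So $\tilde d^{1/2}\in\ker(T-D)$.

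For the reverse inclusion, I would conjugate to a substochastic matrix and invoke Perron--Frobenius. Let $f\in\ker(T-D)$ and set $h(u):=f(u)/\sqrt{\tilde d(u)}$; then $(T-D)f=0$ rewrites as $\sum_{v\sim_{G_0}u}h(v)=d(u)h(u)$, i.e. $(Ph)(u)=h(u)$ where $P$ is the matrix with $(P)_{u,v}=1/d(u)$ for $v\sim_{G_0}u$ and $0$ otherwise — the transition matrix of the simple random walk on the connected finite graph $G_0$. So $h$ is a right $1$-eigenvector of $P$. Because $G_0$ is connected (Assumption~\ref{assumption}(1)), $P$ is irreducible, and since every row of $P$ sums to $1$, the constant vector $\mathbf 1$ is a right $1$-eigenvector; by the Perron--Frobenius theorem the $1$-eigenspace of an irreducible nonnegative matrix whose spectral radius is $1$ is one-dimensional, hence $h$ is a scalar multiple of $\mathbf 1$, so $f$ is a scalar multiple of $\tilde d^{1/2}$. (Alternatively, and avoiding the Perron--Frobenius machinery: from $(Ph)(u)=h(u)$ pick a vertex $u_0$ where $|h|$ attains its maximum; the averaging identity forces $h(v)=h(u_0)$ for all neighbours $v$, and connectedness propagates this to all of $V_0$.)

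The main obstacle is essentially a bookkeeping one: one must be careful that $T$ here is the \emph{Dirichlet} random-walk operator built with the \emph{ambient} degrees $\tilde d$ rather than the internal degrees $d$, so the conjugation that turns $T-D$ into $I-P$ uses $\tilde d^{1/2}$, not $d^{1/2}$, and the ``defect'' $\tilde d(u)-d(u)>0$ at the boundary vertices $u_\pm$ is exactly absorbed by writing the eigenequation as $(Ph)(u)=h(u)$ with $P$ having row sums $1$. Once this conjugation is set up correctly, irreducibility from connectedness plus the maximum-principle argument (or Perron--Frobenius) closes the proof immediately.
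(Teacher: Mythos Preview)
Your proof is correct and follows essentially the same route as the paper: both arguments conjugate $T-D$ by the diagonal matrix $M^{1/2}$ (equivalently, substitute $h=M^{-1/2}f$) to reduce the kernel equation to the eigenvalue-$1$ problem for the simple random walk on the connected finite graph $G_0$, and then invoke Perron--Frobenius to conclude the eigenspace is one-dimensional. Your presentation is in fact a bit cleaner than the paper's, and the alternative maximum-principle argument you sketch is a nice elementary variant of the same idea.
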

\begin{proof}
$T$ is expressed by $T=M^{-1/2}P'M^{1/2}$, where $P'$ is the transition Dirichlet random walk with the boundary $\delta V$.  
Then 
	\begin{align}
        (T-D)f=0 \Leftrightarrow (P'-D)M^{1/2}f=0 \Leftrightarrow (P_0-1)DM^{1/2}f=0, 
        \end{align}
where $P'D^{-1}=:P_0$ is the isotropic random walk on $G_0$ itself. 
Then by the Perron-Frobenius theorem, $DM^{1/2}f=c\tilde{d}$ for some constant $c$. Then we obtain the desired conclusion.
\end{proof}
Then we can reexpress LHS of (\ref{eq:master_eq2}) as follows using a constant $c$ such that 
	\begin{equation}\label{eq:star} 
        \frac{1}{2}M^{1/2}\kappa_\infty+\frac{1}{\sqrt{\tilde{d}_+}}\delta_{u_+}=c\tilde{d}^{1/2}. 
        \end{equation}
Inserting $u_+$ and $u_-$ into (\ref{eq:star}) and using the expressions of $t_*$ and $r_*$ in (\ref{eq:transmittion}), (\ref{eq:reflection}), 
respectively, we have 
	\begin{align}
        t_* &= \kappa_{\infty}(u_-)=2c; \label{eq:transmittion2}\\
        r_* &= \kappa_{\infty}(u_+)+2/\tilde{d}_+-1=2c-1. \label{eq:reflection2}
        \end{align}
\red{On the other hand}, from the unitarity of the time evolution of the whole system $U$ and the stationarity, 
we obtain the conservativeness with respective to the transmission and reflection rates. 
\begin{lemma}\label{lem:conservative}
Let $G'$ be a subgraph of the internal graph $G_o$ and $\delta A'_{in},\delta A'_{out}\subset \tilde{A}$ 
be the set of arcs connecting between $\tilde{G}$ and $G'$ such that for any $a\in A'_{in}$, $o(a)\in \tilde{V}\setminus V'$, 
$t(a)\in V'$ and $A'_{out}$ is the inverse of $A'_{in}$. 
%Let us also $\Psi_\infty$ be the stationary state of the whole system. 
Then we have 
	\[ ||\Psi_\infty|_{\delta A'_{in}} ||^2=||\Psi_\infty|_{\delta A'_{out}} ||^2. \]
In particular, let $t_*$ and $r_*$ be the above. Then we have 
	\[ t_*^2+r_*^2=1. \]
\end{lemma}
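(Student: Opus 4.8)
The plan is to exploit the only two structural facts available: the unitarity of the global evolution $U$ on $\mathbb{C}^{\tilde A}$ and the fact that $\Psi_\infty$ is a generalized eigenfunction, $U\Psi_\infty = z^{-1}\Psi_\infty$ with $z=1$, i.e. $U\Psi_\infty = \Psi_\infty$ in the present $z=1$ setting. First I would set up the bookkeeping: for the subgraph $G'$ of $G_0$ let $A(G')$ denote the arcs internal to $G'$, and consider the finite-dimensional space $\mathbb{C}^{A(G')\cup\delta A'_{in}\cup\delta A'_{out}}$. The key observation is that $U$, restricted appropriately, maps the ``incoming part'' $\Psi_\infty|_{A(G')}\oplus\Psi_\infty|_{\delta A'_{in}}$ (arcs pointing into $V'$, together with arcs inside $G'$) onto $\Psi_\infty|_{A(G')}\oplus\Psi_\infty|_{\delta A'_{out}}$ (using that $U$ acts locally at each vertex of $V'$, and that the stationarity equation $U\Psi_\infty=\Psi_\infty$ pins down the values on $\delta A'_{out}$ in terms of values inside). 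Concretely, I would introduce the ``cut'' projection $\Pi$ onto arcs touching $V'$ and argue that $\Pi U \Pi$ is an isometry from the incoming half-space to the outgoing half-space, because each local Grover scattering matrix $\mathrm{Gr}(\tilde d(u))$ at a vertex $u\in V'$ is unitary and the flow across the cut is exactly balanced by stationarity. Taking squared norms over the finite arc set then gives $\|\Psi_\infty|_{\delta A'_{in}}\|^2 = \|\Psi_\infty|_{\delta A'_{out}}\|^2$, since the contributions from $A(G')$ cancel on both sides.

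A cleaner route, which I would prefer to write up, is the following conservation-law argument. Since $U$ is unitary and $U\Psi_\infty = \Psi_\infty$, for any finite set $B\subset\tilde A$ we have $\sum_{a\in B}|\Psi_\infty(a)|^2 = \sum_{a\in B}|(U\Psi_\infty)(a)|^2$; now choose $B$ to be the arcs lying on or ``just inside'' the cut around $V'$ and use that $U$ is supported locally so that $(U\Psi_\infty)(a)$ for $a$ inside $G'$ depends only on values inside $G'\cup\delta A'_{in}$, while for $a\in\delta A'_{out}$ it is the transmitted/reflected outflow. Writing the instantaneous current across the cut, one gets that the ``probability inflow'' $\|\Psi_\infty|_{\delta A'_{in}}\|^2$ equals the ``probability outflow'' $\|\Psi_\infty|_{\delta A'_{out}}\|^2$ — this is just the statement that a stationary unitary flow has zero net divergence on the finite region $V'$.

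For the second assertion, apply the first part with $G' = G_0$ itself. Then $\delta A'_{in}$ consists of the two arcs $a,b$ with $t(a)=u_+$, $t(b)=u_-$ pointing into $G_0$ from the tails, which carry the inflow; by the definition of $\psi_{in}$ the incoming amplitude on the $u_+$-tail is $1$ and on the $u_-$-tail is $0$, so after taking the stationary limit, $\|\Psi_\infty|_{\delta A'_{in}}\|^2 = 1^2 + 0^2 = 1$. On the other side $\delta A'_{out}$ consists of the arcs leaving $G_0$ toward the tails, on which the stationary amplitudes are precisely $\psi_\infty(\bar a)$ and $\psi_\infty(\bar b)$ — but these outgoing amplitudes, by stationarity of the free walk on the tails, equal $r_*^{1/2}$-phase and $t_*^{1/2}$-phase quantities with $|\psi_\infty(\bar a)|^2 = r_*$, $|\psi_\infty(\bar b)|^2 = t_*$ (here I use the definitions $t_*=|\psi_\infty(b')|^2$, $r_*=|\psi_\infty(a')|^2$ for the outgoing tail arcs, from the questions (1),(2) in the introduction). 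Hence $t_*^2 + r_*^2 = 1$... wait — I should be careful: the balance gives $t_* + r_* = 1$ at the level of squared amplitudes unless the inflow itself contributes to $\delta A'_{in}$ with amplitude whose square is $1$; I would double-check the normalization, and in fact the statement $t_*^2+r_*^2=1$ must come from a \emph{different} choice of cut or from combining with the inflow being injected \emph{every} time step, so the relevant conserved quantity is a quadratic form in $(t_*, r_*)$. I expect the main obstacle is exactly this normalization subtlety: correctly identifying which arcs and which amplitudes enter $\delta A'_{in}$ versus $\delta A'_{out}$ once the continual injection at $u_+$ is taken into account, so that the unitarity identity collapses to $t_*^2 + r_*^2 = 1$ rather than $t_* + r_* = 1$. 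Once that accounting is pinned down, the rest is a one-line application of $\|U\Psi_\infty\| = \|\Psi_\infty\|$ restricted to the cut.
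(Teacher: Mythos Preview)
Your first paragraph is essentially the paper's argument: apply $U$ to the finite-support vector $\Psi_\infty|_{A'}+\Psi_\infty|_{\delta A'_{in}}$, use locality of the Grover scattering to see that the image is exactly $\Psi_\infty|_{A'}+\Psi_\infty|_{\delta A'_{out}}$, and take $\ell^2$ norms using unitarity of $U$; the $A'$ contributions cancel. Your ``cleaner route'' in the second paragraph, however, is a misstep: the identity $\sum_{a\in B}|\Psi_\infty(a)|^2=\sum_{a\in B}|(U\Psi_\infty)(a)|^2$ is \emph{trivially} true for every $B$ once $U\Psi_\infty=\Psi_\infty$, and uses no unitarity at all, so it cannot by itself yield the in/out balance. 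Stick with the first argument.

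The confusion in the second assertion is purely notational, not conceptual. In Section~4 the symbols $t_*,r_*$ are \emph{redefined} by (\ref{eq:transmittion})--(\ref{eq:reflection}) as the outgoing \emph{amplitudes} themselves (real numbers, since the Grover walk and the initial state are real), not as their squared moduli as in the introduction. With this reading, taking $G'=G_0$ gives incoming amplitudes $(1,0)$ and outgoing amplitudes $(r_*,t_*)$, so the norm balance reads $1^2+0^2=r_*^2+t_*^2$, i.e.\ $t_*^2+r_*^2=1$. There is no need for a different cut or for any subtle accounting of the continual injection; once you use the Section~4 definitions the identity drops out immediately.
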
        
\begin{proof}
Let $U$ be the unitary time evolution operator on the whole system $\tilde{A}$. 
Then by the stationarity of $\Psi_\infty$, we have 
	\[ U(\Psi_\infty|_{\delta A'_{in}}+\Psi_\infty|_{A'})=\Psi_{\infty}|_{\delta A'_{out}}+\Psi_\infty|_{A'}. \]
Taking the square norms \red{of} both sides, by the unitarity of $U$, we obtain
	$||\Psi_\infty|_{\delta A'_{in}} ||^2=||\Psi_\infty|_{\delta A'_{out}} ||^2. $
If we choose \red{the subgraph} $G'$ as \red{the internal graph} $G_0$, then $1=t_*^2+r_*^2$ holds. 
%Let $|``-1\to u_+"\rangle$ be the standard basis of the arc whose terminus is $u_+$ and origin is in the input tail $V(\mathbb{P}_+)$, and 
%$|``-1\leftarrow u_+"\rangle$ be the standard basis of its inverse arc, 
%and $|``u_-\to 1"\rangle$ be the the standard basis of the arc whose origin is $u_-$ and terminus is in the output tail $V(\mathbb{P}_-)$.
%Then it holds that 
%	\[ U\psi_\infty=\psi_\infty+\kappa_\infty(u_+)|``-1\leftarrow u_+"\rangle+\kappa_\infty(u_-)|``u_-\to 1"\rangle \]
%by the stationarity of $\psi_\infty$, and 
%	\[ U|``-1\to u_+"\rangle = \left(\frac{2}{\tilde{d}_+}-1\right) |``-1\to u_+"\rangle. \]
%Then we have 
%	\begin{align*}
%        U(\psi_\infty+|``-1\to u_+"\rangle)
%        	&= U\psi_\infty +U|``-1\to u_+"\rangle \\
%                &= \psi_\infty+\left(\kappa_\infty(u_+)+\left(\frac{2}{\tilde{d}_+}-1\right)\right)|``-1\leftarrow u_+"\rangle+\kappa_\infty(u_-)|``u_-\to 1"\rangle .
%        \end{align*}
%By the unitarity of $U$ and the orthognalities of each term of the above RHS, we have 
%	\[ t_*^2+r_*^2=1. \]
Then we obtained the desired conclusion. 
\end{proof}
Thus using the expressions (\ref{eq:transmittion2}) and (\ref{eq:reflection2}) \red{for $t_*$  and $r_*$}, respectively,  
we can solve the $c$'s satisfying the condition of Lemma~(\ref{lem:conservative}) by 
	\[ c=1/2\;\mathrm{or}\;c=0.  \]

If $c=1/2$, by (\ref{eq:transmittion2}) and (\ref{eq:reflection2}), 
\begin{equation}\label{eq:kappa_universarity} 
\kappa(u)=\begin{cases} t_*=1 & \text{: $u\notin u_+$,}\\ 1-2/\tilde{d}_+ & \text{: $u=u_+$} \end{cases}
\end{equation}
which implies $r_*=0$ and 
	\begin{equation}\label{eq:1} 
        \sum_{b\in \tilde{A}: t(b)=u}\Psi_\infty(b)=\tilde{d}(u)/2 \;\;(u\in V_0). 
        \end{equation}
by the definition of $K$. 
The following lemma is immediately obtained by 
$(U\Psi_\infty)(a)=\Psi_\infty(a)$ for any $a\in \tilde{A}$. 
%where $\Psi_\infty$ is the stationary state of the whole system. 
%
\begin{lemma}\label{lem:average}
For every $a\in \tilde{A}$, 
	\[ \frac{\Psi_\infty(a)+\Psi_\infty(\bar{a})}{2}
        	= \frac{1}{\tilde{d}(o(a))}\sum_{b\in \tilde{A}:t(b)=o(a)}\Psi_\infty(b). \]
\end{lemma}
Then if $c=1/2$, then, inserting (\ref{eq:1}) into RHS in Lemma~\ref{lem:average}, we obtain 
	\[ \Psi_\infty(a)+\Psi_\infty(\bar{a})=1. \]
Now the final task to complete the proof of Theorem~\ref{thm:perfect_transmittion} is to eliminate the possibility of $c=0$.

If $c=0$, then by the same way as the $c=\sqrt{\tilde{d}_+}/2$ case, we have 
$t_*=0$, $r_*=-1$ and 
	\begin{align} 
        \Psi_\infty(a)+\Psi_\infty(\bar{a}) &= 0 \label{eq:flow1}\\
        \sum_{b\in \tilde{A}: t(b)=u}\Psi_\infty(b) &= 0 \label{eq:flow2}
        \end{align}
This is nothing but the combinatorial flow on the graph. 
Using the following combinatorial analysis completes our final task. 
We take a summation over all the flow in the internal arcs $A_0$. 
First we divide this summation into each vertex with respect to the inflows while the second one we divide it 
with respect to the outflows. 
It holds that 
	\begin{align*} 
        \sum_{u\in V_0}\sum_{a\in A_0:o(a)=u}\Psi_\infty(a)
                &=-\sum_{u\in V_0}\sum_{a\in A_0:t(a)=u}\Psi_\infty(a) \\
                &=-\left(\sum_{a\in A_0:t(a)=u_+}\Psi_\infty(a)+\sum_{a\in A_0:t(a)=u_-}\Psi_\infty(a)\right) \\
                &=1
        \end{align*}
Here we used (\ref{eq:flow1}) in the first equality and we used (\ref{eq:flow2}) in the second equality, the third equality 
comes from the external source. 
On the other hand, 
	\begin{align*}
        \sum_{u\in V_0}\sum_{a\in A_0:t(a)=u}\Psi_\infty(a)
        	&= \sum_{a\in A_0:t(a)=u_+}\Psi_\infty(a)+\sum_{a\in A_0:t(a)=u_-}\Psi_\infty(a) \\
                &= -1
        \end{align*}
\red{Both LHS's stated above coincide with $\sum_{a\in A_0}\Psi_\infty(a)$, which is a contradiction. 
Thus} we have reached to the desired conclusion. $\square$

\section{Proof of Theorem~\ref{thm:scattering}: scattering from a global view point}\label{Sec:scattering}
\noindent {\bf Proof of Theorem~\ref{thm:scattering}}: 
We notice that up to at least Lemma~\ref{lem:conservative}, there are no conflicts even if we extend the setting of 
the number of tails from $2$ to $r\geq  2$.
We insert the inflows $\{\alpha_1,\alpha_2,\dots, \alpha_{r}\}$ from each tail, that is, $z=1$, $\gamma_i=\alpha_i$ for $i=1,\dots,r$ in (\ref{eq:geneini}). 
Then just changing $\delta_{u_+}$ to 
$M^{-1/2}f_{in}$ in RHS of (\ref{eq:master_eq2}), we have 
\begin{equation}
(T-D)\left(\frac{1}{2}M^{1/2}\kappa_\infty+M^{-1/2}f_{in}\right)=0, 
\end{equation}
where $f_{in}: V_0\to \mathbb{C}$ such that
	\[ f_{in}(v)= \sum_{j:V(\mathbb{P}_j)\cap V_0=\{v\}}\alpha_j. \] %as long as $G_0$ has no leaves. 
Let $\delta V=\cup_{j=1}^r V(\mathbb{P}_j)\cap V_0. $
Then by Lemma~\ref{lem:PF}, we have 
	\begin{equation}\label{eq:newkappa}
        \frac{1}{2}M^{1/2}\kappa_\infty+M^{-1/2}f_{in}=\exists c\tilde{d}^{1/2}
        \end{equation}
Then using the expressions for the transmission and reflection amplitudes in (\ref{eq:transmittion2}), (\ref{eq:reflection2}), 
we obtain the outflow $\beta_j$ from the vertex $u\in V(\mathbb{P}_j)\cap V_0$ in the long time limit by 
	\begin{equation}\label{outflow}
        \beta_j=\frac{2}{\tilde{d}(u_j)}f_{in}(u)-\alpha_j+\kappa_\infty(u_j)=-\alpha_j+2c.
        \end{equation}
Since the time evolution operator is a real operator, applying Lemma~\ref{lem:conservative} and dividing the inflow and outflow 
into the real and imaginary parts; that is, 
$\sum_{j}(\mathrm{Re}\alpha_j)^2=\sum_{j}(\mathrm{Re}\beta_j)^2$ and 
$\sum_{j}(\mathrm{Im}\alpha_j)^2=\sum_{j}(\mathrm{Im}\beta_j)^2$, we have 
	\[ c=\mathrm{ave}(\alpha_1,\dots,\alpha_r) \mathrm{\;or\;} c=0. \]
\red{ Here $\mathrm{ave}(\alpha_1,\dots,\alpha_r):=(1/r)\sum_{j}\alpha_j$, which is the average of the inflows. }

\red{For} the same reason for the flow consistency discussed in the previous section, if $c=0$, then $c$ must be $c=\mathrm{ave}(\alpha_1,\dots,\alpha_r)=0$. 
Therefore inserting the new value $c$ into (\ref{outflow}), we obtain 
	\[ \beta_j=2\mathrm{ave}(\alpha_1,\dots,\alpha_r)-\alpha_j.  \] 
\red{Since} the outflow in the long time limit; $\beta_1,\dots,\beta_{r}$, \red{can be} are expressed by 
	\[ \begin{bmatrix}\beta_1 \\ \beta_2 \\ \vdots \\ \beta_{r}\end{bmatrix}
        = \begin{bmatrix} 
        2/r-1 & 2/r & \cdots & 2/r\\
        2/r & 2/r-1 & \cdots & 2/r\\
        \vdots       & \vdots         &  \ddots      & \vdots \\ 
        2/r & 2/r & \cdots & 2/r-1
         \end{bmatrix}
         \begin{bmatrix}\alpha_1 \\ \alpha_2 \\ \vdots \\ \alpha_{r}\end{bmatrix}, \]
we obtain (\ref{eq:GroverScat}) in Theorem~\ref{thm:scattering}.

Inserting the new value $c$ into (\ref{eq:newkappa}), we have 
	\begin{equation}\label{eq:shimidashi} 
        \frac{2}{\tilde{d}(u)}\sum_{a\in A_0:t(a)=u}\psi_\infty(a)=:\kappa_\infty(u)
        =\begin{cases} 
        2\mathrm{ave}(\alpha_1,\dots,\alpha_r) & \text{: $u\in V_0\setminus \{u_+\}$,} \\
        2\mathrm{ave}(\alpha_1,\dots,\alpha_r) -2\alpha_j/\tilde{d}_+ & \text{: $u=u_+$.}
         \end{cases} 
        \end{equation}
Then we obtain (\ref{eq:shinbashi}) in Theorem~\ref{thm:scattering}, 
and combining this with Lemma~\ref{lem:average}, we obtain (\ref{eq:shinbashi2}) in Theorem~\ref{thm:scattering}. 
We have completed the proof of Theorem~\ref{thm:scattering}. $\square$

\noindent\\
By Lemma~\ref{lem:average}, we also \red{obtain} the following corollary. 
\begin{corollary}\label{1}
        Let the setting be the same as in Theorem~\ref{thm:scattering}. 
        Then $\mathrm{j}(a):=\Psi_\infty(a)-\mathrm{ave}(\alpha_1,\dots,\alpha_r)$ is a combinatorial flow of $\tilde{G}$, that is;  
	\begin{align*}
        \sum_{b\in \tilde{A}\;:\; t(b)=u}\mathrm{j}(b)=0;\;\;\mathrm{j}(a)+\mathrm{j}(\bar{a})=0 
        \end{align*}
        for any $u\in \tilde{V}$ and $a\in \tilde{A}$. 
\end{corollary}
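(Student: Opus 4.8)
\textbf{Proof proposal for Corollary~\ref{1}.}
The plan is to verify the two defining properties of a combinatorial flow for $\mathrm{j}(a)=\Psi_\infty(a)-\mathrm{ave}(\alpha_1,\dots,\alpha_r)$ directly from the two conclusions already established in Theorem~\ref{thm:scattering}, namely \eqref{eq:shinbashi} and \eqref{eq:shinbashi2}. Write $m:=\mathrm{ave}(\alpha_1,\dots,\alpha_r)$ for brevity. The edge-reversal identity is the easy one: by \eqref{eq:shinbashi2} we have $\Psi_\infty(a)+\Psi_\infty(\bar a)=2m$ for every $a\in\tilde A$, so $\mathrm{j}(a)+\mathrm{j}(\bar a)=(\Psi_\infty(a)-m)+(\Psi_\infty(\bar a)-m)=2m-2m=0$, which is the antisymmetry condition.

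For the Kirchhoff (vertex-balance) condition, fix $u\in\tilde V$. Since each vertex of $\tilde G$ has exactly $\tilde d(u)$ incoming arcs, summing the constant $m$ over those arcs gives $m\,\tilde d(u)$. Hence
\begin{equation*}
\sum_{b\in\tilde A:\,t(b)=u}\mathrm{j}(b)=\sum_{b\in\tilde A:\,t(b)=u}\Psi_\infty(b)-m\,\tilde d(u),
\end{equation*}
and by the first equality in \eqref{eq:shinbashi} the sum $\sum_{t(b)=u}\Psi_\infty(b)$ equals exactly $m\,\tilde d(u)$, so the right-hand side vanishes. This gives $\sum_{t(b)=u}\mathrm{j}(b)=0$, completing the verification.

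I do not anticipate a genuine obstacle here: the corollary is a direct restatement of Theorem~\ref{thm:scattering} after subtracting off the constant background value $m$, which is visibly annihilated by both the vertex sum (it contributes $m\tilde d(u)$, matched by \eqref{eq:shinbashi}) and the edge-reversal sum (it contributes $2m$, matched by \eqref{eq:shinbashi2}). The only point worth a sentence of care is that \eqref{eq:shinbashi} and \eqref{eq:shinbashi2} were stated for all of $\tilde V$ and $\tilde A$ respectively — including the tail vertices and arcs, where $\tilde d(u)=2$ — so the argument indeed covers every $u\in\tilde V$ and every $a\in\tilde A$ and not merely the internal graph. One should also note that Lemma~\ref{lem:average} (cited in the statement) is what underlies \eqref{eq:shinbashi2} in the $r$-tail setting, so invoking it is consistent; alternatively one can phrase the whole proof as ``apply Lemma~\ref{lem:average} together with \eqref{eq:shimidashi}'' to obtain both identities and then subtract $m$.
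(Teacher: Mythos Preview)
Your proof is correct and is essentially the paper's own argument: the paper derives the vertex identity from \eqref{eq:shimidashi} (equivalently \eqref{eq:shinbashi}) and the edge identity from Lemma~\ref{lem:average} (equivalently \eqref{eq:shinbashi2}), then subtracts off the constant $\mathrm{ave}(\alpha_1,\dots,\alpha_r)$ exactly as you do. Your closing remark about phrasing it via Lemma~\ref{lem:average} and \eqref{eq:shimidashi} is in fact precisely how the paper writes it.
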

\begin{proof}
By (\ref{eq:shimidashi}), we have 
	\[ \frac{1}{|\tilde{d}(u)|}\sum_{b\in \tilde{A}:t(b)=u}\Psi_\infty(b)=\mathrm{ave}(\alpha_1,\dots,\alpha_r), \]
which implies $\sum_{b\in \tilde{A}\;:\; t(b)=u}\mathrm{j}(b)=0$ for any $u\in \tilde{V}$. 
By Lemma~\ref{lem:average}, 
	\[ \frac{\Psi_\infty(a)+\Psi(\bar{a})}{2}=\frac{1}{|\tilde{d}(u)|}\sum_{b\in \tilde{A}:t(b)=u}\Psi_\infty(b)=\mathrm{ave}(\alpha_1,\dots,\alpha_r) \]
for any $a\in \tilde{A}$. 
Then we can easily check that $\mathrm{j}(a)+\mathrm{j}(\bar{a})=0$. 
\end{proof}
Moreover we can state further property of $\mathrm{j}$ as follows. 
\begin{corollary}\label{2}
Let the setting be the same as in Theorem~\ref{thm:scattering} and let $\mathrm{j}$ be the \red{same as in Corollary~\ref{1}}. 
Then for any cycle $c=(a_1,\dots,a_s)$ with $t(a_1)=o(a_2),\dots,t(a_{s-1})=o(a_s),\; t(a_s)=o(a_1)$ in $G_0$, 
it holds 
	\[ \sum_{j=1}^s \mathrm{j}(a_j)=0. \]
\end{corollary}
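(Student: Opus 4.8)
The plan is to show that the flow $\mathrm{j}$ has zero circulation around every cycle by exploiting the explicit form of the stationary state on the internal graph together with the additivity of $\mathrm{j}$ along arcs. Recall from Corollary~\ref{1} that $\mathrm{j}(a)=\Psi_\infty(a)-\mathrm{ave}(\alpha_1,\dots,\alpha_r)$ satisfies both $\mathrm{j}(a)+\mathrm{j}(\bar a)=0$ and the vertex balance $\sum_{b:t(b)=u}\mathrm{j}(b)=0$. The first identity tells us that $\mathrm{j}$ is a genuine antisymmetric ``current'' on the arcs; the second is Kirchhoff's node law. What remains is Kirchhoff's loop law, and the natural route is to produce a ``potential'' on $V_0$ whose arc-differences are (proportional to) $\mathrm{j}$.

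First I would read off such a potential from the analysis already done. By definition $\kappa_\infty(u)=(2/\sqrt{\tilde d(u)})\langle K^*\delta_u,\psi_\infty\rangle=(2/\tilde d(u))\sum_{a\in A_0:t(a)=u}\psi_\infty(a)$, and Lemma~\ref{lem:average} gives, for every $a\in A_0$,
\[
\frac{\Psi_\infty(a)+\Psi_\infty(\bar a)}{2}=\frac{1}{\tilde d(o(a))}\sum_{b:t(b)=o(a)}\Psi_\infty(b)=\frac{1}{\tilde d(t(a))}\sum_{b:t(b)=t(a)}\Psi_\infty(b),
\]
where the last equality comes from replacing $a$ by $\bar a$. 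Hence $\tfrac12\kappa_\infty(o(a))=\tfrac12\kappa_\infty(t(a))$ for every internal arc, which by connectedness forces $\kappa_\infty$ to be constant on $V_0$; this constant is $2\,\mathrm{ave}(\alpha_1,\dots,\alpha_r)$ by \eqref{eq:shimidashi}. So define the potential $\phi(u):=\psi_\infty(a)$ summed appropriately — more precisely, the cleanest choice is to note that along a cycle $c=(a_1,\dots,a_s)$ with $t(a_i)=o(a_{i+1})$, Lemma~\ref{lem:average} applied to each $a_i$ and the constancy of $\kappa_\infty$ give $\psi_\infty(a_i)+\psi_\infty(\bar a_i)=\kappa_\infty(\cdot)=2\,\mathrm{ave}(\alpha_1,\dots,\alpha_r)$, i.e. $\mathrm{j}(a_i)=\psi_\infty(a_i)-\mathrm{ave}(\alpha_1,\dots,\alpha_r)=-(\psi_\infty(\bar a_i)-\mathrm{ave}(\alpha_1,\dots,\alpha_r))$, already known. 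This alone is not the loop law, so one genuinely needs a potential.

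The key step is therefore: use the vertex balance to construct $\phi:V_0\to\mathbb{C}$ with $\mathrm{j}(a)=\phi(t(a))-\phi(o(a))$ for all $a\in A_0$, or invoke the characterization of the stationary state from Section~\ref{Sec:uniquely_existence}. Indeed, by Theorem~\ref{lem:kerEPON-lambda} the center eigenspace is spanned by $\mathcal C_+\oplus\mathcal C_-\oplus\mathcal T_{per}$, and $\psi_\infty$ (equivalently $\mathrm{j}$ restricted to $A_0$, which lies in $\mathcal L$) is by construction orthogonal to $\mathcal C_+$; but $\mathcal C_+=\mathrm{span}\{w_+(c):c\in\Gamma\}$ is exactly the space of ``cycle flows'' — a flow with zero circulation on every cycle is precisely one orthogonal to every $w_+(c)$, since $\langle w_+(c),\mathrm{j}\rangle=\sum_{j}(\mathrm{j}(a_j)-\mathrm{j}(\bar a_j))=2\sum_j\mathrm{j}(a_j)$ using $\mathrm{j}(\bar a_j)=-\mathrm{j}(a_j)$. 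Thus the corollary is equivalent to $\mathrm{j}|_{A_0}\perp\mathcal C_+$, which we already know because $\psi_\infty\in\mathcal H_s$ and $\mathcal H_s\perp\mathcal H_c\supseteq\mathcal C_+$ by Lemma~\ref{lem:Shiroishi-Zao}, while the constant part $\mathrm{ave}(\alpha_1,\dots,\alpha_r)$ contributes $\sum_j\mathrm{ave}=s\cdot\mathrm{ave}$ to $\sum_j w_+(c)(a_j)$-pairing but cancels against $\sum_j w_+(c)(\bar a_j)=-s\cdot\mathrm{ave}$.

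The main obstacle is making the orthogonality bookkeeping airtight: one must be careful that $\psi_\infty=(1+E_{PON}+\cdots)\rho$ really lies in $\mathcal H_s$ (this is Lemma~\ref{lem:Sendai}), that $\mathcal C_+\subset\mathcal H_c$ (Theorem~\ref{lem:kerEPON-lambda}), and that the pairing $\langle w_+(c),\mathrm{j}\rangle$ genuinely computes $2\sum_j\mathrm{j}(a_j)$ — the only subtlety being cycles that traverse an edge twice, which do not occur since $c$ is a simple cycle and $w_+(c)$ is supported on the distinct arcs $a_1,\dots,a_s$ and their inverses. Once these are in place the computation
\[
0=\langle w_+(c),\psi_\infty\rangle=\sum_{j=1}^s\bigl(\psi_\infty(a_j)-\psi_\infty(\bar a_j)\bigr)=2\sum_{j=1}^s\bigl(\psi_\infty(a_j)-\mathrm{ave}(\alpha_1,\dots,\alpha_r)\bigr)=2\sum_{j=1}^s\mathrm{j}(a_j)
\]
finishes the proof, where the middle equality uses $\psi_\infty(\bar a_j)=2\,\mathrm{ave}(\alpha_1,\dots,\alpha_r)-\psi_\infty(a_j)$ from \eqref{eq:shinbashi2}.
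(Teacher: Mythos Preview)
Your proposal is correct and, once you abandon the potential-function detour in the first two paragraphs, it is essentially identical to the paper's argument: orthogonality of $\psi_\infty$ to $\mathcal{C}_+$ (via $\psi_\infty\in\mathcal{H}_s\perp\mathcal{H}_c\supseteq\mathcal{C}_+$) combined with the skew-symmetry of $\mathrm{j}$ gives $0=\langle w_+(c),\psi_\infty\rangle=2\sum_j\mathrm{j}(a_j)$. The one point the paper makes explicit that you leave implicit is the extension from fundamental cycles to arbitrary cycles --- the paper notes that $w_+(c')$ for any cycle $c'$ is a linear combination of $\{w_+(c)\}_{c\in\Gamma}$, which handles the general (not necessarily simple, not necessarily fundamental) case in the statement.
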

\begin{proof}
By Theorem~\ref{lem:kerEPON-lambda}, $\psi_\infty$ must be orthogonal to $\mathcal{C}_+$; that is, $\langle w_+(c),\psi_\infty\rangle=0$ 
for any $c\in \Gamma$. 
Remark that if $\psi \perp \mathcal{C}_+$, then for any constant $\gamma\in \mathbb{C}$, 
$(\psi+\gamma) \perp \mathcal{C}_+$, where $(\psi+\gamma)(a):=\psi(a)+\gamma$, by the definition of $\mathcal{C}_+$. 
Then for any fundamental cycle $c=(a_1,\dots,a_r)\in \mathcal{C}_+$, 
	\[ 0=\langle w_+(c), \chi\mathrm{j}\rangle=\sum_{j=1}^s (\mathrm{j}(a)-\mathrm{j}(\bar{a}))=2\sum_{j=1}^s \mathrm{j}(a). \]
Here we used the skew symmetry of $\mathrm{j}$ in the final equation. 
Since for any cycle $c'$, $w_+(c')$ is expressed by a linear combination of $\{w_{+}(c)\}_{c\in \Gamma}$, we obtain the desired conclusion. 
\end{proof}
Corollaries~\ref{1} and \ref{2} correspond to Kirchhoff's current law for the constant resistance value $1$ 
and Kirchhoff's voltage law, respectively.
Then we conclude that the stationary state of our model driven by the Grover walk expresses an electric current on the circuit described by $G_0$. 
\\

%%%%%%
\section{Summary}
\red{We considered the Grover walk on a graph with infinite length tails. 
The dynamics on these tails is the free quantum walk. 
We set the initial state so that the internal graph receives the same value inflows from the outside}
\blue{, that is,} 
\red{the tails, at every time step. 
Then we obtained the stationary state of this quantum walk and a kind of the Kirchhoff law of the stationary state} 
\blue{(Theorems~1.1, 1.2 and 3.2).}
\red{ 
Moreover the global scattering so that the internal graph can be regarded as a vertex, 
we obtain the local scattering manner of the Grover walk is reproduced in the long time limit}
\blue{(Theorem~1.3).}
\red{ 
Because of the stationarity of our quantum walk, we can expect to obtain a high relative probability at the marked vertex in a spatial quantum search 
without oscillation of the finding probability, e.g.,~\cite{Portugalbook} with respect to the time steps. 
Then the convergence speed to the stationary state is also one of the interesting future's problem. 
We further discuss on this relation between quantum walks and the electric circuit in the forthcoming paper~\cite{HSforth}. }
%%%%%%

\noindent\\

\noindent {\bf Acknowledgments}
%We thank to our favorite and fantastic izakaya restaurant PONTA in Sendai, Japan 
%where we obtained the fundamental idea of this model. 
We thank to the referees for their useful comments. 
YuH's work was supported in part by Japan Society for the Promotion of Science Grant-in-Aid for Scientific Research 
(C) 25400208, (C)18K03401 and (A) 15H02055.
E.S. acknowledges financial supports from the Grant-in-Aid for 
Scientific Research (C) Japan Society for the Promotion of Science (Grant No.~19K03616) and Research Origin for Dressed Photon.
We are indebted to the ``Izakaya" restaurant, PONTA, in Sendai, Japan, for their hospitality, where we obtained the fundamental idea of this work. 

%\appendix
%\def\thesection{Appendix \Alph{section}}
%\renewcommand{\theequation}{A.\arabic{equation}}
%\setcounter{equation}{0}

%\section{}

\begin{small}
\bibliographystyle{jplain}

\end{small}

\end{document}